\documentclass[11pt]{article}
\usepackage[margin=1in]{geometry}

\usepackage{amsmath, amsthm, amssymb, microtype, tikz, framed}
\usetikzlibrary{shapes}
\usepackage[colorlinks, citecolor=blue, linkcolor=blue, bookmarks=true]{hyperref}
\usepackage[nameinlink, capitalize]{cleveref}

\newtheorem{prop}{Proposition}
\newtheorem{thm}{Theorem}
\newtheorem{lem}{Lemma}

\newtheorem{cor}{Corollary}
\theoremstyle{definition}
\newtheorem{defn}{Definition}

\newcommand{\R}{\mathbb{R}}

\newcommand{\N}{\mathbb{N}}
\newcommand{\Z}{\mathbb{Z}}

\renewcommand{\epsilon}{\varepsilon}

\renewcommand{\tilde}{\widetilde}
\renewcommand{\bar}{\overline}
\renewcommand{\hat}{\widehat}

\DeclareMathOperator{\poly}{poly}
\DeclareMathOperator*{\E}{E}
\DeclareMathOperator*{\median}{median}

\mathchardef\mhyphen="2D

\author{William M. Hoza\thanks{Supported by the NSF GRFP under Grant No. DGE-1610403 and by a Harrington Fellowship from UT Austin.}\\Department of Computer Science,\\University of Texas at Austin\\\texttt{whoza@utexas.edu} \and Adam R. Klivans\\Department of Computer Science,\\University of Texas at Austin\\\texttt{klivans@cs.utexas.edu}}
\title{Preserving Randomness for Adaptive Algorithms}

\begin{document}
	\maketitle

	\begin{abstract}
		Suppose $\mathsf{Est}$ is a randomized estimation algorithm that uses $n$ random bits and outputs values in $\R^d$. We show how to execute $\mathsf{Est}$ on $k$ adaptively chosen inputs using only $n + O(k \log(d + 1))$ random bits instead of the trivial $nk$ (at the cost of mild increases in the error and failure probability). Our algorithm combines a variant of the INW pseudorandom generator \cite{inw94} with a new scheme for shifting and rounding the outputs of $\mathsf{Est}$. We prove that modifying the outputs of $\mathsf{Est}$ is necessary in this setting, and furthermore, our algorithm's randomness complexity is near-optimal in the case $d \leq O(1)$. As an application, we give a randomness-efficient version of the Goldreich-Levin algorithm; our algorithm finds all Fourier coefficients with absolute value at least $\theta$ of a function $F: \{0, 1\}^n \to \{-1, 1\}$ using $O(n \log n) \cdot \poly(1/\theta)$ queries to $F$ and $O(n)$ random bits (independent of $\theta$), improving previous work by Bshouty et al.\ \cite{bjt04}.
	\end{abstract}
	
	\section{Introduction}
	
	Let $\mathsf{Est}$ be a randomized algorithm that estimates some quantity $\mu(C) \in \R^d$ when given input $C$. The canonical example is the case when $C$ is a Boolean circuit, $d = 1$, $\mu(C) \stackrel{\text{def}}{=} \Pr_x[C(x) = 1]$, and $\mathsf{Est}$ estimates $\mu(C)$ by evaluating $C$ at several randomly chosen points. Suppose that $\mathsf{Est}$ uses $n$ random bits, and $\Pr[\|\mathsf{Est}(C) - \mu(C)\|_{\infty} > \epsilon] \leq \delta$.
	
	Furthermore, suppose we want to use $\mathsf{Est}$ as a \emph{subroutine}, executing it on inputs $C_1, C_2, \dots, C_k$, where each $C_i$ is chosen adaptively based on the previous outputs of $\mathsf{Est}$. The na{\"i}ve implementation uses $nk$ random bits and fails with probability at most $k\delta$.
	
	In this work, we show how to generically improve the randomness complexity of any algorithm with this structure, without increasing the number of executions of $\mathsf{Est}$, at the expense of mild increases in the error and failure probability. Our algorithm efficiently finds $Y_1, \dots, Y_k \in \R^d$ with $\|Y_i - \mu(C_i)\|_{\infty} \leq O(\epsilon d)$ for every $i$, our algorithm has failure probability $k\delta + \gamma$ for any $\gamma > 0$, and our algorithm uses a total of $n + O(k \log(d + 1) + (\log k) \log(1/\gamma))$ random bits.
	
	\subsection{The randomness steward model} \label{sec:stewardship-problem}
	
	We model the situation described above by imagining two interacting agents: the \emph{owner} (who plays the role of the outer algorithm) chooses the inputs $C_1, \dots, C_k$, while the \emph{steward} (who replaces the direct execution of $\mathsf{Est}$) provides the output vectors $Y_1, \dots, Y_k \in \R^d$. The reader might find it useful to think of the steward as a trusted ``cloud computing'' service. To justify the names, imagine that the owner gives the steward ``stewardship'' over her random bits. The steward's job is to ``spend'' as little randomness as possible without sacrificing too much accuracy.
	
	To describe the model more rigorously, say that a function $f: \{0, 1\}^n \to \R^d$ is \emph{$(\epsilon, \delta)$-concentrated at $\mu \in \R^d$} if $\Pr_{X \in \{0, 1\}^n}[\|f(X) - \mu\|_{\infty} > \epsilon] \leq \delta$. In each round $i$, the chosen input $C_i$ defines a concentrated function $f_i(X) \stackrel{\text{def}}{=} \mathsf{Est}(C_i, X)$, so it is equivalent to imagine that the owner picks an arbitrary concentrated function. In the following definition, $\epsilon'$ is the error of the steward, and $\delta'$ is its failure probability.
	
	\begin{samepage}
	\begin{defn} \label{defn:steward}
		An \emph{$(\epsilon', \delta')$-steward for $k$ adaptively chosen $(\epsilon, \delta)$-concentrated functions $f_1, \dots, f_k: \{0, 1\}^n \to \R^d$} is a randomized algorithm $\mathsf{S}$ that interacts with an \emph{owner} $\mathsf{O}$ according to the following protocol.
		\begin{enumerate}
			\item For $i = 1$ to $k$:
			\begin{enumerate}
				\item $\mathsf{O}$ chooses $f_i: \{0, 1\}^n \to \R^d$ that is $(\epsilon, \delta)$-concentrated at some point $\mu_i \in \R^d$ and gives it to $\mathsf{S}$.
				\item $\mathsf{S}$ chooses $Y_i \in \R^d$ and gives it to $\mathsf{O}$.
			\end{enumerate}
		\end{enumerate}
		Write $\mathsf{O} \leftrightarrow \mathsf{S}$ (``the interaction of $\mathsf{O}$ with $\mathsf{S}$'') to denote the above interaction. The requirement on $\mathsf{S}$ is that for all $\mathsf{O}$,
		\[
			\Pr[\max_i \|Y_i - \mu_i\|_{\infty} > \epsilon' \text{ in } \mathsf{O} \leftrightarrow \mathsf{S}] \leq \delta'.
		\]
		The probability is taken over the internal randomness of $\mathsf{S}$ and $\mathsf{O}$.
	\end{defn}
	\end{samepage}

	From an information-theoretic perspective, stewards as defined above are not particularly interesting, because $\mathsf{S}$ could exhaustively examine all outputs of $f_i$ to deterministically compute a point $Y_i$ where $f_i$ is concentrated. But we would like to avoid executing $\mathsf{Est}$ more than $k$ times in total, so we will restrict attention to \emph{one-query stewards}:
	\begin{defn}
		A \emph{one-query steward} is a steward that only accesses each $f_i$ by querying it at a single point $X_i \in \{0, 1\}^n$. (The point $X_i$ is not seen by the owner.)
	\end{defn}
	
	\subsection{Our results}
	
	\subsubsection{Main result: A one-query steward with good parameters}
	Our main result is the explicit construction of a one-query steward that simultaneously achieves low error, low failure probability, and low randomness complexity:
	\begin{thm} \label{thm:main}
		For any $n, k, d \in \N$ and any $\epsilon, \delta, \gamma > 0$, there exists a one-query $(O(\epsilon d), k\delta + \gamma)$-steward for $k$ adaptively chosen $(\epsilon, \delta)$-concentrated functions $f_1, \dots, f_k: \{0, 1\}^n \to \R^d$ with randomness complexity
		\[
			n + O(k\log(d + 1) + (\log k)\log(1/\gamma)).
		\]
		The total running time of the steward is $\poly(n, k, d, \log(1/\epsilon), \log(1/\gamma))$.
	\end{thm}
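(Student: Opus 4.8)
The plan is to split the construction into the two independent pieces the abstract advertises: a pseudorandom generator that supplies the query points $X_1,\dots,X_k\in\{0,1\}^n$ from a short seed, and a post-processing map that turns each raw answer $f_i(X_i)\in\R^d$ into the reported vector $Y_i$. The two are designed in tandem, so that the generator only has to fool tests whose ``online state'' is controlled by the post-processing.

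First, the post-processing --- the ``shift and round'' scheme. I would build a family of maps $\mathsf{R}_\rho\colon\R^d\to\R^d$, indexed by a short string $\rho$ (drawn at negligible cost), with two properties: \textbf{(a)} $\|\mathsf{R}_\rho(v)-v\|_\infty\le O(\epsilon d)$ for every $v$ and $\rho$, so that whenever the honest query would have succeeded, i.e. $\|f_i(X_i)-\mu_i\|_\infty\le\epsilon$, the report $Y_i=\mathsf{R}_\rho(f_i(X_i))$ still has $\|Y_i-\mu_i\|_\infty\le\epsilon':=O(\epsilon d)$; and \textbf{(b)} for every axis-aligned box $B$ of side $2\epsilon'$, the set $\{\mathsf{R}_\rho(v):v\in B,\ \rho\}$ has only $\poly(d)$ elements. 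Property (a) is exactly the accuracy claim. Property (b) is what buys the randomness: along any execution in which no query has failed yet, each $Y_j$ lies in $\mathsf{R}_\rho\big(\mu_j+[-\epsilon',\epsilon']^d\big)$, a set of size $\poly(d)$, so the whole surviving transcript $(Y_1,\dots,Y_i)$ ranges over at most $\poly(d)^i\le 2^{O(k\log(d+1))}$ tuples. (For $d=1$ this is trivial --- a shifted grid of spacing $\epsilon$ works --- so all the difficulty is in making (a) and (b) simultaneously true, and efficiently evaluable, for large $d$.)

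Second, the generator and the reduction to it. Let $B_i=\{x:\|f_i(x)-\mu_i\|_\infty>\epsilon\}\subseteq\{0,1\}^n$; this has density $\le\delta$, and --- fixing the owner's coins, which we may do for a worst-case bound --- it is a function of $(Y_1,\dots,Y_{i-1})$, hence ultimately of $(X_1,\dots,X_{i-1})$. By property (a) the steward's failure event is contained in the event that some $X_i$ lands in $B_i$, so it suffices to show $\Pr[\exists i:X_i\in B_i]\le k\delta+\gamma$ when $X_1,\dots,X_k$ come from the generator. With truly independent uniform blocks this is $\le k\delta$ by a union bound (for each fixing of $X_{<i}$ the set $B_i$ is fixed and $X_i$ is independent). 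To keep this bound, up to an extra $+\gamma$, from a short seed, view the interaction as a read-once branching program that reads $X_1,X_2,\dots$ in order and carries as state the ``still alive'' flag together with $(Y_1,\dots,Y_i)$ (which determines the owner's next move, hence $B_{i+1}$); by property (b) the surviving states number at most $2^{O(k\log(d+1))}$, so this is a small-width program, and a variant of INW with suitably small hash families and the error budget $\gamma$ apportioned across its $O(\log k)$ levels fools it with seed length $n+O(k\log(d+1)+(\log k)\log(1/\gamma))$. Running time is then routine: INW is evaluable in time polynomial in its seed length and $k$, and $\mathsf{R}_\rho$ is assumed poly-time.

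The main obstacle is constructing the rounding family of the second paragraph --- reconciling (a) with (b). A cubic lattice fails badly: a box of side $2\epsilon'$ sitting at a lattice vertex meets $2^{\Omega(d)}$ cells, which would force $\Omega(d)$ bits of state per round rather than $O(\log(d+1))$, and the rigidity of the $\ell_\infty$ geometry shows no translation-invariant rounding rule can do better. So the scheme has to be genuinely non-translation-invariant --- this is where the shift randomness $\rho$ earns its keep --- and must be arranged so that only $\poly(d)$ cells are incident to any small region while each cell still has $\ell_\infty$-diameter $O(\epsilon d)$, all while remaining efficiently computable. I expect this to be the technical heart of the argument; once it is in hand, the union bound with true randomness, the branching-program simulation, and the INW seed-length bookkeeping are all standard.
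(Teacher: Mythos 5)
Your architecture is the paper's architecture: a one-query steward of the form $\mathsf{S}(X) = \mathsf{S}_0(\mathsf{Gen}(X))$, where $\mathsf{S}_0$ uses fresh uniform blocks and post-processes each $f_i(X_i)$ to throttle information leakage, and $\mathsf{Gen}$ is an INW-style generator for the block-wise read-once structure that results. You also correctly identify properties (a) and (b) as the two desiderata on the post-processing, and you correctly observe that translation-invariant rounding cannot give (b). But you explicitly leave the post-processing construction open, and that construction \emph{is} the main technical content of the theorem, so the proposal has a genuine gap there. The missing idea is a \emph{deterministic, data-dependent shift}, not a random index $\rho$: partition $\R$ into half-open intervals of length $(d+1)\cdot 2\epsilon$; given $W \in \R^d$, choose $\Delta \in [d+1]$ such that for every coordinate $j$ the segment $[W_j + (2\Delta-1)\epsilon,\, W_j + (2\Delta+1)\epsilon]$ lies inside a single interval (such a $\Delta$ exists by averaging: a uniformly random $\Delta \in [d+1]$ straddles a boundary in coordinate $j$ with probability exactly $1/(d+1)$, so a union bound over the $d$ coordinates leaves a valid $\Delta$); then set $Y_j$ to the midpoint of the interval containing $W_j + 2\Delta\epsilon$. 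Property (a) holds with $\epsilon' = O(\epsilon d)$ since the shift and the rounding each move a coordinate by $O(\epsilon d)$. Property (b) then holds, but for a box of side $2\epsilon$ --- which is the box you actually need, since concentration puts $f_i(X_i) \in \mu_i + [-\epsilon,\epsilon]^d$; as written with side $2\epsilon'$, (b) is false for this scheme and very likely unachievable. The reason (b) holds at side $2\epsilon$ is the compression you gestured at: if $\|W-\mu\|_\infty \le \epsilon$, then $W_j + 2\Delta\epsilon$ and $\mu_j + 2\Delta\epsilon$ lie in the same interval, so $Y = \mathsf{Round}(\mu + 2\Delta\epsilon\,\mathbf{1})$ is determined by $\mu$ and $\Delta$ alone; the alive transcript therefore branches on $\Delta \in [d+1]$ plus one merged failure symbol, for branching factor exactly $d+2$. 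Note that since (b) must hold simultaneously for all $\rho$, your $\rho$ carries no information and can be dropped, which also means you need not budget randomness for it.

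The rest of your plan goes through as you describe, with one caveat worth flagging in the generator step. Phrasing the alive interaction as a width-$(d+2)^k$, length-$k$, $n$-bits-per-step read-once program is right, but a direct application of INW to that program gives seed length $n + O(k\log k\log(d+1) + (\log k)\log(1/\gamma))$, with an extra $\log k$ you must remove. The remedy is to observe that at level $i$ of the recursion (handling $2^i$ steps), conditioning on the current node costs only $2^i\log(d+2)$ bits of min-entropy, so the extractor at that level may have seed length $O(2^i\log(d+2) + \log(1/\gamma) + \log k)$; summing over $i \le \lceil\log k\rceil$ gives the claimed $n + O(k\log(d+1) + (\log k)\log(1/\gamma))$. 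Your one-line union bound over the $k$ rounds (with independent $X_i$) together with the $\gamma$-closeness of the generator then yields $\delta' \le k\delta + \gamma$, and running time is routine. So: same decomposition as the paper, correct scaffolding, but the shift-and-round construction --- the heart of the argument --- is the piece you acknowledged missing, and the $\rho$/box-size framing would have led you away from it.
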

	We also give several variant stewards that achieve tradeoffs in parameters. (See \cref{fig:stewards}.)
	
	\begin{figure}
		\begin{center}
			\def\arraystretch{1.1}%
			\crefname{prop}{Prop.}{Props.}
			\begin{tabular}{l|l|l|l}
				$\epsilon'$ & $\delta'$ & Randomness complexity & Reference \\
				\hline
				$\epsilon$ & $k\delta$ & $nk$ & Na{\"i}ve \\
				$O(\epsilon d)$ & $k\delta + \gamma$ & $n + O(k \log(d + 1) + (\log k) \log(1/\gamma))$ & \cref{thm:main} (main) \\
				$O(\epsilon)$ & $k\delta + \gamma$ & $n + O(kd + (\log k) \log(1/\gamma))$ & \cref{thm:prg-steward}, $d_0 = 1$ \\
				$O(\epsilon d)$ & $k\delta + \gamma$ & $n + k\log(d + 2) + 2\log(1/\gamma) + O(1)$ & \cref{thm:computationally-inefficient}* \\
				$O(\epsilon d)$ & $2^{O(k \log(d + 1))} \cdot \delta$ & $n$ & \cref{thm:union-bound}, $d_0 = d$ \\
				$O(\epsilon)$ & $2^{O(kd)} \cdot \delta$ & $n$ & \cref{thm:union-bound}, $d_0 = 1$ \\
				$O(\epsilon k d / \gamma)$ & $k\delta + \gamma$ & $n + O(k \log k + k \log d + k \log(1/\gamma))$ & \cref{prop:sz} (based on \cite{sz99}) \\
				$O(\epsilon)$ & $k\delta + k \cdot 2^{-n^{\Omega(1)}}$ & $O(n^6 + kd)$ & \cref{prop:iz} (based on \cite{iz89}) \\
				Any & Any $\leq 0.2$ & $n + \Omega(k) - \log(\delta'/\delta)$ & \cref{thm:lower-bound} (lower bound)
			\end{tabular} \\
			\crefname{prop}{Proposition}{Propositions}
		\end{center}
		*Computationally inefficient.
		\caption{Upper and lower bounds for one-query stewards. Recall that $\epsilon, \delta$ are the concentration parameters of $f_1, \dots, f_k$ (i.e. the error and failure probability of the estimation algorithm $\mathsf{Est}$); $\epsilon', \delta'$ are the error and failure probability of the steward $\mathsf{S}$; $n$ is the number of input bits to each $f_i$ (i.e. the number of random coins used by $\mathsf{Est}$); $k$ is the number of rounds of adaptivity; $d$ is the dimension of the output of each $f_i$ (i.e. the dimension of the output of $\mathsf{Est}$). Everywhere it appears, $\gamma$ denotes an arbitrary positive number.} \label{fig:stewards}
	\end{figure}
	
	\subsubsection{Application: Acceptance probabilities of Boolean circuits}
	Our first concrete application of \cref{thm:main} is a time- and randomness-efficient algorithm for estimating the acceptance probabilities of many adaptively chosen Boolean circuits.
	\begin{cor} \label{cor:circuit}
		There exists a randomized algorithm with the following properties. Initially, the algorithm is given parameters $n, k \in \N$ and $\epsilon, \delta > 0$. Then, in round $i$ ($1 \leq i \leq k$), the algorithm is given a Boolean circuit $C_i$ on $n$ input bits and outputs a number $Y_i \in [0, 1]$. Here, $C_i$ may be chosen adversarially based on $Y_1, \dots, Y_{i - 1}$. With probability $1 - \delta$, every $Y_i$ is $\mu(C_i) \pm \epsilon$, where $\mu(C_i) \stackrel{\mathrm{def}}{=} \Pr_x[C_i(x) = 1]$. The total running time of the algorithm is
		\[
			O\left(\frac{\log k + \log(1/\delta)}{\epsilon^2} \cdot \sum_{i = 1}^k \mathrm{size}(C_i)\right) + \poly(n, k, 1/\epsilon, \log(1/\delta)),
		\]
		and the total number of random bits used by the algorithm is $n + O(k + (\log k) \cdot \log(1/\delta))$.
	\end{cor}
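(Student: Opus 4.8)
The plan is to derive \cref{cor:circuit} by instantiating \cref{thm:main} with $d = 1$. Fix a sufficiently small absolute constant $c > 0$ and set $\epsilon_0 := c\,\epsilon$, $\delta_0 := \delta/(2k)$, and $\gamma := \delta/2$. Let $\mathsf{Est}$ be a \emph{randomness-efficient averaging sampler}: given a random string of some length $n_0$ together with a Boolean circuit $C$ on $n$ inputs, $\mathsf{Est}(C,\cdot)$ evaluates $C$ at $q := O(\epsilon_0^{-2}\log(1/\delta_0)) = O\!\big(\epsilon^{-2}(\log k + \log(1/\delta))\big)$ points and, with probability $\ge 1-\delta_0$ over the random string, outputs a number within $\epsilon_0$ of $\mu(C)$; crucially, $\mathsf{Est}$ can be chosen so that $n_0 = n + O(\log(1/\delta_0))$, i.e.\ it spends only $O(\log k + \log(1/\delta))$ coins beyond the $n$ coins that $C$ itself reads. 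Then $f_i(X) := \mathsf{Est}(C_i,X)$ is an $(\epsilon_0,\delta_0)$-concentrated function of $X \in \{0,1\}^{n_0}$, concentrated at $\mu(C_i)$; and since each $C_i$ may depend on $Y_1,\dots,Y_{i-1}$, which depend only on the interaction so far, the functions $f_1,\dots,f_k$ are exactly ``adaptively chosen'' in the sense of \cref{defn:steward}. Applying \cref{thm:main} to them (with its parameters set to $n_0, k, 1, \epsilon_0, \delta_0, \gamma$) produces a one-query $\big(O(\epsilon_0),\, k\delta_0 + \gamma\big)$-steward $\mathsf{S}$ of randomness complexity $n_0 + O\big(k + (\log k)\log(1/\gamma)\big)$ and running time $\poly(n_0,k,\log(1/\epsilon_0),\log(1/\gamma))$.

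The algorithm claimed by \cref{cor:circuit} is then: in round $i$, on receiving $C_i$, hand the (implicitly described) function $f_i = \mathsf{Est}(C_i,\cdot)$ to $\mathsf{S}$, take $\mathsf{S}$'s output, clamp it into $[0,1]$, and return that as $Y_i$. Because $\mathsf{S}$ is a \emph{one-query} steward it evaluates $f_i$ at a single point, i.e.\ it runs $\mathsf{Est}$ once, i.e.\ it evaluates $C_i$ at $q$ points. I would then verify the three bounds in turn. \emph{Accuracy:} choosing $c$ small enough that $\mathsf{S}$'s error $O(\epsilon_0)$ is at most $\epsilon$, and noting that $k\delta_0+\gamma = \delta/2 + \delta/2 = \delta$ and that clamping to $[0,1]$ cannot increase $|Y_i - \mu(C_i)|$ (since $\mu(C_i) \in [0,1]$), the steward guarantee gives $\Pr[\max_i |Y_i-\mu(C_i)| > \epsilon] \le \delta$. \emph{Time:} the circuit evaluations cost $\sum_i O(q\cdot\mathrm{size}(C_i)) = O\!\big(\tfrac{\log k+\log(1/\delta)}{\epsilon^2}\sum_i\mathrm{size}(C_i)\big)$; on top of this, $\mathsf{S}$'s overhead together with the work to generate and enumerate the $q$ sample points at each of the $k$ rounds is $\poly(n,k,1/\epsilon,\log(1/\delta))$. \emph{Randomness:} the total is $n_0 + O\big(k+(\log k)\log(1/\gamma)\big) = n + O\big(k+(\log k)\log(1/\delta)\big)$, using $n_0 = n + O(\log k + \log(1/\delta))$, $\log(1/\gamma)=O(\log(1/\delta))$, and (for $k\ge 2$) that $\log k = O(k)$ and $\log(1/\delta)=O((\log k)\log(1/\delta))$.

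The one step here that is not bookkeeping is the choice of $\mathsf{Est}$: we must use an averaging sampler that is simultaneously query-optimal ($q = O(\epsilon^{-2}\log(1/\delta_0))$ evaluations of $C$) \emph{and} randomness-lean (only $O(\log(1/\delta_0))$ coins beyond the $n$ coins of $C$). The textbook estimator — the empirical average of $m = O(\epsilon^{-2}\log(1/\delta_0))$ independent uniform samples — is query-optimal but reads $mn$ coins, which would make \cref{thm:main}'s randomness bound scale like $mn$ rather than like $n$; so a genuinely randomness-efficient sampler is essential here, and this is the only place where such an object enters the argument. Everything downstream is substituting that sampler's parameters into \cref{thm:main} and simplifying.
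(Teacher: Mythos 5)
Your proposal is correct and follows essentially the same route as the paper's own proof: compose the Goldreich--Wigderson randomness-efficient sampler (giving a one-real-output estimator that is $(\Theta(\epsilon), \delta/(2k))$-concentrated at $\mu(C_i)$ using only $n + O(\log(k/\delta))$ coins) with the main steward of \cref{thm:main} instantiated at $d = 1$ and $\gamma = \delta/2$, then do the same parameter bookkeeping. The only cosmetic differences are that you write the error constant as a small multiplier $c$ where the paper divides by the $O(\cdot)$ constant, and you explicitly clamp the steward's output into $[0,1]$ (a small detail the paper leaves implicit); your observation that the randomness-efficient sampler is the one genuinely non-bookkeeping ingredient matches the paper's reliance on \cref{lem:gw-boolean-sampler}.
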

	\Cref{cor:circuit} should be compared to the case when $C_1, \dots, C_k$ are chosen nonadaptively, for which the randomness complexity can be improved to $n + O(\log k + \log(1/\delta))$ by applying the Goldreich-Wigderson randomness-efficient sampler for Boolean functions \cite{gw97} and reusing randomness. The proof of \cref{cor:circuit} works by combining the GW sampler with our steward.
	
	\subsubsection{Application: Simulating an oracle for $\mathbf{promise\mhyphen BPP}$ or $\mathbf{APP}$} \label{sec:intro-oracle}
	Recall that $\mathbf{promise\mhyphen BPP}$ is the class of promise problems that can be decided in probabilistic polynomial time with bounded failure probability. When an algorithm is given oracle access to a promise problem, it is allowed to make queries that violate the promise, and several models have been considered for dealing with such queries. Following Moser \cite{mos01}, we will stipulate that the oracle may respond in any arbitrary way to such queries. (See, e.g., \cite{bf99} for two other models.) From these definitions, it is easy to show, for example, that $\mathbf{BPP}^{\mathbf{promise\mhyphen BPP}} = \mathbf{BPP}$. Using our steward, we give a time- and randomness-efficient simulation of any algorithm with an oracle for $\mathbf{promise\mhyphen BPP}$. (As we will discuss in \cref{sec:pseudodeterminism}, the corresponding result for $\mathbf{BPP}$-oracle algorithms is trivial.) 
	We also give a similar result for algorithms with an oracle for $\mathbf{APP}$, the class introduced by Kabanets et al.\ \cite{krc00} of functions $\varphi: \{0, 1\}^n \to [0, 1]$ that can be approximated to within $\pm \epsilon$ in probabilistic $\poly(n, 1/\epsilon)$ time with bounded failure probability.
	
	\subsubsection{Application: The Goldreich-Levin algorithm}
	As a final application, we give a randomness-efficient version of the Goldreich-Levin algorithm \cite{gl89} (otherwise known as the Kushilevitz-Mansour algorithm \cite{km93}) for finding noticeably large Fourier coefficients. Given oracle access to $F: \{0, 1\}^n \to \{-1, 1\}$, for any $\theta > 0$, we show how to efficiently find a list containing all $U$ with $|\hat{F}(U)| \geq \theta$. (Alternatively, thinking of $F$ as an exponentially long bitstring $F \in \{-1, 1\}^{2^n}$, our algorithm finds all Hadamard codewords that agree with $F$ in a $\left(\frac{1}{2} + \theta\right)$-fraction of positions.) Our algorithm makes $O(n \log(n/\delta)) \cdot \poly(1/\theta)$ queries to $F$, uses $O(n + (\log n) \log(1/\delta))$ random bits, and has failure probability $\delta$. The number of random bits \emph{does not depend on $\theta$}. To achieve such a low randomness complexity, we first improve the randomness efficiency of each estimate in the Goldreich-Levin algorithm using the GW sampler. Then, we reduce the number of rounds of adaptivity by a factor of $\log(1/\theta)$ by making many estimates within each round. Interestingly, we apply our steward with $d = \poly(1/\theta)$, unlike our other applications where we choose $d = 1$. (Recall that $d$ is the number of real values estimated in each round.)
	
	
	\subsubsection{Randomness complexity lower bound}
	We prove a randomness complexity lower bound of $n + \Omega(k) - \log(\delta'/\delta)$ for any one-query steward. In the case $d \leq O(1)$, this comes close to matching our upper bounds. For example, to achieve $\delta' \leq O(k \delta)$, this lower bound says that $n + \Omega(k)$ random bits are needed; our main steward (\cref{thm:main}) achieves $\epsilon' \leq O(\epsilon), \delta' \leq O(k \delta)$ using $n + O(k + (\log k) \log(1/\delta))$ random bits. At the other extreme, if we want a one-query steward that uses only $n$ random bits, this lower bound says that the failure probability will be $\delta' \geq \exp(\Omega(k)) \cdot \delta$; one of our variant stewards (\cref{thm:union-bound}) uses $n$ random bits to achieve $\epsilon' \leq O(\epsilon)$ and $\delta' \leq \exp(O(k)) \cdot \delta$.
	
	\subsection{Techniques} \label{sec:techniques}
	
	\subsubsection{Block decision trees}
	
	A key component in the proof of our main result (\cref{thm:main}) is a pseudorandom generator (PRG) for a new model that we call the \emph{block decision tree} model. Informally, a block decision tree is a decision tree that reads its input from left to right, $n$ bits at a time:
		
	\begin{defn}
		For a finite alphabet $\Sigma$, a \emph{$(k, n, \Sigma)$ block decision tree} is a rooted tree $T = (V, E)$ of height $k$ in which every node $v$ at depth $< k$ has exactly $|\Sigma|$ children (labeled with the symbols in $\Sigma$) and has an associated function $v: \{0, 1\}^n \to \Sigma$. We identify $T$ with a function $T: (\{0, 1\}^n)^{\leq k} \to V$ defined recursively: $T(\text{the empty string}) = \text{the root node}$, and if $T(X_1, \dots, X_{i - 1}) = v$, then $T(X_1, \dots, X_i)$ is the child of $v$ labeled $v(X_i)$.
	\end{defn}
		
	The standard nonconstructive argument (\cref{apx:prg}) shows that there exists a $\gamma$-PRG for block decision trees with seed length $n + k \log |\Sigma| + 2 \log(1/\gamma) + O(1)$. (See \cref{sec:prg-formalities} for the definition of a PRG in this setting.) In \cref{sec:prg}, we explicitly construct a $\gamma$-PRG for block decision trees with seed length $n + O(k \log |\Sigma| + (\log k) \log(1/\gamma))$. The generator is constructed by modifying the INW generator for space-bounded computation \cite{inw94}.
	
	\subsubsection{Shifting and rounding} \label{sec:techniques-shifting-and-rounding}

	\paragraph{PRGs alone are not enough} Interestingly, to design good stewards, the standard technique of replacing truly random bits with pseudorandom bits is not sufficient. That is, define a \emph{pseudorandom generation steward} to be a steward that simply queries each $f_i$ at a pseudorandomly chosen point $X_i$ and returns $Y_i = f_i(X_i)$. In \cref{sec:prg-steward}, we show that any pseudorandom generation steward must use at least $\Omega(nk)$ random bits, assuming $\delta' \leq 1/2$ and $\delta \geq 2^{-n/2 + 1}$.
	
	\paragraph{Deliberately introducing error} To circumvent this $\Omega(nk)$ lower bound, our steward is forced to \emph{modify the outputs} of $f_1, \dots, f_k$. In each round, our main steward queries $f_i$ at a pseudorandom point $X_i$, chooses a nearby value $Y_i \approx f_i(X_i)$, and returns this modified value $Y_i$ to the owner $\mathsf{O}$. The motivation for this idea is that by deliberately introducing a small amount of error, we reduce the amount of information about $X_i$ that is leaked by $Y_i$. This way, we can recycle some of the randomness of $X_i$ for future rounds.
	
	To be more specific, for a steward $\mathsf{S}$, let $\mathsf{S}(X)$ denote $\mathsf{S}$ using randomness $X$. Our main steward is of the form $\mathsf{S}(X) \stackrel{\text{def}}{=} \mathsf{S}_0(\mathsf{Gen}(X))$. Here, $\mathsf{Gen}$ is our PRG for block decision trees, and $\mathsf{S}_0$ is a randomness-inefficient one-query steward. In each round, $\mathsf{S}_0$ queries $f_i$ at a fresh random point $X_i \in \{0, 1\}^n$, but $\mathsf{S}_0$ computes the return value $Y_i$ by carefully \emph{shifting and rounding} each coordinate of $f_i(X_i)$. This deterministic shifting and rounding procedure and its analysis are our main technical contributions. 
	
	\paragraph{Relation to block decision trees} To explain why this works, observe that when any steward and owner interact, it is natural to model the owner's behavior by a decision tree that branches at each node based on the value $Y_i$ provided by the steward. The \emph{branching factor} of this decision tree is a simple measure of the amount of information leaked. If $\mathsf{S}_0$ simply returned $f_i(X_i)$ without any shifting or rounding, the branching factor for $\mathsf{O} \leftrightarrow \mathsf{S}_0$ would be $2^n$. Three ideas dramatically reduce this branching factor.
	\begin{itemize}
		\item The first idea is to round. Suppose that $\mathsf{S}_0$ rounded each coordinate of $f_i(X_i)$ to the nearest multiple of $2\epsilon$ (with no shifting). Then the branching factor would be reduced to $2^d + \delta 2^n$. The $\delta 2^n$ term corresponds to the outputs of $f_i$ that are far from its concentration point $\mu_i$. The $2^d$ term corresponds to outputs $f_i(X_i)$ that are close to $\mu_i$; if each coordinate of $\mu_i$ is approximately equidistant from two multiples of $2\epsilon$, then the corresponding coordinate of $f_i(X_i)$ could be rounded to either of those two values.
		\item The second idea is to \emph{shift} each coordinate of $f_i(X_i)$ before rounding. In particular, $\mathsf{S}_0$ finds a single value $\Delta_i$ such that after adding $\Delta_i \cdot 2\epsilon$ to each coordinate of $f_i(X_i)$, every coordinate is $\epsilon$-far from every rounding boundary. Then, $\mathsf{S}_0$ rounds the shifted coordinates to obtain $Y_i$. This procedure reduces the branching factor down to $d + 1 + \delta 2^n$. To understand why, think of $\Delta_i$ as a \emph{compressed} representation of $Y_i$. Assuming $f_i(X_i)$ is close to $\mu_i$, given unlimited computation time, $\mathsf{O}$ could recover $Y_i$ from $\Delta_i$ by computing the \emph{true} vector $\mu_i$, shifting \emph{it} according to $\Delta_i$, and rounding. Hence, each node of the tree just needs to have one child for each possible $\Delta_i$ value (along with the $\delta 2^n$ children for the case that $f_i(X_i)$ is far from $\mu_i$).
		\item The third idea is to \emph{relax} the requirement that the tree perfectly computes $\mathsf{O} \leftrightarrow \mathsf{S}_0$. In particular, for every owner $\mathsf{O}$, we construct a block decision tree $T_{\mathsf{O}}$ that merely \emph{certifies correctness} of $\mathsf{O} \leftrightarrow \mathsf{S}_0$. That is, for any $X_1, \dots, X_k$, if the node $T_{\mathsf{O}}(X_1, \dots, X_k)$ indicates ``success'', then the error $\max_i \|Y_i - \mu_i\|_{\infty}$ in $\mathsf{O} \leftrightarrow \mathsf{S}_0(X_1, \dots, X_k)$ is small. On the other hand, if $T_{\mathsf{O}}(X_1, \dots, X_k)$ does not indicate success, then ``all bets are off'': the error $\max_i \|Y_i - \mu_i\|_{\infty}$ in $\mathsf{O} \leftrightarrow \mathsf{S}_0(X_1, \dots, X_k)$ may be small or large. Our certification tree has the additional property that
		\[
		\Pr_{X_1, \dots, X_k}[T_{\mathsf{O}}(X_1, \dots, X_k) \text{ indicates success}] \geq 1 - k\delta.
		\]
		This relaxation allows us to reduce the branching factor down to just $d + 2$, because for each node, the $\delta 2^n$ children corresponding to outputs of $f_i$ that are far from $\mu_i$ can all be merged into a single ``failure'' node.
	\end{itemize}
	
	Putting everything together, to save random bits, we don't need to try to fool $\mathsf{O} \leftrightarrow \mathsf{S}_0$. Instead, it suffices for $\mathsf{Gen}$ to fool the certification tree $T_{\mathsf{O}}$. The small branching factor of $T_{\mathsf{O}}$ allows $\mathsf{Gen}$ to have a correspondingly small seed length.
	
	\subsubsection{Lower bound}
	
	
	Our lower bound for general one-query stewards follows a similar intuition as our upper bounds: we show that in each round, by carefully choosing $f_i$, the owner can learn $\Omega(1)$ bits of information about the steward's randomness. To conclude, we argue that if the steward has fewer than $n$ bits of randomness remaining from the owner's perspective, then the owner can choose a function that causes the steward's failure probability to be large.
	
	\subsection{Why can't we just reuse the random bits?} \label{sec:pseudodeterminism}
	
	Notwithstanding our lower bounds, the reader might be tempted to think that randomness stewards are trivial: why not just pick $X \in \{0, 1\}^n$ uniformly at random \emph{once} and reuse it in every round? For the purpose of discussion, let us generalize, and suppose we are trying to execute an $n$-coin algorithm $\mathsf{A}$ (not necessarily an estimation algorithm) on $k$ inputs $C_1, \dots, C_k$. If $C_1, \dots, C_k$ are chosen \emph{non-adaptively} (i.e. all in advance), then we really can use the same $X$ for each execution. By the union bound, the probability that $\mathsf{A}(C_i, X)$ fails for any $i$ is at most $k \delta$.
	
	That argument breaks down in the adaptive case, because $C_2$ is chosen based on $\mathsf{A}(C_1, X)$, and hence $C_2$ may be \emph{stochastically dependent} on $X$, so $\mathsf{A}(C_2, X)$ is not guaranteed to have a low failure probability. For example, if $X$ is encoded in the output $\mathsf{A}(C_1, X)$, then an adversarially chosen $C_2$ could \emph{guarantee} that $\mathsf{A}(C_2, X)$ fails. 
	
	Even if $C_1, \dots, C_k$ are chosen adaptively, randomness can be safely reused in an important special case: Suppose $\mathsf{A}$ is a $\mathbf{BPP}$ algorithm. Then we can let $\hat{C}_1, \hat{C}_2, \dots, \hat{C}_k$ be the inputs that would be chosen if $\mathsf{A}$ never failed. Then each $\hat{C}_i$ really is independent of $X$, so by the union bound, with probability $1 - k\delta$, $\mathsf{A}(\hat{C}_i, X)$ does not fail for any $i$. But if $\mathsf{A}(\hat{C}_i, X)$ does not fail for any $i$, then by induction, $C_i = \hat{C}_i$ for every $i$. So the overall failure probability is once again at most $k \delta$.
	
	More generally, randomness can be safely reused if $\mathsf{A}$ is \emph{pseudodeterministic}, i.e. for each input, there is a unique correct output that $\mathsf{A}$ gives with probability $1 - \delta$.\footnote{These two conditions (inputs are chosen nonadaptively, $\mathsf{A}$ is pseudodeterministic) are both special cases of the following condition under which the randomness $X$ may be safely reused: for every $1 \leq i \leq k$, $C_i$ is a pseudodeterministic function of $(C_0, C_1, \dots, C_{i - 1})$, where $C_0$ is a random variable that is independent of $X$.} (Pseudodeterministic algorithms were introduced by Gat and Goldwasser \cite{gg11}.) A $\mathbf{BPP}$ algorithm is trivially pseudodeterministic.
	
	Observe, however, that a $\mathbf{promise\mhyphen BPP}$ algorithm is only guaranteed to be pseudodeterministic on inputs that satisfy the promise. This is why the result we mentioned in \cref{sec:intro-oracle} is in terms of an oracle for $\mathbf{promise\mhyphen BPP}$. Similarly, estimation algorithms (including $\mathbf{APP}$ algorithms) are typically not pseudodeterministic.
	
	In the standard Goldreich-Levin algorithm, randomness is used to estimate $\sum_{U \in \mathcal{U}} \hat{F}(U)^2$ for certain collections of subsets $\mathcal{U}$. The algorithm's behavior depends on how the estimate compares to $\theta^2/2$. This process is not pseudodeterministic, because if the true value $\sum_{U \in \mathcal{U}} \hat{F}(U)^2$ is very close to $\theta^2/2$, the estimate falls on each side of $\theta^2/2$ with noticeable probability.
	
	\subsection{Related work}
	
	\subsubsection{Adaptive data analysis}
	The notion of a randomness steward is inspired by the closely related \emph{adaptive data analysis} problem, introduced by Dwork et al.\ \cite{dfh+15}. In the simplest version of this problem, there is an unknown distribution $\mathcal{D}$ over $\{0, 1\}^n$ and a \emph{data analyst} who wishes to estimate the mean values (with respect to $\mathcal{D}$) of $k$ adaptively chosen functions $f_1, \dots, f_k: \{0, 1\}^n \to [0, 1]$ using as few samples from $\mathcal{D}$ as possible. In this setting, these samples are held by a \emph{mechanism} and are not directly accessible by the data analyst. In round $i$, the data analyst gives $f_i$ to the mechanism, and the mechanism responds with an estimate of $\E_{x \sim \mathcal{D}}[f_i(x)]$. The mechanism constructs the estimate so as to leak as little information as possible about the sample, so that the same sample points can be safely reused for future estimates.
	
	The data analyst and mechanism in the adaptive data analysis setting are analogous to the owner $\mathsf{O}$ and steward $\mathsf{S}$ in our setting, respectively. In each case, the idea is that the mechanism or steward can intentionally introduce a small amount of error into each estimate to hide information and thereby facilitate future estimates. Note, however, that in the adaptive data analysis problem, there is just one unknown distribution $\mathcal{D}$ and we are concerned with sample complexity, whereas in the randomness stewardship problem, we can think of each concentrated function $f_i$ as defining a new distribution over $\R^d$ and we are concerned with randomness complexity.
	
	\subsubsection{The Saks-Zhou algorithm}
	Another highly relevant construction is the algorithm of Saks and Zhou \cite{sz99} for simulating randomized log-space algorithms in deterministic space $O(\log^{3/2} n)$. The key component in this algorithm can be reinterpreted as a one-query randomness steward. Saks and Zhou also constructed a randomized algorithm $\mathsf{Est}$ that approximates a large power of a given substochastic matrix using Nisan's pseudorandom generator \cite{nis92}. (In fact, Nisan's generator can be replaced with any pseudorandom generator for small space \cite{arm98, hu17}.) The Saks-Zhou algorithm works by applying $\mathsf{Est}$ repeatedly to approximate a much larger power of a given substochastic matrix. The ``Saks-Zhou steward'' reduces the randomness complexity of this process.
	
	The Saks-Zhou steward works by \emph{randomly perturbing and rounding} the output of each $f_i$, and then reusing the same random query point $X$ in each round. The perturbation and rounding are somewhat similar to our construction, but note that we shift the outputs of each $f_i$ deterministically, whereas the Saks-Zhou steward uses random perturbations. The analysis of the Saks-Zhou steward is substantially different than the analysis of our steward. Instead, the analysis of the Saks-Zhou steward is similar to the proof that randomness can be safely reused for a pseudodeterministic subroutine; one can show that random perturbation and rounding effectively breaks the dependence between $X$ and $Y_i$. (See \cref{apx:sz} for the description and analysis of the Saks-Zhou steward.)
	
	Our steward achieves better parameters than the Saks-Zhou steward (see \cref{fig:stewards}). In particular, to achieve failure probability $k\delta + \gamma$, the error $\epsilon'$ of the Saks-Zhou steward is $O(\epsilon k d / \gamma)$ -- the error grows linearly with $k$, the number of rounds of adaptivity, as well as with $1/\gamma$. This implies, for example, that if we tried to use the Saks-Zhou steward to estimate the acceptance probabilities of $k$ adaptively chosen Boolean circuits to within $\pm \epsilon$ with failure probability $\delta$ in a randomness-efficient way, we would need to evaluate each circuit on $\Theta(k^2 \delta^{-2} \epsilon^{-2} \log(k/\delta))$ inputs. In contrast, because of our steward's low error, the algorithm of \cref{cor:circuit} evaluates each circuit on just $O(\epsilon^{-2} \log(k/\delta))$ inputs -- an exponential improvement in both $k$ and $1/\delta$. Furthermore, our steward has better randomness complexity than the Saks-Zhou steward.
	
	\subsubsection{Pseudorandom generators for adaptive algorithms} \label{sec:iz}
	Impagliazzo and Zuckerman \cite{iz89, imp92} were the first to consider the problem of saving random bits when executing a randomized algorithm $\mathsf{A}$ on many adaptively chosen inputs. Instead of assuming that $\mathsf{A}$ is an estimation algorithm, Impagliazzo and Zuckerman's result assumes a known bound on the Shannon entropy of the output distribution of $\mathsf{A}$ (e.g., the number of bits output by $\mathsf{A}$). They constructed a pseudorandom generator for this setting; for $k \gg n^6$, the seed length is approximately the sum of the entropy bounds for all the executions of $\mathsf{A}$.
	
	In contrast, we make no assumptions about the entropy of $\mathsf{Est}(C)$. Since $\mathsf{Est}(C)$ is a vector of arbitrary-precision real numbers, the entropy could be as large as $n$, the number of random bits used by $\mathsf{Est}$. And indeed, our lower bound in \cref{sec:prg-steward} implies that the approach of Impagliazzo and Zuckerman fails in our setting.
	
	One might protest that the entropy of $\mathsf{Est}(C)$ can be reduced by simple rounding. In \cref{apx:iz}, we construct and analyze a steward that straightforwardly rounds each output and then uses the Impagliazzo-Zuckerman generator in a black-box way. Our main steward achieves much better randomness complexity and failure probability than this ``Impagliazzo-Zuckerman steward'' (see \cref{fig:stewards}). Our main steward admittedly has larger error than the Impagliazzo-Zuckerman steward ($O(\epsilon d)$ vs. $O(\epsilon)$), but one of our variant stewards beats or matches the Impagliazzo-Zuckerman steward in every parameter. (See \cref{apx:iz-comparison} for details.) The improvements come from our more powerful PRG and the fact that we shift before rounding.
	
	\subsubsection{Decision trees and branching programs}
	In the most common decision tree model, the branching factor $|\Sigma|$ is just $2$, and each node reads an arbitrary bit of the input. In the more general \emph{parity decision tree} model, each node computes the parity of some subset of the input bits. Kushilevitz and Mansour showed \cite{km93} that the Fourier $\ell_1$ norm of any Boolean function computed by a parity decision tree is at most $2^k$, the number of leaves in the tree. It follows immediately that a $\gamma$-biased generator is a $(2^k \gamma)$-PRG for parity decision trees. Using, e.g., the small-bias generator of Naor and Naor \cite{nn93}, this gives an efficient PRG for parity decision trees with asymptotically optimal seed length.
	
	Decision trees in which each node computes a more complicated function have also been studied previously. Bellare \cite{bel92} introduced the \emph{universal decision tree} model, in which each node computes an arbitrary Boolean function of the input bits. He gave a bound on the $\ell_1$ norm of any Boolean function computed by a universal decision tree in terms of the $\ell_1$ norms of the functions at each node. Unfortunately, for block decision trees, his bound is so large that it does not immediately imply any nontrivial pseudorandom generators for block decision trees.
	
	A block decision tree can be thought of as a kind of space-bounded computation. Indeed, a block decision tree is a specific kind of \emph{ordered branching program} of width $|\Sigma|^k$ and length $k$ that reads $n$ bits at a time. Hence, we could directly apply a pseudorandom generator for ordered branching programs, such as the INW generator \cite{inw94}. For these parameters, the INW generator has seed length of $n + O(k \log k \log |\Sigma| + \log k \log(1/\gamma))$.
	This seed length can be slightly improved by instead using Armoni's generator \cite{arm98} (a generalization of the Nisan-Zuckerman generator \cite{nz96}), but even that slightly improved seed length is larger than the seed length of the generator we construct.
	
	\subsubsection{Finding noticeably large Fourier coefficients}
	Our randomness-efficient version of the Goldreich-Levin algorithm should be compared to the results of Bshouty et al.\ \cite{bjt04}, who gave several algorithms for finding noticeably large Fourier coefficients, all closely related to one another and based on an algorithm of Levin \cite{lev93}.
	\begin{itemize}
		\item Bshouty et al.\ gave one algorithm \cite[Figure 4]{bjt04} that makes $O(\frac{n}{\theta^2} \log(\frac{n}{\delta \theta}))$ queries and uses $O(n \log(\frac{n}{\theta}) \log(\frac{1}{\delta \theta}))$ random bits. Our algorithm has better randomness complexity, but worse query complexity.
		\item Bshouty et al.\ gave another algorithm \cite[Figure 5]{bjt04} that makes only $O(n/\theta^2)$ queries and uses just $O(\log(n/\theta) \cdot \log(1/\theta))$ random bits, but it merely outputs a list such that with probability $1/2$, some $U$ in the list satisfies $|\hat{F}(U)| \geq \theta$, assuming such a $U$ exists.
	\end{itemize}
	We also remark that there is a \emph{deterministic} version of the Goldreich-Levin algorithm for functions with bounded $\ell_1$ norm; this follows easily from the work of Kushilevitz and Mansour \cite{km93} (see also \cite[Section 6.4]{o'd14}). In contrast, our algorithm works for all functions $F: \{0, 1\}^n \to \{-1, 1\}$.
	
	\subsection{Outline of this paper}
	
	In \cref{sec:base}, we describe the shifting and rounding steward $\mathsf{S}_0$ and prove that it admits certification trees with a small branching factor. Then, in \cref{sec:prg}, we construct and analyze our pseudorandom generator for block decision trees, mimicking the construction and analysis of the INW generator. In \cref{sec:main}, we put these pieces together to prove our main result (\cref{thm:main}). In \cref{sec:variants}, we show how to construct our variant stewards. In \cref{sec:applications}, we explain our applications of our main steward. Finally, in \cref{sec:lower-bound}, we prove our randomness complexity lower bounds for stewards.
	
	\section{The shifting and rounding steward $\mathsf{S}_0$} \label{sec:base}
	
	As a building block for our main steward constructions, we first construct our randomness-inefficient one-query steward $\mathsf{S}_0$. Recall that any one-query steward makes two choices in each round: the input $X_i$ to $f_i$ and the estimate $Y_i \in \R^d$. The steward $\mathsf{S}_0$ focuses on the second choice: each $X_i$ is chosen uniformly at random, but $\mathsf{S}_0$ carefully shifts and rounds the output $f_i(X_i)$. (See \cref{fig:base-outline}.)
	
	\begin{figure}
		\centering
		\begin{framed}
			\begin{enumerate}
				\item For $i = 1$ to $k$:
				\begin{enumerate}
					\item $\mathsf{O}$ chooses $f_i: \{0, 1\}^n \to \R^d$ and gives it to $\mathsf{S}_0$.
					\item $\mathsf{S}_0$ picks \emph{fresh randomness} $X_i \in \{0, 1\}^n$ and queries to obtain $W_i \stackrel{\text{def}}{=} f_i(X_i)$.
					\item $\mathsf{S}_0$ computes $Y_i$ by shifting and rounding $W_i$ according to the algorithm in \cref{sec:base-construction}.
					\item $\mathsf{S}_0$ gives $Y_i$ to $\mathsf{O}$.
				\end{enumerate}
			\end{enumerate}
			\vspace{-0.4cm}
		\end{framed}
		\caption{Outline of $\mathsf{O} \leftrightarrow \mathsf{S}_0$.} \label{fig:base-outline}
	\end{figure}
	
	\subsection{The shifting and rounding algorithm} \label{sec:base-construction}
	We now describe the algorithm by which $\mathsf{S}_0$ computes $Y_i \in \R^d$ from $W_i \stackrel{\text{def}}{=} f_i(X_i)$. Fix $n, k, d \in \N$ and $\epsilon, \delta > 0$. Let $[d]$ denote the set $\{1, 2, \dots, d\}$. Partition $\R$ into half-open intervals of length $(d + 1) \cdot 2\epsilon$. Let $\mathcal{I}$ denote the set of these intervals. For $w \in \R$, let $\mathsf{Round}(w)$ denote the midpoint of the interval in $\mathcal{I}$ containing $w$. Given $W_i \in \R^d$:
	\begin{enumerate}
		\item Find $\Delta_i \in [d + 1]$ such that for every $j \in [d]$, there is some $I \in \mathcal{I}$ such that
		\[
			[W_{ij} + (2\Delta_i - 1)\epsilon, W_{ij} + (2\Delta_i + 1)\epsilon] \subseteq I.
		\]
		(We will show that such a $\Delta_i$ exists.)
		\item For every $j \in [d]$, set $Y_{ij} = \mathsf{Round}(W_{ij} + 2\Delta_i \epsilon)$.
	\end{enumerate}
	We must show that this algorithm is well-defined:
	\begin{lem} \label{lem:delta-exists}
		For any $W \in \R^d$, there exists $\Delta \in [d + 1]$ such that for every $j \in [d]$, there is a single interval in $\mathcal{I}$ that entirely contains $[W_j + (2\Delta - 1)\epsilon, W_j + (2\Delta + 1)\epsilon]$.
	\end{lem}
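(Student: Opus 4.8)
The plan is a pigeonhole argument over the $d+1$ candidate values of $\Delta$: I will show that each of the $d$ coordinates rules out at most one value of $\Delta \in [d+1]$, so at least one value survives. Throughout I may assume $d \geq 1$, since for $d = 0$ the condition ``for every $j \in [d]$'' is vacuous and $\Delta = 1$ works.

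First I would set up the geometry on a circle. Every interval in $\mathcal{I}$ has length $L \stackrel{\text{def}}{=} (d+1)\cdot 2\epsilon$, so the endpoints of the intervals in $\mathcal{I}$ form an arithmetic progression $\{a + mL : m \in \Z\}$ for some fixed offset $a \in \R$. For a coordinate $j \in [d]$ and a shift $\Delta$, the relevant closed interval $[W_j + (2\Delta-1)\epsilon,\, W_j + (2\Delta+1)\epsilon]$ has width $2\epsilon$ and midpoint $c_{j,\Delta} \stackrel{\text{def}}{=} W_j + 2\Delta\epsilon$. I would record the elementary fact that a closed interval of width $2\epsilon$ centered at $c$ is contained in a single member of $\mathcal{I}$ if and only if $(c - a) \bmod L \in [\epsilon,\, L - \epsilon)$. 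Equivalently, $\Delta$ is \emph{bad for $j$} precisely when $(c_{j,\Delta} - a) \bmod L$ lands in the complementary arc $B$, the set of points within distance $\epsilon$ of an endpoint of $\mathcal{I}$ (taken half-open so that $B$ has length exactly $2\epsilon$ and, since $d \geq 1$, is a proper sub-arc of the circle $\R / L\Z$).

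The key step is a counting observation. Fix $j$ and let $\Delta$ range over $[d+1] = \{1, \dots, d+1\}$. Then the points $(c_{j,\Delta} - a) \bmod L = (W_j - a + 2\Delta\epsilon) \bmod L$ are $d+1$ equally spaced points on $\R/L\Z$ with spacing $2\epsilon$; because $(d+1)\cdot 2\epsilon = L$, they are in fact a full coset of the subgroup $2\epsilon\Z / L\Z$. Now a half-open arc whose length equals the spacing $2\epsilon$ contains exactly one point of such a coset: the translates $(p - 2\epsilon, p]$, as $p$ runs over the coset, partition the circle, and a coset point $p$ lies in the arc $[\alpha, \alpha + 2\epsilon)$ exactly when $\alpha \in (p - 2\epsilon, p]$. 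Applying this to the arc $B$ of length $2\epsilon$ shows that at most one value of $\Delta \in [d+1]$ is bad for coordinate $j$.

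Finally, taking a union bound over $j \in [d]$, at most $d$ of the $d+1$ values in $[d+1]$ are bad for some coordinate, so there exists $\Delta \in [d+1]$ that is good for every $j$ simultaneously — exactly the conclusion of the lemma. The only place requiring care is the half-open bookkeeping in the first paragraph (which endpoints of the half-open intervals of $\mathcal{I}$ count as ``boundaries,'' and the precise shape of the bad arc $B$), together with the clean formulation of the ``arc of length equal to the spacing contains one lattice point'' fact; both are routine interval arithmetic and should be the only technical friction.
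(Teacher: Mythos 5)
Your argument is correct and is essentially the same as the paper's, just phrased deterministically: the paper picks $\Delta \in [d+1]$ uniformly at random, notes that each coordinate $j$ causes failure with probability exactly $1/(d+1)$, and applies a union bound, whereas you unpack that same ``probability $1/(d+1)$'' into the explicit statement that exactly one of the $d+1$ shift values is bad for each $j$, then apply pigeonhole. Your circle/coset picture is a fine way to verify the per-coordinate count, though the paper treats it as immediate from the choice of interval length.
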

	
	\begin{proof}
		Consider picking $\Delta \in [d + 1]$ uniformly at random. For each $j$, the probability that two distinct intervals in $\mathcal{I}$ intersect $[W_j + (2\Delta - 1)\epsilon, W_j + (2\Delta + 1)\epsilon]$ is precisely $1/(d + 1)$ by our choice of the length of the intervals. The union bound over $d$ different $j$ values completes the proof.
	\end{proof}
	
	\subsection{Analysis: Certification trees} \label{sec:base-analysis}
	
	As outlined in \cref{sec:techniques-shifting-and-rounding}, the key lemma says that for any owner $\mathsf{O}$, there exists a block decision tree $T_{\mathsf{O}}$ with a small branching factor that certifies correctness of $\mathsf{O} \leftrightarrow \mathsf{S}_0$:
	\begin{lem} \label{lem:two-mediators}
		Assume $\delta < 1/2$. Let $\Sigma = [d + 1] \cup \{\bot\}$. For any deterministic owner $\mathsf{O}$, there exists a $(k, n, \Sigma)$ block decision tree $T_{\mathsf{O}}$ with the following properties.
		\begin{enumerate}
			\item For any internal node $v$, $\Pr_{X \in \{0, 1\}^n}[v(X) = \bot] \leq \delta$. \label{cond:low-failure-prob}
			\item Fix $X_1, \dots, X_k \in \{0, 1\}^n$, and suppose that the path from the root to $T_{\mathsf{O}}(X_1, \dots, X_k)$ does not include any $\bot$ nodes. Then $\max_i \|Y_i - \mu_i\|_{\infty} \leq O(\epsilon d)$ in $\mathsf{O} \leftrightarrow \mathsf{S}_0(X_1, \dots, X_k)$. \label{cond:model}
		\end{enumerate}
	\end{lem}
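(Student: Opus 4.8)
The plan is to build the certification tree $T_{\mathsf{O}}$ recursively, one layer at a time, exploiting the fact that $\mathsf{O}$ is deterministic so that the tree can internally \emph{simulate} the owner. The guiding principle, as sketched in \cref{sec:techniques-shifting-and-rounding}, is that $\Delta_i$ together with the concentration point $\mu_i$ already determines $Y_i$ whenever $W_i = f_i(X_i)$ is within $\epsilon$ of $\mu_i$. Indeed, the defining property of $\Delta_i$ from the shifting-and-rounding algorithm is that $[W_{ij} + (2\Delta_i - 1)\epsilon,\ W_{ij} + (2\Delta_i + 1)\epsilon] \subseteq I$ for a single $I \in \mathcal{I}$, so if $|W_{ij} - \mu_{ij}| \le \epsilon$ then $\mu_{ij} + 2\Delta_i\epsilon$ lies in the same interval $I$ as $W_{ij} + 2\Delta_i\epsilon$, whence $Y_{ij} = \mathsf{Round}(W_{ij} + 2\Delta_i\epsilon) = \mathsf{Round}(\mu_{ij} + 2\Delta_i\epsilon)$ — a quantity computable from $\mu_i$ and $\Delta_i$ alone. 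So the tree will branch on the value of $\Delta_i$ (this is the $[d+1]$ part of $\Sigma$) and reserve $\bot$ for the bad event that $W_i$ is far from $\mu_i$.

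Concretely, I would define $T_{\mathsf{O}}$ and, simultaneously, for each node $v$ lying on a $\bot$-free path with edge labels $\delta_1, \dots, \delta_{i-1} \in [d+1]$, a tuple $(f_1^v, \mu_1^v, \dots, f_i^v, \mu_i^v)$ of ``believed'' functions and concentration points together with believed estimates $\hat{Y}_t^v := (\mathsf{Round}(\mu_{tj}^v + 2\delta_t\epsilon))_{j \in [d]}$, by induction on depth. At the root, $f_1^v$ and $\mu_1^v$ are whatever $\mathsf{O}$ deterministically chooses in round $1$. At a node $v$ of depth $i - 1$, set
\[
	v(X) = \begin{cases} \bot & \text{if } \|f_i^v(X) - \mu_i^v\|_\infty > \epsilon, \\ \Delta & \text{otherwise,} \end{cases}
\]
where $\Delta \in [d+1]$ is the value that the algorithm of \cref{sec:base-construction} computes on input $W = f_i^v(X)$. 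For each $\delta_i \in [d+1]$, the child of $v$ along edge $\delta_i$ is built by feeding $\hat{Y}_i^v$ (computed from $\mu_i^v$ and $\delta_i$) back into $\mathsf{O}$ to obtain $f_{i+1}$ and $\mu_{i+1}$ for that child, and then recursing. For the $\bot$-child of $v$ I attach an arbitrary height-$(k-i)$ subtree in which every node's function is the constant $1$ (any fixed element of $[d+1]$); these nodes play no role in the correctness guarantee and trivially satisfy \cref{cond:low-failure-prob}.

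Verifying \cref{cond:low-failure-prob} is then immediate: at a node on a $\bot$-free path, $\Pr_X[v(X) = \bot] = \Pr_X[\|f_i^v(X) - \mu_i^v\|_\infty > \epsilon] \le \delta$ since $f_i^v$ is $(\epsilon, \delta)$-concentrated at $\mu_i^v$; at a node inside a $\bot$-subtree, $v(X)$ is never $\bot$. For \cref{cond:model}, fix $X_1, \dots, X_k$ with no $\bot$ on the path to $T_{\mathsf{O}}(X_1, \dots, X_k)$, let $\delta_1, \dots, \delta_k$ be the edge labels, and let $v^{(0)}, v^{(1)}, \dots$ be the nodes on that path. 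I would prove by induction on $i$ that in the real interaction $\mathsf{O} \leftrightarrow \mathsf{S}_0(X_1, \dots, X_k)$ one has $f_i = f_i^{v^{(i-1)}}$, $\mu_i = \mu_i^{v^{(i-1)}}$, $\mathsf{S}_0$'s choice of $\Delta_i$ equals $\delta_i$, and $Y_i = \hat{Y}_i^{v^{(i-1)}}$: the functions and concentration points agree because $\mathsf{O}$ is deterministic and, by the inductive hypothesis, was fed the same estimates $\hat{Y}_1, \dots, \hat{Y}_{i-1}$ in both the tree's simulation and in reality; the $\Delta_i$'s agree because the edge label is not $\bot$, which by construction forces $\|f_i(X_i) - \mu_i\|_\infty \le \epsilon$ and makes $\delta_i$ exactly the algorithm's choice; and then $Y_i = \hat{Y}_i$ by the interval-containment argument of the first paragraph. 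Finally, $\hat{Y}_{ij} = \mathsf{Round}(\mu_{ij} + 2\delta_i\epsilon)$ lies within $(d+1)\epsilon$ (half an interval length) of $\mu_{ij} + 2\delta_i\epsilon$, so $\|Y_i - \mu_i\|_\infty \le (d+1)\epsilon + 2\delta_i\epsilon \le 3(d+1)\epsilon = O(\epsilon d)$.

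I expect the only real subtlety to be the bookkeeping in this inductive reconstruction — making precise that the tree, which sees only the labels $\delta_i$ and never the query points $X_i$ or the raw values $W_i$, nonetheless tracks the genuine interaction exactly along $\bot$-free paths. This works solely because of the deliberate shifting step: $\Delta_i$ is chosen so that a full $\pm\epsilon$ window around $W_{ij} + 2\Delta_i\epsilon$ sits inside one interval of $\mathcal{I}$, which is exactly the slack needed to absorb the discrepancy (invisible to the tree) between $W_{ij}$ and $\mu_{ij}$. One should also record the minor point that $T_{\mathsf{O}}$ needs to know which point $\mu_i$ the owner declares in each round; since $\mathsf{O}$ is deterministic, this is part of the data $T_{\mathsf{O}}$ is built from. (The hypothesis $\delta < 1/2$ is not essential to the argument above; it is a harmless normalization inherited from the setting in which this lemma is applied.)
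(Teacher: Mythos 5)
Your proof is correct and follows the same underlying idea as the paper's: branch the tree on the shift value $\Delta_i$, reserve $\bot$ for the event that $f_i(X_i)$ is far from its concentration point, and observe that along a $\bot$-free path the value $\Delta_i$ together with the concentration point determines $Y_i$, so the deterministic owner can be simulated exactly. The technical execution differs in two respects worth noting. First, the paper introduces an auxiliary many-query steward $\mathsf{S}_0'$ that ``compresses'' $Y_i$ into $\hat{\Delta}_i$ (the smallest $\Delta$ for which $\mathsf{Round}(\mu(f_i)_j + 2\Delta\epsilon) = Y_{ij}$ for all $j$, or $\bot$ if none), defines $T_{\mathsf{O}}$ abstractly as the function mapping $(X_1,\dots,X_i)$ to the resulting $\hat{\Delta}$-sequence, and then argues after the fact that this function is realized by a block decision tree; you instead build $T_{\mathsf{O}}$ directly by recursive simulation of the deterministic owner, tracking at each node the believed $(f_i^v,\mu_i^v)$ and branching on the algorithm's actual choice of $\Delta$. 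Second, the paper anchors the analysis to a canonical concentration point $\mu(f_i)$ (the lexicographically smallest) rather than the owner's declared $\mu_i$, which forces it to invoke $\delta < 1/2$ at the end to bound $\|\mu(f_i) - \mu_i\|_\infty \le 2\epsilon$ and yields a constant of $(3d+5)\epsilon$; by tracking the owner's own $\mu_i$ you avoid both, obtaining $(3d+3)\epsilon$ without using $\delta < 1/2$, as you correctly observe. Both approaches are equally valid; yours is somewhat more self-contained since it does not route through a separately-defined auxiliary steward, at the cost of doing the recursion bookkeeping explicitly. One minor point you should make explicit (the paper's proof also relies on it tacitly, when it asserts that the $\Delta_i$ used by $\mathsf{S}_0$ is deterministically reproduced by $\mathsf{S}_0'$): the algorithm of \cref{sec:base-construction} must break ties deterministically when several $\Delta \in [d+1]$ are admissible, e.g.\ by always taking the smallest, so that $v(X)$ is a well-defined function of $X$.
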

	
	Notice that \cref{lem:two-mediators} does not assert that $T_{\mathsf{O}}$ computes the transcript of $\mathsf{O} \leftrightarrow \mathsf{S}_0$. In fact, for the analysis, we will define \emph{another} steward $\mathsf{S}_0'$, and $T_{\mathsf{O}}$ will compute a sequence of values that arise in $\mathsf{O} \leftrightarrow \mathsf{S}_0'$. This new steward $\mathsf{S}_0'$ will be computationally inefficient; it will \emph{compress and decompress} the output of $\mathsf{S}_0$ (with some chance of failure) before giving it to $\mathsf{O}$, as we suggested in \cref{sec:techniques-shifting-and-rounding}.

	\begin{proof}[Proof of \cref{lem:two-mediators}]
		For a function $f: \{0, 1\}^n \to \R^d$ that is $(\epsilon, \delta)$-concentrated at some point $\mu \in \R^d$, define $\mu(f)$ to be the \emph{smallest} vector in $\R^d$ (under, say, the lexicographical order) at which $f$ is $(\epsilon, \delta)$-concentrated. (This exists, because $\{0, 1\}^n$ is finite, so the set of points where $f$ is concentrated is a compact subset of $\R^d$.) For a vector $Y \in \R^d$, say that a value $\Delta \in [d + 1]$ is \emph{$f$-compatible} with $Y$ if $Y_j = \mathsf{Round}(\mu(f)_j + 2 \Delta \epsilon)$ for every $j \in [d]$. Just for the analysis, let $\mathsf{S}_0'$ be the following (many-query) steward:
		\begin{enumerate}
			\item For $i = 1$ to $k$:
			\begin{enumerate}
				\item Give $f_i$ to $\mathsf{S}_0$, allowing it to make its one query and choose its output vector $Y_i \in \R^d$.
				\item Query $f_i$ at \emph{every} point in its domain, thereby learning the entire function.
				\item Compute
				\[
				\hat{\Delta}_i = \begin{cases}
				\text{the smallest } \Delta \in [d + 1] \text{ $f_i$-compatible with } Y_i & \text{if any such $\Delta$ exists} \\
				\bot & \text{otherwise.}
				\end{cases}
				\]
				\item Output $\hat{Y}_i = (\hat{Y}_{i1}, \dots, \hat{Y}_{id})$, where for each $j \in [d]$,
				\[
				\hat{Y}_{ij} = \begin{cases}
				\mathsf{Round}(\mu(f)_j + 2\hat{\Delta}_i \epsilon) & \text{if } \hat{\Delta}_i \neq \bot \\
				0 & \text{otherwise.}
				\end{cases}
				\]
			\end{enumerate}
		\end{enumerate}
		We are now ready to formally define $T_{\mathsf{O}}$ \emph{as a function}. Because $\mathsf{S}_0'(X_1, \dots, X_k)$ looks at $X_i$ only in round $i$, we can sensibly speak of the first $i$ rounds of $\mathsf{O} \leftrightarrow \mathsf{S}_0'(X_1, \dots, X_i)$ even for $i < k$. This allows us to define $T_{\mathsf{O}}(X_1, \dots, X_i)$ to be the node $v$ in $T_{\mathsf{O}}$ such that the path from the root to $v$ is described by the values $\hat{\Delta}_1, \dots, \hat{\Delta}_i$ that arise in $\mathsf{O} \leftrightarrow \mathsf{S}_0'(X_1, \dots, X_i)$.
		
		Now, we must show that this function $T_{\mathsf{O}}$ can be realized as a block decision tree, i.e. that each internal node $v$ can be assigned a transition function $v: \{0, 1\}^n \to \Sigma$ that is compatible with the definition of $T_{\mathsf{O}}$ as a function. Indeed, observe that $\hat{\Delta}_1, \dots, \hat{\Delta}_{i - 1}$ fully determine the state of $\mathsf{O}$ after the first $i - 1$ rounds of $\mathsf{O} \leftrightarrow \mathsf{S}_0'(X_1, \dots, X_i)$ and hence determine the function $f_i$. Furthermore, $\mathsf{S}_0$ is ``memoryless'', i.e. $Y_i$ is fully determined by $f_i$ and $X_i$. Thus, $\hat{\Delta}_i$ is fully determined by $\hat{\Delta}_1, \dots, \hat{\Delta}_{i - 1}$ and $X_i$. So there is a function $\varphi: (\hat{\Delta}_1, \dots, \hat{\Delta}_{i - 1}, X_i) \mapsto \hat{\Delta}_i$, and if the path from the root to $v$ is described by $\hat{\Delta}_1, \dots, \hat{\Delta}_{i - 1}$, we can set $v(X_i) \stackrel{\text{def}}{=} \varphi(\hat{\Delta}_1, \dots, \hat{\Delta}_{i - 1}, X_i)$.
		
		\paragraph{Analysis of $T_{\mathsf{O}}$} By the definition of $T_{\mathsf{O}}$ as a function, to prove \cref{cond:low-failure-prob} in the lemma statement, we must show that in each round of $\mathsf{O} \leftrightarrow \mathsf{S}_0'$, $\Pr[\hat{\Delta}_i = \bot] \leq \delta$. Indeed, by concentration, with probability $1 - \delta$, for every $j$, $|W_{ij} - \mu(f_i)_j| \leq \epsilon$. In this case, by the construction of $\mathsf{S}_0$, $W_{ij} + 2\Delta_i \epsilon$ and $\mu(f_i)_j + 2\Delta_i \epsilon$ are in the same interval in $\mathcal{I}$ for every $j \in [d]$. Therefore, in this case, there is at least one $\Delta$ value that is $f_i$-compatible with $Y_i$, namely the value $\Delta_i$ used by $\mathsf{S}_0$.
		
		Finally, to prove \cref{cond:model} in the lemma statement, suppose the path from the root node to $T_{\mathsf{O}}(X_1, \dots, X_k)$ does not include any $\bot$ nodes. Then in $\mathsf{O} \leftrightarrow \mathsf{S}_0'(X_1, \dots, X_k)$, for every $i$, $\hat{\Delta}_i \neq \bot$. This implies that every $Y_{ij}$ is of the form $\mathsf{Round}(\mu(f_i)_j + 2\hat{\Delta}_i \epsilon)$ for some $\hat{\Delta}_i \in [d + 1]$. Therefore, $|Y_{ij} - \mu(f_i)_j| \leq 3(d + 1) \epsilon$, since $2\hat{\Delta}_i \epsilon \leq 2(d + 1) \epsilon$ and rounding introduces at most $(d + 1) \epsilon$ additional error.
		
		Of course, so far the analysis has treated $\mathsf{S}'_0$, not $\mathsf{S}_0$. But the crucial point is, for every $i$, since $\hat{\Delta}_i \neq \bot$, we can be sure that $Y_i = \hat{Y}_i$. Therefore, the values $f_1, \dots, f_k, Y_1, \dots, Y_k$ in $\mathsf{O} \leftrightarrow \mathsf{S}_0'(X_1, \dots, X_k)$ are \emph{exactly the same} as they are in $\mathsf{O} \leftrightarrow \mathsf{S}_0(X_1, \dots, X_k)$! Therefore, in $\mathsf{O} \leftrightarrow \mathsf{S}_0(X_1, \dots, X_k)$, for every $i$, $\|Y_i - \mu(f_i)\|_{\infty} \leq (3d + 3) \epsilon$. Finally, since $\delta < 1/2$, if $\mu_i$ is \emph{any} point where $f_i$ is $(\epsilon, \delta)$-concentrated, $\|\mu(f_i) - \mu_i\|_{\infty} \leq 2\epsilon$. Therefore, for every $i$, $\|Y_i - \mu_i\|_{\infty} \leq 3(d + 1)\epsilon + 2\epsilon = (3d + 5)\epsilon$.
	\end{proof}
	
	Notice that in $\mathsf{O} \leftrightarrow \mathsf{S}_0'(X_1, \dots, X_k)$, if $\hat{\Delta}_i = \bot$ for some $i$, then the interaction might diverge from $\mathsf{O} \leftrightarrow \mathsf{S}_0(X_1, \dots, X_k)$, in which case $T_{\mathsf{O}}(X_1, \dots, X_k)$ does not encode the transcript of $\mathsf{O} \leftrightarrow \mathsf{S}_0(X_1, \dots, X_k)$ in any way.
	
	\section{Pseudorandomness for block decision trees} \label{sec:prg}
	
	Recall that our goal is to modify the internal parameters of the INW generator, thereby constructing a $\gamma$-PRG for $(k, n, \Sigma)$ block decision trees with seed length $n + O(k \log |\Sigma| + (\log k) \log(1/\gamma)$. The construction and analysis mimic the standard treatment of the INW generator, and the reader who is familiar with the INW generator is encouraged to skip to \cref{sec:prg-construction} to just see the new parameters. In words, the only new feature is that we use extractors for a geometrically growing entropy deficit at each level of the recursion to match the geometrically growing width of the block decision tree.
	
	\subsection{Formal definitions and theorem statement} \label{sec:prg-formalities}
	Let $U_n$ denote the uniform distribution on $\{0, 1\}^n$. For two probability distributions $\mu, \mu'$ on the same measurable space, write $\mu \sim_{\gamma} \mu'$ to indicate that $\mu$ and $\mu'$ have total variation distance at most $\gamma$.
	
	\begin{defn}
		We say that $\mathsf{Gen}: \{0, 1\}^s \to \{0, 1\}^{nk}$ is a \emph{$\gamma$-PRG for $(k, n, \Sigma)$ block decision trees} if for every such tree $T$, $T(\mathsf{Gen}(U_s)) \sim_{\gamma} T(U_{nk})$.
	\end{defn}
	
	\begin{thm} \label{thm:block-decision-tree-prg}
		For every $n, k \in \N$, every finite alphabet $\Sigma$, and every $\gamma > 0$, there exists a $\gamma$-PRG $\mathsf{Gen}: \{0, 1\}^s \to \{0, 1\}^{nk}$ for $(k, n, \Sigma)$ block decision trees with seed length
		\[
			s \leq n + O(k \log |\Sigma| + (\log k)\log(1/\gamma)).
		\]
		The PRG can be computed in $\poly(n, k, \log |\Sigma|, \log(1/\gamma))$ time.
	\end{thm}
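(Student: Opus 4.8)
The plan is to mimic the recursive construction and analysis of the INW generator \cite{inw94}, changing only the parameters of the extractors used in the recursion. The key observation is that, read from its root, a height-$m$ block decision tree reaches at most $|\Sigma|^m$ distinct nodes at depth $m$; consequently, when the INW recursion splits a length-$2^\ell$ block at its midpoint, the ``state'' that must be passed across the split --- namely the node reached after $2^{\ell-1}$ blocks, relative to the root of the height-$2^\ell$ sub-tree currently under analysis --- carries only $2^{\ell-1}\log|\Sigma|$ bits of information. So the extractor at recursion level $\ell$ must absorb an entropy deficit of order $2^{\ell-1}\log|\Sigma|$, which grows geometrically in $\ell$; the saving grace is that the block length grows geometrically in tandem, so $\sum_{\ell=1}^{t}2^{\ell-1}\log|\Sigma| = (2^t-1)\log|\Sigma| = O(k\log|\Sigma|)$. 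This is exactly why the seed length has a $k\log|\Sigma|$ term rather than the $k\log k\log|\Sigma|$ term one gets by treating the tree as a single ordered branching program of width $|\Sigma|^k$ and applying INW off the shelf.

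Concretely, I would assume $k = 2^t$ (pad the tree up to a power-of-two height so the extra blocks are ignored; this costs a factor of at most $2$ in $k$) and $|\Sigma| \le 2^n$ (otherwise restrict to reachable nodes). Set $\mathsf{Gen}_0 = \mathrm{id}\colon\{0,1\}^n\to\{0,1\}^n$, $s_0 = n$, and recursively define
\[
\mathsf{Gen}_\ell(x, y) \stackrel{\text{def}}{=} \bigl(\mathsf{Gen}_{\ell-1}(x),\ \mathsf{Gen}_{\ell-1}(E_\ell(x, y))\bigr), \qquad x\in\{0,1\}^{s_{\ell-1}},\ y\in\{0,1\}^{t_\ell},
\]
where $\epsilon_\ell \stackrel{\text{def}}{=} \gamma\cdot 2^{\ell-1-t}/t$, $\Delta_\ell \stackrel{\text{def}}{=} 2^{\ell-1}\log|\Sigma| + \log(1/\epsilon_\ell)$, and $E_\ell\colon\{0,1\}^{s_{\ell-1}}\times\{0,1\}^{t_\ell}\to\{0,1\}^{s_{\ell-1}}$ is an explicit $(s_{\ell-1}-\Delta_\ell,\ \epsilon_\ell)$-extractor whose output length equals its source length; as in the usual treatment of INW (pad a near-optimal extractor or lossless condenser with fresh uniform bits to restore the output length, or simply take $E_\ell(x,y)=y$ when $\Delta_\ell \ge s_{\ell-1}$), this can be done with seed length $t_\ell = \Delta_\ell + O(\log s_{\ell-1} + \log(1/\epsilon_\ell))$, the point being that the deficit $\Delta_\ell$ enters additively. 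Put $\mathsf{Gen} = \mathsf{Gen}_t$, truncated to the first $nk$ output bits. Unrolling $s_\ell = s_{\ell-1}+t_\ell$, the dominant contribution to $s_t = n + \sum_{\ell=1}^t t_\ell$ is $\sum_\ell 2^{\ell-1}\log|\Sigma| = O(k\log|\Sigma|)$; since $\log(1/\epsilon_\ell) = O(\log k + \log(1/\gamma))$, $\log s_{\ell-1} \le \log s_t = O(\log n + \log k + \log\log|\Sigma|)$, and $t = \log k$, the remaining terms sum to $O\bigl((\log k)\log(1/\gamma) + (\log k)\log n + (\log k)\log\log|\Sigma| + \log^2 k\bigr)$, and since $\log^2 m = O(m)$ for all $m$ while $(\log k)(\log n)\le \log^2 k + \log^2 n$ and $(\log k)(\log\log|\Sigma|)\le \log^2 k + (\log\log|\Sigma|)^2$, all of these are absorbed into $n + O(k\log|\Sigma|)$, giving the claimed bound. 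Explicitness of the $E_\ell$ yields the $\poly$-time computability.

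For correctness I would prove by induction on $\ell$ that every $(2^\ell, n, \Sigma)$ block decision tree $T'$ satisfies $T'(\mathsf{Gen}_\ell(U_{s_\ell})) \sim_{\mathsf{err}_\ell} T'(U_{n2^\ell})$, where $\mathsf{err}_\ell \stackrel{\text{def}}{=} \sum_{i=1}^{\ell} 2^{\ell-i+1}\epsilon_i$. The base case $\ell=0$ is immediate since $\mathsf{Gen}_0$ is the identity. For the step, decompose $T'$ into a top sub-tree $T''$ of height $2^{\ell-1}$ (with at most $N := |\Sigma|^{2^{\ell-1}}$ leaves) whose leaves are the roots of height-$2^{\ell-1}$ bottom sub-trees $T'_q$, and compare the leaf distribution of $T'$ under (i) $\mathsf{Gen}_\ell(U)$, (ii) $(\mathsf{Gen}_{\ell-1}(x),\mathsf{Gen}_{\ell-1}(x'))$ with $x, x'$ independent uniform, and (iii) $U_{n2^\ell}$. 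For (ii) vs.\ (iii): the leaf of $T''$ is $\mathsf{err}_{\ell-1}$-close to uniform (inductive hypothesis on $T''$), and for each fixed leaf $q$ the leaf of $T'_q$ is $\mathsf{err}_{\ell-1}$-close to uniform (hypothesis on $T'_q$), and since the two halves use independent randomness these compose to total variation $\le 2\mathsf{err}_{\ell-1}$. For (i) vs.\ (ii): let $Q$ be the leaf of $T''$ reached by the first half, a deterministic function of $x$; for each value $q$ with $\Pr[Q=q]\ge \epsilon_\ell/N$, the distribution of $x$ conditioned on $Q=q$ has min-entropy at least $s_{\ell-1} - \log(N/\epsilon_\ell) = s_{\ell-1}-\Delta_\ell$, so $E_\ell(x,y)$ conditioned on $Q=q$ is $\epsilon_\ell$-close to uniform and hence so is the second half's output, while the remaining (``bad'') values of $q$ have total probability at most $N\cdot(\epsilon_\ell/N) = \epsilon_\ell$; therefore (i) and (ii) differ by at most $2\epsilon_\ell$. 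This yields $\mathsf{err}_\ell \le 2\mathsf{err}_{\ell-1} + 2\epsilon_\ell$, which unrolls to $\sum_{i=1}^\ell 2^{\ell-i+1}\epsilon_i$; plugging in $\epsilon_i = \gamma 2^{i-1-t}/t$ gives $\mathsf{err}_t = \sum_{i=1}^t 2^{t-i+1}\cdot\gamma 2^{i-1-t}/t = \sum_{i=1}^t \gamma/t = \gamma$.

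I expect the main obstacle to be conceptual rather than computational: recognizing that the relevant ``width'' at recursion level $\ell$ is $|\Sigma|^{2^{\ell-1}}$ (relative to the root of the sub-tree being analyzed) and that the geometric growth of these widths is exactly balanced by the geometric shrinking of the blocks, so the cumulative seed overhead is only $O(k\log|\Sigma|)$ rather than $O(k\log k\log|\Sigma|)$. The most delicate technical point is the (i)-vs-(ii) step --- conditioning on the first-half leaf $Q$, correctly accounting for the min-entropy loss, and absorbing the low-probability values of $Q$ via the extra additive $\log(1/\epsilon_\ell)$ built into $\Delta_\ell$ --- together with verifying that an explicit extractor with seed length $\Delta_\ell + O(\log s_{\ell-1} + \log(1/\epsilon_\ell))$ really is available in all the relevant parameter regimes. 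Checking that the lower-order seed-length terms genuinely fall under $n + O(k\log|\Sigma|)$, and handling the degenerate cases ($k=1$, $|\Sigma| > 2^n$, $\gamma \ge 1$), is routine. Everything else is a faithful transcription of the standard INW construction and analysis.
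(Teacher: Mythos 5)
Your proposal is correct and uses the same recursive construction as the paper's proof: an INW-style recursion $\mathsf{Gen}_\ell(x,y)=(\mathsf{Gen}_{\ell-1}(x),\mathsf{Gen}_{\ell-1}(E_\ell(x,y)))$ with an extractor whose entropy-deficit parameter grows geometrically (roughly $2^{\ell-1}\log|\Sigma|$) at each level $\ell$, exactly matching the geometric growth of the tree width $|\Sigma|^{2^{\ell-1}}$ so that the total seed overhead is $O(k\log|\Sigma|)$ rather than $O(k\log k\log|\Sigma|)$. The only notable difference is presentational: you instantiate $E_\ell$ as an \emph{ordinary} extractor and carry out the conditioning on the intermediate state $Q$ by hand, inflating $\Delta_\ell$ by $\log(1/\epsilon_\ell)$ to absorb the low-probability values of $Q$ and letting $\epsilon_\ell$ grow geometrically so the per-level errors telescope to $\gamma$, whereas the paper wraps this exact argument inside the average-case extractor abstraction of Dodis et al.\ (its \cref{lem:recycle} and \cref{lem:explicit-extractor}) and uses a single fixed per-level error $\beta=\gamma/2^{\lceil\log k\rceil}$; the two accountings are equivalent and produce the same $n+O(k\log|\Sigma|+(\log k)\log(1/\gamma))$ seed length.
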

	
	\subsection{Concatenating PRGs for block decision trees}
	Toward proving \cref{thm:block-decision-tree-prg}, for a $(k, n, \Sigma)$ block decision tree $T = (V, E)$ and a node $v \in V$, let $T_v$ denote the subtree rooted at $v$, and observe that we can think of $T_v$ as a $(k', n, \Sigma)$ block decision tree, where $k' = k - \text{depth}(v)$. This simple observation -- after a block decision tree has been computing for a while, the remaining computation is just another block decision tree -- implies that pseudorandom generators for block decision trees can be \emph{concatenated} with mild error accumulation. This fact and its easy proof are perfectly analogous to the situation with ordered branching programs. We record the details below.
	
	\begin{lem} \label{lem:concatenation}
		Suppose $\mathsf{Gen}_1: \{0, 1\}^{s_1} \to \{0, 1\}^{n k_1}$ is a $\gamma_1$-PRG for $(k_1, n, \Sigma)$ block decision trees and $\mathsf{Gen}_2: \{0, 1\}^{s_2} \to \{0, 1\}^{n k_2}$ is a $\gamma_2$-PRG for $(k_2, n, \Sigma)$ block decision trees. Let $\mathsf{Gen}(x, y) = (\mathsf{Gen}_1(x), \mathsf{Gen}_2(y))$. Then $\mathsf{Gen}$ is a $(\gamma_1 + \gamma_2)$-PRG for $(k_1 + k_2, n, \Sigma)$ block decision trees.
	\end{lem}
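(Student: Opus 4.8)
The plan is to fix an arbitrary $(k_1+k_2, n, \Sigma)$ block decision tree $T = (V,E)$ and show that $T(\mathsf{Gen}(U_{s_1+s_2})) \sim_{\gamma_1+\gamma_2} T(U_{n(k_1+k_2)})$ by a standard hybrid argument, interpolating between the two distributions through the intermediate distribution in which the first block of $k_1$ chunks is pseudorandom (via $\mathsf{Gen}_1$) but the remaining $k_2$ chunks are truly uniform. The triangle inequality for total variation distance then reduces the claim to two bounds, one controlled by $\mathsf{Gen}_1$ and one by $\mathsf{Gen}_2$.

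First I would split the input: write $\mathsf{Gen}(x,y) = (\mathsf{Gen}_1(x), \mathsf{Gen}_2(y))$ with $x \in \{0,1\}^{s_1}$, $y \in \{0,1\}^{s_2}$, and correspondingly think of a uniform string in $(\{0,1\}^n)^{k_1+k_2}$ as a pair $(Z_{\leq k_1}, Z_{>k_1})$ of independent uniform blocks. For the first inequality, I would condition on the value $y$ (equivalently on $Z_{>k_1}$ in the comparison distribution): once the "tail" randomness is fixed, the map sending the first $k_1$ chunks to the final leaf is exactly evaluation of some $(k_1, n, \Sigma)$ block decision tree — namely, run $T$ on the first $k_1$ blocks to reach a node $v$ at depth $k_1$, then apply the fixed deterministic continuation $T_v$ driven by the fixed tail. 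Hence by the $\gamma_1$-PRG guarantee the output distributions differ by at most $\gamma_1$ for every fixed $y$, and averaging over $y$ preserves this bound (total variation distance is convex / does not increase under averaging over a common conditioning variable). For the second inequality I would instead condition on $x$: this determines the first $k_1$ blocks $\mathsf{Gen}_1(x)$ and hence a fixed node $v$ at depth $k_1$, after which the residual computation is the subtree $T_v$, a genuine $(k_2, n, \Sigma)$ block decision tree, fed either $\mathsf{Gen}_2(U_{s_2})$ or $U_{n k_2}$; the $\gamma_2$-PRG guarantee bounds the difference by $\gamma_2$, and again averaging over $x$ preserves it. Summing the two bounds via the triangle inequality gives $\gamma_1 + \gamma_2$.

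The only technical point requiring a little care — and the step I expect to be the mild obstacle — is articulating cleanly that "$T$ restricted to its first $k_1$ levels, with the tail computation hard-wired" really is a legitimate $(k_1, n, \Sigma)$ block decision tree, and symmetrically that every subtree $T_v$ is a legitimate $(k_2, n, \Sigma)$ block decision tree. The latter is immediate from the observation recorded just before the lemma (a subtree rooted at depth-$d$ node is a $(k-d, n, \Sigma)$ block decision tree). For the former, one builds the desired depth-$k_1$ tree $T'$ by taking the first $k_1$ levels of $T$ and attaching to each depth-$k_1$ leaf $v$ a fixed output label equal to the leaf $T_v$ reaches on the fixed tail blocks; the transition functions at internal nodes are inherited unchanged from $T$, so $T'$ satisfies the definition of a block decision tree, and by construction $T'(Z_{\leq k_1}) = T(Z_{\leq k_1}, \text{fixed tail})$. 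With these two identifications in place, the rest is just the averaging-over-conditioning inequality for total variation distance, which is routine.
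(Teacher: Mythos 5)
Your proof is correct and takes essentially the same hybrid-argument approach as the paper; the only difference is that you interpolate through the intermediate distribution $(\mathsf{Gen}_1(U_{s_1}), U_{nk_2})$, whereas the paper interpolates through $(U_{nk_1}, \mathsf{Gen}_2(U_{s_2}))$ and writes the bound directly as an $\ell_1$ calculation with $p(u)\tilde{p}(v\mid u)$ as the middle term. Your extra observation --- that truncating $T$ to depth $k_1$ and then applying the fixed deterministic continuation yields a map whose output distribution is still fooled by $\mathsf{Gen}_1$, because post-processing by a deterministic function cannot increase total variation distance --- is exactly what is needed for your ordering of the hybrid, and it is sound.
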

	
	\begin{proof}
		Fix a $(k_1 + k_2, n, \Sigma)$ block decision tree $T$. For a node $u$ at depth $k_1$ and a leaf node $v$, define
		\begin{align*}
			&p(u) = \Pr[T(U_{nk_1}) = u] 
			&&p(v \mid u) = \Pr[T_u(U_{n k_2}) = v] \\
			&\tilde{p}(u) = \Pr[T(\mathsf{Gen}_1(U_{s_1})) = u]
			&&\tilde{p}(v \mid u) = \Pr[T_u(\mathsf{Gen}_2(U_{s_2})) = v].
		\end{align*}
		To prove correctness of $\mathsf{Gen}$, recall that $\ell_1$ distance is twice total variation distance. The $\ell_1$ distance between $T(\mathsf{Gen}(U_{s_1 + s_2}))$ and $T(U_{n(k_1 + k_2)})$ is precisely $\sum_{u, v} |p(u) p(v \mid u) - \tilde{p}(u) \tilde{p}(v \mid u)|$. By the triangle inequality, this is bounded by
		\begin{align*}
			&\sum_{u, v} |p(u) p(v \mid u) - p(u) \tilde{p}(v \mid u)| + \sum_{u, v} |p(u) \tilde{p}(v \mid u) - \tilde{p}(u) \tilde{p}(v \mid u)| \\
			=\; & \sum_{u, v} p(u) \cdot |p(v \mid u) - \tilde{p}(v \mid u)| + \sum_{u, v} |p(u) - \tilde{p}(u)| \cdot \tilde{p}(v \mid u) \\
			=\; & \sum_u p(u) \sum_v |p(v \mid u) - \tilde{p}(v \mid u)| + \sum_u |p(u) - \tilde{p}(u)|.
		\end{align*}
		By the correctness of $\mathsf{Gen}_1$ and $\mathsf{Gen}_2$, this is bounded by $\left(\sum_u p(u) \cdot 2 \gamma_2\right) + 2 \gamma_1 = 2(\gamma_1 + \gamma_2)$.
	\end{proof}
	
	\subsection{Recycling randomness}
	We find it most enlightening to think of the INW generator in terms of extractors, as suggested by Raz and Reingold \cite{rr99} and in the spirit of the Nisan-Zuckerman generator \cite{nz96}. The analysis is particularly clean if we work with \emph{average-case extractors}, a concept introduced by Dodis et al.\ \cite{dors08}.
	\begin{defn}
		For discrete random variables $X, V$, the \emph{average-case conditional min-entropy} of $X$ given $V$ is
		\[
			\tilde{H}_{\infty}(X \mid V) = -\log_2\left(\E_{v \sim V}\left[2^{-H_{\infty}(X \mid V = v)}\right]\right),
		\]
		where $H_{\infty}$ is (standard) min-entropy.
	\end{defn}
	Intuitively, $\tilde{H}_{\infty}(X \mid V)$ measures the amount of randomness in $X$ from the perspective of someone who knows $V$. The output of an \emph{average-case extractor} is required to look uniform even from the perspective of someone who knows $V$, as long as its first input is sampled from a distribution that has high min-entropy conditioned on $V$:
	\begin{defn}
		We say that $\mathsf{Ext}: \{0, 1\}^s \times \{0, 1\}^d \to \{0, 1\}^m$ is an \emph{average-case $(s - t, \beta)$-extractor} if for every $X$ distributed on $\{0, 1\}^{s}$ and every discrete random variable $V$ such that $\tilde{H}_{\infty}(X \mid V) \geq s - t$, if we let $Y \sim U_d$ be independent of $(X, V)$ and let $Z \sim U_m$ be independent of $V$, then $(V, \mathsf{Ext}(X, Y)) \sim_{\beta} (V, Z)$.
	\end{defn}
	Average-case extractors are the perfect tools for \emph{recycling randomness} in space-bounded computation. We record the details for block decision trees below.
	\begin{lem}[Randomness recycling lemma for block decision trees] \label{lem:recycle}
		Suppose $\mathsf{Gen}: \{0, 1\}^s \to \{0, 1\}^{nk}$ is a $\gamma$-PRG for $(k, n, \Sigma)$ block decision trees and $\mathsf{Ext}: \{0, 1\}^s \times \{0, 1\}^d \to \{0, 1\}^s$ is an average-case $(s - k \log |\Sigma|, \beta)$-extractor. Define
		\[
			\mathsf{Gen}'(x, y) = (\mathsf{Gen}(x), \mathsf{Gen}(\mathsf{Ext}(x, y))).
		\]
		Then $\mathsf{Gen}'$ is a $(2\gamma + \beta)$-PRG for $(2k, n, \Sigma)$ block decision trees.
	\end{lem}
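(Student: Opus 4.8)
The plan is to run the standard INW ``randomness recycling'' argument, tracking total variation distance through a short hybrid chain and substituting the average-case extractor for the seeded extractor that appears in the usual proof. First I would invoke \cref{lem:concatenation} with $\mathsf{Gen}_1 = \mathsf{Gen}_2 = \mathsf{Gen}$ to conclude that $\mathsf{Gen}''(x, z) \stackrel{\text{def}}{=} (\mathsf{Gen}(x), \mathsf{Gen}(z))$ is a $2\gamma$-PRG for $(2k, n, \Sigma)$ block decision trees. Hence it suffices to show that $T(\mathsf{Gen}'(U_{s + d})) \sim_\beta T(\mathsf{Gen}''(U_{2s}))$ for every $(2k, n, \Sigma)$ block decision tree $T$; the lemma then follows by the triangle inequality for total variation distance.

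So fix such a tree $T$. The crucial structural observation is that if $T$ is run on a $2nk$-bit string whose first $nk$ bits are $\mathsf{Gen}(x)$, then the depth-$k$ node $T$ occupies is a deterministic function $\phi(x)$ of $x$ alone, and $\phi$ takes at most $|\Sigma|^k$ distinct values (every node at depth $< 2k$ has exactly $|\Sigma|$ children, so there are exactly $|\Sigma|^k$ nodes at depth $k$). Moreover, for any $w \in \{0, 1\}^s$, if the last $nk$ bits fed to $T$ are $\mathsf{Gen}(w)$, then the leaf $T$ reaches is $g(\phi(x), w)$, where $g(u, w) \stackrel{\text{def}}{=} T_u(\mathsf{Gen}(w))$ and $T_u$ is itself a $(k, n, \Sigma)$ block decision tree (as noted just before \cref{lem:concatenation}). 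In particular, $T(\mathsf{Gen}'(x, y)) = g(\phi(x), \mathsf{Ext}(x, y))$ and $T(\mathsf{Gen}''(x, z)) = g(\phi(x), z)$.

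It then remains to compare these two, which is where the extractor enters, with $\phi(x)$ playing the role of the side-information random variable. Since $x \sim U_s$ has min-entropy $s$ and $\phi(x)$ is a function of $x$ with at most $|\Sigma|^k$ possible values, the standard chain rule for average-case min-entropy (Dodis et al.~\cite{dors08}) gives $\tilde{H}_\infty(x \mid \phi(x)) \geq s - \log(|\Sigma|^k) = s - k\log|\Sigma|$. The seed $y \sim U_d$ is independent of $x$, hence of $(x, \phi(x))$, so the assumption that $\mathsf{Ext}$ is an average-case $(s - k\log|\Sigma|, \beta)$-extractor yields $(\phi(x), \mathsf{Ext}(x, y)) \sim_\beta (\phi(x), z)$, where $z \sim U_s$ is independent of $\phi(x)$. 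Applying the data-processing inequality for total variation distance with the deterministic map $(u, w) \mapsto g(u, w)$, we get $g(\phi(x), \mathsf{Ext}(x, y)) \sim_\beta g(\phi(x), z)$. The left-hand side is $T(\mathsf{Gen}'(U_{s + d}))$, and the right-hand side is $T(\mathsf{Gen}''(U_{2s}))$ --- because in $\mathsf{Gen}''$ the seeds $x$ and $z$ are independent uniform, so $z$ is independent of $\phi(x)$ and the joint law of $(\phi(x), z)$ agrees with the one in the extractor conclusion. This completes the plan.

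I do not anticipate a real obstacle: the argument is essentially bookkeeping modeled on the INW analysis. The only point that demands care is matching the formal definition of an average-case extractor --- correctly identifying $\phi(x)$ (the depth-$k$ node, with its $|\Sigma|^k$-bounded range) as the conditioning variable, checking that $y$ is independent of $(x, \phi(x))$, and confirming that everything downstream of $\mathsf{Ext}(x, y)$, namely feeding it into $\mathsf{Gen}$ and then into $T_{\phi(x)}$, is legitimate post-processing so that the data-processing inequality applies.
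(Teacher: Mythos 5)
Your proposal is correct and follows essentially the same argument as the paper's proof: identify the depth-$k$ node $V = T(\mathsf{Gen}(X))$ as the side information, bound $\tilde{H}_\infty(X \mid V) \geq s - k\log|\Sigma|$ via the chain rule from Dodis et al., invoke the average-case extractor guarantee, post-process with $(v, w) \mapsto T_v(\mathsf{Gen}(w))$, and combine with \cref{lem:concatenation} via the triangle inequality. The only surface differences are notational (your $\phi(x)$ and $g(u, w)$ versus the paper's $V$ and anonymous map) and the order in which the two hybrid steps are presented.
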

	
	\begin{proof}
		Let $T$ be a $(2k, n, \Sigma)$ block decision tree. Let $X \sim U_s$ and let $V = T(\mathsf{Gen}(X))$. By \cite[Lemma 2.2b]{dors08}, the fact that $V$ can be described using $k \log |\Sigma|$ bits implies that $\tilde{H}_{\infty}(X \mid V) \geq s - k \log |\Sigma|$. Therefore, by the average-case extractor condition, if we let $Y \sim U_d$ be independent of $X$ and $Z \sim U_d$ be independent of $V$, then
		\[
			(V, \mathsf{Ext}(X, Y)) \sim_{\beta} (V, Z).
		\]
		Applying a (deterministic) function can only make the distributions closer. Apply the function $(v, z) \mapsto T_v(\mathsf{Gen}(z))$:
		\[
			T(\mathsf{Gen}'(X, Y)) \sim_{\beta} T(\mathsf{Gen}(X), \mathsf{Gen}(Z)).
		\]
		By \cref{lem:concatenation}, the right-hand side is $(2\gamma)$-close to $T(U_{2nk})$. The triangle inequality completes the proof.
	\end{proof}
	
	To actually construct a generator, we will need to instantiate this randomness recycling lemma with an explicit average-case extractor:
	\begin{lem} \label{lem:explicit-extractor}
		For every $s, t \in \N$ and every $\beta > 0$, there exists an average-case $(s - t, \beta)$-extractor $\mathsf{Ext}: \{0, 1\}^s \times \{0, 1\}^d \to \{0, 1\}^s$ with seed length $d \leq O(t + \log(1/\beta))$ computable in time $\poly(s, \log(1/\beta))$.
	\end{lem}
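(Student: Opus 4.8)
The plan is to instantiate the randomness recycling lemma with the classical \emph{expander-walk extractor}, which in the high-min-entropy regime has seed length depending only on the entropy deficit and the error (not on $s$), and which upgrades to the average-case setting essentially for free. For $s \in \N$ and $\lambda \in (0, 1)$, let $G_{s, \lambda}$ be an explicit $D$-regular multigraph on the vertex set $\{0, 1\}^s$ with $D = \poly(1/\lambda)$ and spectral expansion $\lambda(G_{s, \lambda}) \le \lambda$ — obtained, say, by taking a suitable power of a constant-degree explicit expander — and such that the $y$-th neighbor of a given vertex can be computed in time $\poly(s, \log(1/\lambda))$. Define $\mathsf{Ext}(x, y)$ to be the $y$-th neighbor of $x$ in $G_{s, \lambda}$, where $d \stackrel{\text{def}}{=} \log D = O(\log(1/\lambda))$ and $\lambda$ is chosen at the end.

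The first step is the standard \emph{worst-case} guarantee: if $X$ is a distribution on $\{0,1\}^s$ with $H_\infty(X) \ge s - \Delta$, then $\mathsf{Ext}(X, U_d) \sim_{\lambda 2^{\Delta/2}} U_s$. This is the usual $\ell_2$ argument: the probability vector of $\mathsf{Ext}(X, U_d)$ is $Mp$, where $M$ is the normalized adjacency operator of $G_{s, \lambda}$ and $p$ is the probability vector of $X$; writing $p = u + p^{\perp}$ with $u$ uniform gives $\|Mp - u\|_2 = \|M p^{\perp}\|_2 \le \lambda \|p^{\perp}\|_2 \le \lambda \|p\|_2 \le \lambda \sqrt{\max_x p(x)} \le \lambda 2^{-(s-\Delta)/2}$, so the total variation distance to $U_s$ is at most $\tfrac12 2^{s/2} \|Mp - u\|_2 \le \lambda 2^{\Delta/2}$.

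To get the average-case guarantee, fix $X$ and a discrete random variable $V$ with $\tilde{H}_\infty(X \mid V) \ge s - t$, and let $Z \sim U_s$ be independent of $V$. Applying Markov's inequality to the definition of $\tilde{H}_\infty$ shows that all but a $\rho$-fraction of the probability mass of $V$ falls on values $v$ with $H_\infty(X \mid V = v) \ge s - t - \log(1/\rho)$; for each such $v$ the worst-case bound with $\Delta = t + \log(1/\rho)$ says $\mathsf{Ext}(X, U_d) \mid V = v$ is $\lambda 2^{\Delta/2}$-close to $U_s$, while the remaining $v$'s contribute at most $\rho$ in total. Hence $(V, \mathsf{Ext}(X, U_d)) \sim_{\rho + \lambda 2^{\Delta / 2}} (V, Z)$. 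Choosing $\rho = \beta/2$ and then $\lambda$ small enough that $\lambda 2^{\Delta/2} \le \beta/2$ — i.e.\ $\log(1/\lambda) = \Theta(t + \log(1/\beta))$ — makes the error at most $\beta$ and the seed length $d = O(t + \log(1/\beta))$, with the running time inherited from the explicitness of $G_{s, \lambda}$.

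The only genuinely external ingredient, and the main thing to get right, is the explicit expander $G_{s, \lambda}$ on exactly $2^s$ vertices with $D = \poly(1/\lambda)$ and $\lambda(G_{s, \lambda}) \le \lambda$ — including the mild point that $D$ may exceed $2^s$ when $t$ is large, which is why we allow $G_{s, \lambda}$ to be a multigraph. This is entirely standard, so I would cite an explicit construction rather than reprove it; everything else is a short $\ell_2$ computation together with one application of Markov's inequality.
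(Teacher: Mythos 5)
Your proof is correct and follows essentially the same route as the paper: instantiate with the standard expander-walk (a.k.a.\ expander-neighbor) extractor, whose seed length depends only on the entropy deficit and error, and then upgrade it to an average-case extractor by the usual Markov argument on $\tilde{H}_\infty$. The paper cites \cite{vad12} for the expander extractor and \cite{dors08} for the average-case upgrade rather than spelling them out, but the content is the same, and your side remark about allowing a multigraph when $D$ would exceed $2^s$ is a reasonable (if minor) point to flag.
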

	
	\begin{proof}[Proof sketch]
		It is standard (and can be proven using expanders, see, e.g., \cite{vad12}) that there exists an \emph{ordinary} $(s - t - \log(2/\beta), \beta/2)$-extractor $\mathsf{Ext}: \{0, 1\}^s \times \{0, 1\}^d \to \{0, 1\}^s$ with seed length $d \leq O(t + \log(1/\beta))$ computable in time $\poly(s, \log(1/\beta))$. By the same argument as that used to prove \cite[Lemma 2.3]{dors08}, $\mathsf{Ext}$ is automatically an average-case $(s - t, \beta)$-extractor.
	\end{proof}
	
	\subsection{The recursive construction} \label{sec:prg-construction}
	\begin{proof}[Proof of \cref{thm:block-decision-tree-prg}]
		Define $\beta = \gamma / 2^{\lceil \log k \rceil}$. For $i \geq 0$, define $s_i \in \N$, $d_i \in \N$, $G_i: \{0, 1\}^{s_i} \to \{0, 1\}^{n \cdot 2^i}$, and $\mathsf{Ext}_i: \{0, 1\}^{s_i} \times \{0, 1\}^{d_i} \to \{0, 1\}^{s_i}$ through mutual recursion as follows. Start with $s_0 = n$ and $G_0(x) = x$. Having already defined $s_i$ and $G_i$, let $\mathsf{Ext}_i$ be the average-case $(s_i - 2^i \log |\Sigma|, \beta)$-extractor of \cref{lem:explicit-extractor}, and let $d_i$ be its seed length. Then let $s_{i + 1} = s_i + d_i$, and let
		\[
			G_{i + 1}(x, y) = (G_i(x), G_i(\mathsf{Ext}_i(x, y))).
		\]
		We show by induction on $i$ that $G_i$ is a $(\beta \cdot (2^i - 1))$-PRG for $(2^i, n, \Sigma)$ block decision trees. In the base case $i = 0$, this is trivial. For the inductive step, apply \cref{lem:recycle}, and note that $2 \beta (2^i - 1) + \beta = \beta(2^{i + 1} - 1)$. This completes the induction. Therefore, we can let $\mathsf{Gen} = G_{\lceil \log k \rceil}$, since $\beta \cdot (2^{\lceil \log k \rceil} - 1) < \gamma$. The seed length $s_{\lceil \log k \rceil}$ of $\mathsf{Gen}$ is
		\begin{align*}
			n + \sum_{i = 0}^{\lceil \log k \rceil} d_i &\leq n + O\left(\sum_{i = 0}^{\lceil \log k \rceil} (2^i \log |\Sigma| + \log k + \log(1/\gamma))\right) \\
			 &\leq n + O(k \log |\Sigma| + (\log k) \log(1/\gamma)).
		\end{align*}
		The time needed to compute $\mathsf{Gen}(x)$ is just the time needed for $O(k)$ applications of $\mathsf{Ext}_i$ for various $i \leq O(\log k)$, which is $\poly(n, k, \log |\Sigma|, \log(1/\gamma))$.
	\end{proof}
	
	\section{Proof of main result (\cref{thm:main})} \label{sec:main}
	
	Without loss of generality, assume $\delta < 1/2$. (If $\delta \geq 1/2$, then either $k = 1$ or $k \delta \geq 1$; in either case, the result is trivial.) Let $\mathsf{S}_0$ be the steward of \cref{sec:base}, let $\Sigma$ be the alphabet of \cref{lem:two-mediators}, and let $\mathsf{Gen}$ be the $\gamma$-PRG for $(k, n, \Sigma)$ block decision trees of \cref{thm:block-decision-tree-prg}. The steward is $\mathsf{S}(X) \stackrel{\text{def}}{=} \mathsf{S}_0(\mathsf{Gen}(X))$.
	
	Consider any owner $\mathsf{O}$. We may assume without loss of generality that $\mathsf{O}$ is deterministic, because a randomized owner is just a distribution over deterministic owners. By \cref{cond:low-failure-prob} of \cref{lem:two-mediators} and the union bound,
	\[
		\Pr[\text{some node in the path from the root to } T_{\mathsf{O}}(U_{nk}) \text{ is labeled } \bot] \leq k\delta.
	\]
	Therefore, when $T_{\mathsf{O}}$ reads $\mathsf{Gen}(U_s)$ instead of $U_{nk}$, the probability is at most $k\delta + \gamma$. By \cref{cond:model} of \cref{lem:two-mediators}, this proves the correctness of $\mathsf{S}$. The randomness complexity of $\mathsf{S}$ is just the seed length of $\mathsf{Gen}$, which is indeed $n + O(k \log |\Sigma| + (\log k) \log(1/\gamma)) = n + O(k \log(d + 1) + (\log k) \log(1/\gamma))$. The total runtime of $\mathsf{S}$ is clearly $\poly(n, k, d, \log(1/\epsilon), \log(1/\gamma))$.\footnote{We assume here that our computational model allows the necessary arithmetic and rounding of \cref{sec:base-construction} to be performed efficiently, even if the owner chooses an $f_i$ that outputs vectors whose coordinates are very large numbers.} \qed
	
	\section{Variant stewards} \label{sec:variants}
	
	\begin{thm} \label{thm:computationally-inefficient}
		For any $n, k, d \in \N$, for any $\epsilon, \delta, \gamma > 0$, there exists a (computationally inefficient) one-query $(O(\epsilon d), k\delta + \gamma)$-steward for $k$ adaptively chosen $(\epsilon, \delta)$-concentrated functions $f_1, \dots, f_k: \{0, 1\}^n \to \R^d$ with randomness complexity
		\[
		n + k \log(d + 2) + 2\log(1/\gamma) + O(1).
		\]
	\end{thm}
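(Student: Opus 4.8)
The plan is to follow the proof of \cref{thm:main} essentially verbatim, substituting the optimal \emph{nonconstructive} PRG for block decision trees in place of the explicit one from \cref{thm:block-decision-tree-prg}. First I would dispose of the degenerate case $\delta \geq 1/2$ exactly as in \cref{sec:main}: then either $k = 1$, in which case the trivial one-query steward that returns $f_1(X)$ for a uniform $X \in \{0,1\}^n$ has error $\epsilon \leq O(\epsilon d)$ and failure probability $\delta = k\delta \leq k\delta + \gamma$ while using $n$ random bits, or $k\delta \geq 1$, in which case the steward that always outputs $0$ has failure probability $1 \leq k\delta + \gamma$ and uses no randomness. So assume $\delta < 1/2$.

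Next I would set up the steward just as in \cref{sec:main}. Let $\mathsf{S}_0$ be the shifting and rounding steward of \cref{sec:base} and let $\Sigma = [d+1] \cup \{\bot\}$, so that $|\Sigma| = d + 2$. By the standard probabilistic-method construction recalled in \cref{apx:prg}, there is a $\gamma$-PRG $\mathsf{Gen}: \{0,1\}^s \to \{0,1\}^{nk}$ for $(k,n,\Sigma)$ block decision trees with seed length
\[
s = n + k\log|\Sigma| + 2\log(1/\gamma) + O(1) = n + k\log(d+2) + 2\log(1/\gamma) + O(1).
\]
Define the steward by $\mathsf{S}(X) \stackrel{\text{def}}{=} \mathsf{S}_0(\mathsf{Gen}(X))$; it is a one-query steward because $\mathsf{S}_0$ is, and it is computationally inefficient only because $\mathsf{Gen}$ is not explicit.

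The correctness analysis is then identical to that of \cref{thm:main}. Fixing any owner $\mathsf{O}$ (without loss of generality deterministic, since a randomized owner is a mixture of deterministic ones), I would apply \cref{lem:two-mediators} to obtain the certification tree $T_{\mathsf{O}}$. By \cref{cond:low-failure-prob} of \cref{lem:two-mediators} and a union bound over the $k$ internal nodes on the root-to-leaf path, the probability that the path to $T_{\mathsf{O}}(U_{nk})$ contains a $\bot$ node is at most $k\delta$; because $\mathsf{Gen}$ is a $\gamma$-PRG for these trees, replacing $U_{nk}$ by $\mathsf{Gen}(U_s)$ changes this probability by at most $\gamma$. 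Whenever the path contains no $\bot$ node, \cref{cond:model} of \cref{lem:two-mediators} gives $\max_i \|Y_i - \mu_i\|_{\infty} \leq O(\epsilon d)$ in $\mathsf{O} \leftrightarrow \mathsf{S}_0(\mathsf{Gen}(X))$, which by definition is $\mathsf{O} \leftrightarrow \mathsf{S}(X)$. Hence $\mathsf{S}$ is an $(O(\epsilon d), k\delta + \gamma)$-steward with randomness complexity $s$, as claimed. There is no real technical obstacle here: the only content beyond the proof of \cref{thm:main} is invoking the nonconstructive block-decision-tree PRG, and the one point to double-check is that this PRG fools block decision trees in the total-variation sense demanded by \cref{lem:two-mediators} — which it does, since the probabilistic-method argument fixes a single tree and union-bounds over its leaves.
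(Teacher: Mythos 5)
Your proof is correct and follows exactly the same route as the paper's one-line proof sketch: mimic the proof of \cref{thm:main} but substitute the nonconstructive PRG of \cref{apx:prg} in place of the explicit INW-style construction. You also correctly compute $|\Sigma| = d+2$ and fill in the degenerate-case details, so there is nothing missing.
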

	\begin{proof}[Proof sketch]
		Mimic the proof of \cref{thm:main}, but use a PRG obtained by the standard nonconstructive argument (\cref{apx:prg}).
	\end{proof}
	
	The shifting and rounding steward $\mathsf{S}_0$ can be generalized to achieve a tradeoff between low error $\epsilon'$ and low branching factor $|\Sigma|$ of the certification tree $T_{\mathsf{O}}$. In particular, for any factorization $d = d_0 d_1$, one can reduce the error from $O(\epsilon d)$ down to $O(\epsilon d_0)$ at the cost of increasing the branching factor of $T_{\mathsf{O}}$ from $d + 2$ up to $(d_0 + 1)^{d_1} + 1$. This is achieved by simply partitioning the $d$ coordinates into $d_1$ groups of $d_0$ coordinates and shifting each group individually; the details are in \cref{apx:generalized-shifting-and-rounding}. This immediately implies the following generalization of \cref{thm:main}, which achieves a tradeoff between error and randomness complexity:
	
	\begin{thm} \label{thm:prg-steward}
		For any $n, k, d, d_0 \in \N$ with $d_0 \leq d$, for any $\epsilon, \delta, \gamma > 0$, there exists a one-query $(O(\epsilon d_0), k\delta + \gamma)$-steward for $k$ adaptively chosen $(\epsilon, \delta)$-concentrated functions $f_1, \dots, f_k: \{0, 1\}^n \to \R^d$ with randomness complexity
		\[
			n + O\left(\frac{kd \log(d_0 + 1)}{d_0} + (\log k) \log(1/\gamma)\right).
		\]
		The total running time of the steward is $\poly(n, k, d, \log(1/\epsilon), \log(1/\gamma))$.
	\end{thm}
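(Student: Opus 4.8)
The plan is to run the proof of \cref{thm:main} essentially verbatim, replacing $\mathsf{S}_0$ by a \emph{grouped} shifting-and-rounding steward $\mathsf{S}_0^{(d_0)}$ and enlarging the block-decision-tree alphabet accordingly. First I would define $\mathsf{S}_0^{(d_0)}$: set $d_1 = \lceil d / d_0 \rceil$ and partition $[d]$ into consecutive blocks $B_1, \dots, B_{d_1}$ of size at most $d_0$; partition $\R$ into half-open intervals of length $(d_0 + 1) \cdot 2\epsilon$ and let $\mathsf{Round}(w)$ be the midpoint of the interval containing $w$. In round $i$, query $f_i$ at a fresh uniform $X_i$ to obtain $W_i = f_i(X_i)$; for each block $B_\ell$ \emph{independently}, find $\Delta_i^{(\ell)} \in [d_0 + 1]$ such that for every $j \in B_\ell$ the segment $[W_{ij} + (2\Delta_i^{(\ell)} - 1)\epsilon, \, W_{ij} + (2\Delta_i^{(\ell)} + 1)\epsilon]$ lies inside a single partition interval; then output $Y_{ij} = \mathsf{Round}(W_{ij} + 2\Delta_i^{(\ell)}\epsilon)$ for $j \in B_\ell$. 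Existence of a good $\Delta_i^{(\ell)}$ is exactly \cref{lem:delta-exists} applied within a block: picking $\Delta$ uniformly in $[d_0+1]$, each of the $\le d_0$ coordinates in $B_\ell$ straddles a boundary with probability $1/(d_0+1)$, so the union bound over the block leaves a good choice. Here it is essential that the interval length scales with $d_0 + 1$ (not $d + 1$): a smaller scaling would break this union bound. When $d_0 = d$ this reduces to $\mathsf{S}_0$, and when $d_0 = 1$ it matches the $O(\epsilon)$-error row of \cref{fig:stewards}.

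Next I would re-prove \cref{lem:two-mediators} for $\mathsf{S}_0^{(d_0)}$ with alphabet $\Sigma = [d_0+1]^{d_1} \cup \{\bot\}$, so $|\Sigma| = (d_0+1)^{d_1} + 1$; a non-$\bot$ symbol records the tuple of per-block shifts. The auxiliary steward $\mathsf{S}_0'$ is as before — it learns all of $f_i$, computes $\mu(f_i)$, and records the lexicographically smallest tuple $\hat\Delta_i = (\hat\Delta_i^{(1)}, \dots, \hat\Delta_i^{(d_1)})$ that is ``$f_i$-compatible'' with $Y_i$, meaning $Y_{ij} = \mathsf{Round}(\mu(f_i)_j + 2\hat\Delta_i^{(\ell)}\epsilon)$ for all $j \in B_\ell$, or $\bot$ if none exists. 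The memoryless/history-determined argument shows $T_{\mathsf O}$ is realizable as a $(k, n, \Sigma)$ block decision tree. \Cref{cond:low-failure-prob} holds because with probability $\ge 1-\delta$ we have $|W_{ij} - \mu(f_i)_j| \le \epsilon$ for all $j$, and then within each block $W_{ij} + 2\Delta_i^{(\ell)}\epsilon$ and $\mu(f_i)_j + 2\Delta_i^{(\ell)}\epsilon$ lie in the same partition interval, so the shift tuple actually used is $f_i$-compatible. \Cref{cond:model} holds because every $\hat\Delta_i^{(\ell)} \le d_0 + 1$, so each coordinate suffers a shift of at most $2(d_0+1)\epsilon$ and a rounding error of at most $(d_0+1)\epsilon$, giving $\|Y_i - \mu(f_i)\|_\infty \le 3(d_0+1)\epsilon$ and hence $\|Y_i - \mu_i\|_\infty \le (3d_0 + 5)\epsilon = O(\epsilon d_0)$ after absorbing the $\|\mu(f_i) - \mu_i\|_\infty \le 2\epsilon$ slack.

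Finally, exactly as in \cref{sec:main}, set $\mathsf{S}(X) = \mathsf{S}_0^{(d_0)}(\mathsf{Gen}(X))$ where $\mathsf{Gen}$ is the $\gamma$-PRG for $(k, n, \Sigma)$ block decision trees of \cref{thm:block-decision-tree-prg}. The union bound gives probability at most $k\delta$ that a $\bot$ appears on the path to $T_{\mathsf O}(U_{nk})$, and switching the input to $\mathsf{Gen}(U_s)$ costs an additive $\gamma$, so by \cref{cond:model} the steward has error $O(\epsilon d_0)$ and failure probability $k\delta + \gamma$. Its randomness complexity is the seed length of $\mathsf{Gen}$, namely $n + O(k\log|\Sigma| + (\log k)\log(1/\gamma))$; since $\log|\Sigma| = O(d_1 \log(d_0+1)) = O((d/d_0)\log(d_0+1))$ using $d_1 \le d/d_0 + 1 \le 2d/d_0$ (valid as $d_0 \le d$), this is $n + O(kd\log(d_0+1)/d_0 + (\log k)\log(1/\gamma))$ as claimed, and the running time is $\poly(n, k, d, \log(1/\epsilon), \log(1/\gamma))$. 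The only genuinely new ingredient is the block decomposition; the load-bearing but routine step is re-verifying the certification-tree lemma for $\mathsf{S}_0^{(d_0)}$, where the two things to watch are that the interval length scales with $d_0+1$ so \cref{lem:delta-exists} still closes within a block, and that the alphabet is large enough to record one shift per block.
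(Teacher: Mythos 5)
Your proposal is correct and matches the paper's own route (which the paper relegates to \cref{apx:generalized-shifting-and-rounding}): partition $[d]$ into blocks of size $d_0$, shift each block independently via \cref{lem:delta-exists} with intervals of length $2(d_0+1)\epsilon$, take $\Sigma = [d_0+1]^{d_1} \cup \{\bot\}$ as the certification-tree alphabet, re-verify \cref{lem:two-mediators} to get error $(3d_0+5)\epsilon$, and plug into the proof of \cref{thm:main} to obtain seed length $n + O(k\log|\Sigma| + (\log k)\log(1/\gamma))$. The only cosmetic difference is that the paper assumes $d_0 \mid d$ ``without loss of generality,'' while you handle the remainder directly with $d_1 = \lceil d/d_0 \rceil$ and the bound $d_1 \le 2d/d_0$.
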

	
	Recall from the introduction that if $f_1, \dots, f_k$ are chosen nonadaptively, then we can reuse randomness and just union bound over the $k$ functions. We now show that we can reuse the randomness in $\mathsf{S}_0$, as long as we union bound over \emph{all the nodes} in the certification tree. (This is similar to the analysis of the Saks-Zhou steward, except that in the Saks-Zhou case, the branching factor of the tree is just $1$. It is also similar to the analysis in \cite{bh15}.) This gives a steward with very low randomness complexity but large failure probability:
	
	\begin{thm} \label{thm:union-bound}
		For any $n, k, d, d_0 \in \N$ with $d_0 \leq d$, for any $\epsilon, \delta > 0$, there exists a one-query $(O(\epsilon d_0), \delta')$-steward for $k$ adaptively chosen $(\epsilon, \delta)$-concentrated functions $f_1, \dots, f_k: \{0, 1\}^n \to \R^d$ with randomness complexity $n$, where
		\[
		\delta' \leq \exp\left(O\left(\frac{kd \log(d_0 + 1)}{d_0}\right)\right) \cdot \delta.
		\]
		The total running time of the steward is $\poly(n, k, d, \log(1/\epsilon))$.
	\end{thm}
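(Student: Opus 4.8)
The plan is to reuse a single random string in every round of the generalized shifting-and-rounding steward, paying for the reuse not in randomness but in failure probability, by union-bounding over \emph{all} nodes of the associated certification tree. Concretely, fix a factorization $d = d_0 d_1$ (padding the dimension up to $d_0 \lceil d/d_0 \rceil$ if $d_0 \nmid d$, which affects only the constants), and let $\mathsf{S}_0$ be the generalized shifting-and-rounding steward of \cref{apx:generalized-shifting-and-rounding} for this factorization; recall it has error $O(\epsilon d_0)$ and admits certification trees over an alphabet $\Sigma$ with $|\Sigma| = (d_0 + 1)^{d_1} + 1$. The new steward simply picks one $X \in \{0,1\}^n$ uniformly at random and runs $\mathsf{S}_0$ with $X_i = X$ in every round $i$. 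This uses exactly $n$ random bits, remains one-query, and runs in time $\poly(n, k, d, \log(1/\epsilon))$.

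It remains to bound the failure probability. As in the proof of \cref{thm:main}, we may assume $\mathsf{O}$ is deterministic, since a randomized owner is a mixture of deterministic ones and the bound below holds for each; we may also assume $\delta < 1/2$, since otherwise $\exp(O(kd\log(d_0+1)/d_0)) \cdot \delta \geq 1$ and there is nothing to prove. Apply the generalized certification tree lemma (the analogue of \cref{lem:two-mediators}) to obtain a $(k, n, \Sigma)$ block decision tree $T_{\mathsf{O}}$ such that $\Pr_{X \in \{0,1\}^n}[v(X) = \bot] \leq \delta$ for every internal node $v$, and such that whenever the root-to-leaf path to $T_{\mathsf{O}}(X_1, \dots, X_k)$ contains no $\bot$ node, the error $\max_i \|Y_i - \mu_i\|_{\infty}$ in $\mathsf{O} \leftrightarrow \mathsf{S}_0(X_1, \dots, X_k)$ is $O(\epsilon d_0)$. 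Both properties are pointwise in the relevant sense, so they remain meaningful when we later instantiate $X_1 = \dots = X_k = X$.

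Now $T_{\mathsf{O}}$ has height $k$ and branching factor $|\Sigma|$, hence at most $\sum_{i=0}^{k-1} |\Sigma|^i \leq |\Sigma|^k$ internal nodes. By the union bound over these nodes,
\[
	\Pr_{X}\bigl[\text{some internal node } v \text{ of } T_{\mathsf{O}} \text{ has } v(X) = \bot\bigr] \leq |\Sigma|^k \cdot \delta.
\]
When this event fails, in particular the root-to-leaf path to $T_{\mathsf{O}}(X, \dots, X)$ contains no $\bot$, so the error in $\mathsf{O} \leftrightarrow \mathsf{S}_0(X, \dots, X)$ is $O(\epsilon d_0)$. Therefore the construction is a one-query $(O(\epsilon d_0), \delta')$-steward with
\[
	\delta' \leq |\Sigma|^k \cdot \delta = \bigl((d_0 + 1)^{d_1} + 1\bigr)^k \cdot \delta \leq \exp\!\bigl(O(k d_1 \log(d_0 + 1))\bigr) \cdot \delta = \exp\!\left(O\!\left(\tfrac{k d \log(d_0 + 1)}{d_0}\right)\right) \cdot \delta,
\]
using $d_1 = d/d_0$ together with $\bigl((d_0+1)^{d_1}+1\bigr)^k \leq \bigl(2(d_0+1)^{d_1}\bigr)^k$ and $d_1, \log(d_0+1) \geq 1$.

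The only real subtlety is this: because the same $X$ is reused across rounds, the nodes that appear on the realized computation path of $T_{\mathsf{O}}$ form a random sequence rather than a fixed one, so the naive union bound over just the $k$ path-nodes (which would give $k\delta$, as in the non-adaptive or INW-based settings) is invalid; we must instead union-bound over all $|\Sigma|^k$ internal nodes. The payoff is that this price is paid entirely in the failure probability and not at all in the randomness complexity, which is exactly the tradeoff this variant is meant to realize. The one genuine input from elsewhere is the generalized certification tree lemma, whose proof is a routine extension of \cref{lem:two-mediators} carried out in \cref{apx:generalized-shifting-and-rounding}.
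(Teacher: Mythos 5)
Your proposal is correct and follows essentially the same route as the paper's own proof: reuse a single uniform $X$ across all rounds, invoke the generalized certification-tree lemma (\cref{lem:generalized-two-mediators}), and union-bound over all internal nodes of $T_{\mathsf{O}}$ rather than just the $k$ nodes on the realized path. Your explicit count $\sum_{i=0}^{k-1}|\Sigma|^i \leq |\Sigma|^k \leq \exp(O(kd\log(d_0+1)/d_0))\cdot\delta$ is just a spelled-out version of the step the paper compresses into ``by the union bound over all nodes,'' and the subtlety you flag (that the path-nodes are $X$-dependent, so one cannot union-bound only over them) is exactly why the all-nodes bound is the right one.
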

	
	\begin{proof}
		Assume without loss of generality that $d$ is a multiple of $d_0$ and that $\delta < 1/2$. The steward is $\mathsf{S}(X) \stackrel{\text{def}}{=} \mathsf{S}_0(X, X, X, \dots, X)$, where $\mathsf{S}_0$ is the steward of \cref{sec:base} generalized as in \cref{apx:generalized-shifting-and-rounding}. To prove correctness, fix any deterministic owner $\mathsf{O}$. Let $T_{\mathsf{O}}$ be the block decision tree of \cref{lem:generalized-two-mediators}. By \cref{cond:generalized-low-failure-prob} of \cref{lem:generalized-two-mediators}, from any internal node, if $T_{\mathsf{O}}$ reads $X$, the probability that it moves to the $\bot$ child is at most $\delta$. Therefore, by the union bound over all nodes, the probability that there is some node from which $T_{\mathsf{O}}$ would move to the $\bot$ child upon reading $X$ is at most the value $\delta'$ in the lemma statement. By \cref{cond:generalized-model} of \cref{lem:generalized-two-mediators}, if no node in $T_{\mathsf{O}}$ takes a $\bot$ transition upon reading $X$, then  $\max_i \|\mu_i - Y_i\|_{\infty} \leq O(\epsilon d_0)$ in $\mathsf{O} \leftrightarrow \mathsf{S}(X)$.
	\end{proof}
	
	\section{Applications} \label{sec:applications}
	\subsection{Acceptance probabilities of Boolean circuits} \label{sec:circuit}
	
	A \emph{$(\epsilon, \delta)$-sampler} for Boolean functions on $n$ bits is a randomized oracle algorithm $\mathsf{Samp}$ such that for any Boolean function $C: \{0, 1\}^n \to \{0, 1\}$, if we let $\mu(C) \stackrel{\text{def}}{=} 2^{-n} \sum_x C(x)$, then
	\[
		\Pr[|\mathsf{Samp}^C - \mu(C)| > \epsilon] \leq \delta.
	\]
	We will use a near-optimal sampler constructed by Goldreich and Wigderson \cite{gw97}:
	\begin{lem}[{\cite[Theorem 6.5]{gw97}}] \label{lem:gw-boolean-sampler}
		For every $n \in \N$ and every $\epsilon, \delta > 0$, there is an $(\epsilon, \delta)$-sampler for Boolean functions on $n$ bits that makes $O(\log(1/\delta)/\epsilon^2)$ queries, uses $n + O(\log(1/\delta))$ random bits, and runs in time $\poly(n, 1/\epsilon, \log(1/\delta))$.
	\end{lem}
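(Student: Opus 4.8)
Since this is a known result, the plan is to recall the construction: combine a base sampler that uses exactly $n$ random bits and has constant failure probability with a failure-probability amplification costing only $O(\log(1/\delta))$ additional random bits, generating the repetitions via a random walk on an expander and taking a median. For the base sampler, first fix a strongly explicit $D$-regular expander $G$ on the vertex set $\{0,1\}^n$ whose second-largest eigenvalue in absolute value satisfies $\lambda(G)\le\epsilon/2$; by the Ramanujan bound such a graph exists with $D=O(1/\epsilon^2)$, and a strongly explicit (near-Ramanujan) family can be used after standard adjustments to fit the vertex set $\{0,1\}^n$. The base sampler $\mathsf{Samp}_0$ picks a single uniformly random vertex $v\in\{0,1\}^n$ --- this is its only randomness, exactly $n$ bits --- queries $C$ at all $D$ neighbors $u_1,\dots,u_D$ of $v$ in $G$, and outputs $\frac1D\sum_{j=1}^{D}C(u_j)$. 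Writing $f=C-\mu(C)\cdot\mathbf 1$ and letting $A$ be the normalized adjacency operator of $G$, the output equals $\mu(C)+(Af)(v)$; since $f\perp\mathbf 1$ we get $\E_v[(Af)(v)^2]=\frac1{2^n}\|Af\|_2^2\le\lambda(G)^2\cdot\frac1{2^n}\|f\|_2^2\le\lambda(G)^2/4\le\epsilon^2/16$ (using $\|f\|_2^2\le 2^n/4$ since $C$ is Boolean), so by Markov's inequality $\Pr_v[|\mathsf{Samp}_0^C(v)-\mu(C)|>\epsilon]\le 1/16$. Thus $\mathsf{Samp}_0$ is an $(\epsilon,1/16)$-sampler using $n$ random bits and $D=O(1/\epsilon^2)$ queries.

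Next, fix a strongly explicit constant-degree expander $H$ on $\{0,1\}^n$ (the seed space of $\mathsf{Samp}_0$) with constant spectral gap, set $t=O(\log(1/\delta))$, and take a length-$t$ random walk $v_0,v_1,\dots,v_t$ on $H$ from a uniformly random start; this costs $n+O(\log(1/\delta))$ random bits and visits $t+1$ vertices. The final sampler runs $\mathsf{Samp}_0^C$ with seed $v_i$ for each $i$ to obtain estimates $\hat{\mu}_0,\dots,\hat{\mu}_t$, and outputs $\median_i\hat{\mu}_i$; its total query complexity is $(t+1)\cdot D=O(\log(1/\delta)/\epsilon^2)$, as required. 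For correctness, let $B=\{v:|\mathsf{Samp}_0^C(v)-\mu(C)|>\epsilon\}$, so $|B|/2^n\le 1/16$ by the analysis above. If strictly fewer than half of the $v_i$ lie in $B$, then strictly more than half of the $\hat{\mu}_i$ lie in $[\mu(C)-\epsilon,\mu(C)+\epsilon]$, hence so does their median. By the expander Chernoff bound (e.g.\ Gillman's bound) applied to the walk on $H$, since $|B|/2^n<1/2$ and $H$ has constant spectral gap, the probability that at least half of the $v_i$ land in $B$ is at most $2\exp(-\Omega(t))$, which is at most $\delta$ once the constant hidden in $t=O(\log(1/\delta))$ is large enough. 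Strong explicitness of $G$ and $H$ makes the whole procedure run in time $\poly(n,1/\epsilon,\log(1/\delta))$.

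The one genuinely delicate point is obtaining randomness complexity $n+O(\log(1/\delta))$ with \emph{no} dependence on $\epsilon$, rather than the $\Theta(n)+O(\log(1/\delta))$ of a pairwise-independent base sampler (which needs $2n$ bits) or the $n+O(\epsilon^{-2}\log(1/\delta))$ of a plain expander-walk sampler. This is exactly what forces the two design choices above: the base sampler must spend exactly $n$ bits, which the neighborhood-averaging trick achieves since its only randomness is the choice of center vertex, and the amplification must reuse those $n$ bits along the walk while spending only $O(1)$ bits per step. The remaining, more routine technical work is producing a strongly explicit expander on $\{0,1\}^n$ of degree $O(1/\epsilon^2)$ with $\lambda\le\epsilon/2$ --- which is where the optimal query bound $O(\log(1/\delta)/\epsilon^2)$ ultimately comes from --- and this is precisely the kind of near-optimal construction supplied by \cite{gw97}.
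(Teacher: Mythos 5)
The paper does not prove this lemma; it simply cites \cite[Theorem 6.5]{gw97}, and your reconstruction---a degree-$O(1/\epsilon^2)$ near-Ramanujan neighborhood-averaging base sampler using exactly $n$ bits, amplified by an expander-walk median---is precisely the Goldreich--Wigderson construction, with a correct analysis (the $\E_v[(Af)(v)^2]\le\lambda^2/4$ bound and the expander Chernoff step are both right). Same approach as the cited source.
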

	
	\begin{proof}[Proof of \cref{cor:circuit}]
		Let $c$ be the constant under the $O(\cdot)$ of the error $\epsilon'$ in the steward of \cref{thm:main}. When given parameters $n, k, \epsilon, \delta$, let $\mathsf{Samp}$ be the Boolean $(\epsilon/c, \delta/(2k))$-sampler of \cref{lem:gw-boolean-sampler}, and say it uses $m$ coins. Let $\mathsf{S}$ be the $(\epsilon, \delta)$-steward of \cref{thm:main} for $k$ adaptively chosen $(\epsilon/c, \delta/(2k))$-concentrated functions $f_1, \dots, f_k: \{0, 1\}^m \to \R$. (So $\gamma = \delta/2$.) When given circuit $C_i$, define $f_i(X) = \mathsf{Samp}^{C_i}(X)$, i.e. the output $\mathsf{Samp}^{C_i}$ with randomness $X$. Give $f_i$ to $\mathsf{S}$, and output the value $Y_i$ that it returns.
		
		Proof of correctness: The definition of a sampler implies that each $f_i$ is $(\epsilon/c, \delta/(2k))$-concentrated at $\mu(C_i)$. Furthermore, each $f_i$ is defined purely in terms of $C_i$, which is chosen based only on $Y_1, \dots, Y_{i - 1}$. Therefore, the steward guarantee implies that with probability $1 - \delta$, every $Y_i$ is within $\pm \epsilon$ of $\mu(C_i)$.
		
		Randomness complexity analysis: The number of bits $m$ used by the sampler is $n + O(\log(k/\delta))$. Therefore, the number of bits used by the steward is
		\[
			n + O(\log(k/\delta)) + O(k + (\log k)\log(1/\delta)) = n + O(k + (\log k)\log(1/\delta)).
		\]
		
		Runtime analysis: The runtime of the steward is $\poly(m, k, \log(1/\gamma)) = \poly(n, k, \log(1/\delta))$. The runtime of the sampler is $\poly(n, 1/\epsilon, \log k, \log(1/\delta))$. The time required to evaluate each query of the sampler in round $i$ is $O(\text{size}(C_i))$ (assuming we work with a suitable computational model and a suitable encoding of Boolean circuits.) The number of queries that the sampler makes in each round is $O(\log(k/\delta)/\epsilon^2)$. Therefore, the total runtime of this algorithm is
		\[
			O\left(\frac{\log k + \log(1/\delta)}{\epsilon^2} \cdot \sum_{i = 1}^k \text{size}(C_i)\right) + \poly(n, k, 1/\epsilon, \log(1/\delta)). \qedhere
		\]
	\end{proof}
	
	\subsection{Simulating a $\mathbf{promise\mhyphen BPP}$ oracle} \label{sec:promise-bpp}
	
	\begin{thm} \label{thm:promise-bpp}
		Suppose a search problem $\Pi$ can be solved by a deterministic $\mathbf{promise\mhyphen BPP}$-oracle algorithm that runs in time $T$ and makes $k$ queries, and suppose that (regardless of previous oracle responses) each query of this algorithm can be decided by a randomized algorithm that runs in time $T'$, uses $n$ coins, and has failure probability $1/3$. Then for any $\delta$, $\Pi$ can be solved by a randomized (non-oracle) algorithm that runs in time
		\[
			T + O(T' \cdot k \log(k/\delta)) + \poly(n, k, \log(1/\delta)),
		\]
		has randomness complexity
		\[
			n + O(k + (\log k) \log(1/\delta)),
		\]
		and has failure probability $\delta$.
	\end{thm}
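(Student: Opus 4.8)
The plan is to reduce the simulation of the $\mathbf{promise\mhyphen BPP}$ oracle to estimating acceptance probabilities and then invoke (the proof of) \cref{cor:circuit}. Let $\mathsf{A}$ be the given deterministic oracle algorithm. The $i$-th query $x_i$ that $\mathsf{A}$ makes is determined by the input and by the oracle responses to queries $1, \dots, i - 1$. Let $M$ be the randomized decision procedure for the queries, running in time $T'$ with $n$ coins, and define $f_i(r) = M(x_i, r) \in \{0, 1\}$ and $\mu_i = \Pr_r[M(x_i, r) = 1]$. If $x_i$ satisfies the $\mathbf{promise\mhyphen BPP}$ promise, then $\mu_i \geq 2/3$ when the true answer is ``yes'' and $\mu_i \leq 1/3$ when it is ``no''. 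Our simulation runs $\mathsf{A}$, and in round $i$ it uses the algorithm of \cref{cor:circuit} with error parameter $\epsilon = 1/6$ and failure parameter $\delta$ ---treating the Boolean function $f_i$ as the circuit $C_i$--- to obtain an estimate $Y_i$ of $\mu_i$; it then answers query $i$ with ``yes'' if $Y_i > 1/2$ and ``no'' otherwise, and passes this answer to $\mathsf{A}$. (Strictly, \cref{cor:circuit} is stated for circuits; here each evaluation of $f_i$ costs time $T'$, so I would reuse its proof ---the Goldreich--Wigderson sampler of \cref{lem:gw-boolean-sampler} composed with the steward of \cref{thm:main}--- with $\mathrm{size}(C_i)$ replaced by $T'$.)

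For correctness, the essential ingredient is Moser's convention that a $\mathbf{promise\mhyphen BPP}$ oracle may answer promise-violating queries arbitrarily, so that an algorithm which ``solves'' $\Pi$ must output a correct answer for \emph{every} such sequence of responses. Note first that $f_i$ is an arbitrary Boolean function of $r$ regardless of whether $x_i$ satisfies the promise, so \cref{lem:gw-boolean-sampler} applies and the sampler output is concentrated at $\mu_i$; moreover $f_i$ depends only on $Y_1, \dots, Y_{i-1}$, so the adaptive guarantee of \cref{thm:main} (hence of \cref{cor:circuit}) is in force. Thus with probability at least $1 - \delta$ we have $|Y_i - \mu_i| \leq 1/6$ for all $i$ simultaneously. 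Condition on this event. For every query $x_i$ that satisfies the promise, $Y_i$ lies on the correct side of $1/2$, so our answer is the unique valid oracle response; for every query that violates the promise, our answer is \emph{some} bit, which is a valid response by Moser's convention. Hence the answers we feed to $\mathsf{A}$ form a legitimate transcript of the promise oracle, and therefore $\mathsf{A}$ outputs a correct solution to $\Pi$. So the overall failure probability is at most $\delta$.

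The resource bounds then follow by bookkeeping exactly as in the proof of \cref{cor:circuit}. Running $\mathsf{A}$ costs time $T$; in each round the sampler makes $O(\log(k/\delta))$ queries (since $\epsilon = 1/6$ is constant), each costing $O(T')$, for a total of $O(T' \cdot k \log(k/\delta))$; and the steward and sampler overhead is $\poly(n, k, \log(1/\delta))$. The sampler uses $n + O(\log(k/\delta))$ coins, so composing with the steward gives randomness complexity $n + O(\log(k/\delta)) + O(k + (\log k)\log(1/\delta)) = n + O(k + (\log k)\log(1/\delta))$. I expect the only delicate step to be the correctness argument: one must check that the adaptively chosen functions $f_i$ are genuinely functions of the previous outputs alone (so that the steward's adaptive guarantee applies) and that promise-violating queries ---on which neither concentration at a meaningful value nor a correct answer is promised--- do no harm precisely because Moser's model already tolerates arbitrary oracle behavior on them. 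The rest is routine.
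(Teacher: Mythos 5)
Your proposal follows essentially the same route as the paper's proof: reduce to estimating acceptance probabilities of the black-box query procedure, feed these through the algorithm of \cref{cor:circuit} (Goldreich--Wigderson sampler composed with the steward of \cref{thm:main}), and threshold at $1/2$, with Moser's convention absorbing the promise-violating queries. One small slip: with $\epsilon = 1/6$, a yes-instance with $\mu_i = 2/3$ could yield $Y_i = 1/2$ exactly, which your rule ``yes iff $Y_i > 1/2$'' would misclassify; take any $\epsilon < 1/6$ (the paper uses $\epsilon = 1/10$) to make the threshold argument airtight.
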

	
	(Recall that search problems generalize decision problems and function problems. In reality, the theorem generalizes to just about any kind of ``problem'', but we restrict ourselves to search problems for concreteness.) The theorem can easily be extended to randomized oracle algorithms by considering the problem of executing the randomized oracle algorithm using a given randomness string.
	
	As a reminder, as discussed in \cref{sec:pseudodeterminism}, \cref{thm:promise-bpp} would be trivial if it involved a $\mathbf{BPP}$ oracle instead of a $\mathbf{promise\mhyphen BPP}$ oracle. Indeed, in the $\mathbf{BPP}$ case, the randomness can be reduced to just $n + O(\log k + \log(1/\delta))$. This is because a $\mathbf{BPP}$ algorithm is pseudodeterministic, so the randomness can be safely reused from one query to the next. A $\mathbf{promise\mhyphen BPP}$ algorithm is not pseudodeterministic in general -- it is only guaranteed to be pseudodeterministic on inputs that satisfy the promise. 
	
	\begin{proof}[Proof sketch of \cref{thm:promise-bpp}]
		Let $\mathsf{B}$ be the algorithm of \cref{cor:circuit} with $\epsilon = 1/10$ and the desired failure probability $\delta$. When the oracle algorithm makes query $i$, define $f_i(X)$ to be the value outputted by the $\mathbf{promise\mhyphen BPP}$ algorithm on that query string using randomness $X$. Give $\mathsf{B}$ the ``circuit'' $f_i$. (The algorithm $\mathsf{B}$ treats the circuits as black boxes, so we don't need to bother implementing $f_i$ as a literal Boolean circuit; the important thing is that $f_i(X)$ can be evaluated in time $T'$.) When $\mathsf{B}$ outputs a value $Y_i$, give the oracle algorithm the response $0$ if $Y_i < 1/2$ and $1$ if $Y_i \geq 1/2$.
	\end{proof}
	
	\subsection{Simulating an $\mathbf{APP}$ oracle} \label{sec:app}
	
	Following Moser \cite{mos01}, we model oracle access to $\varphi \in \mathbf{APP}$ by requiring the oracle algorithm to provide $w \in \{0, 1\}^n$ and a unary representation of $1/\epsilon \in \N$; the oracle is guaranteed to respond with a value that is within $\pm \epsilon$ of $\varphi(w)$. From these definitions, it is easy to show, for example, that $\mathbf{BPP}^{\mathbf{APP}} = \mathbf{BPP}$. Just like we did with $\mathbf{promise\mhyphen BPP}$, we now use our steward to construct a time- and randomness-efficient simulation of any algorithm with an oracle for $\mathbf{APP}$.
	
	\begin{thm} \label{thm:app}
		Suppose $\varphi \in \mathbf{APP}$ and a search problem $\Pi$ can be solved by a deterministic $\varphi$-oracle algorithm that runs in time $T$ and makes $k$ queries $(w_1, \epsilon), \dots, (w_k, \epsilon)$ (where $w_i$ depends on previous oracle responses, but $\epsilon$ is the same for every query.) Let $c$ be the constant under the $O(\cdot)$ in the error $\epsilon'$ in \cref{thm:main}. Suppose that (regardless of the oracle responses) $\varphi(w_i)$ can be approximated to within $\pm \epsilon/c$ by a randomized algorithm that runs in time $T'$, uses $n$ coins, and has failure probability $1/3$. Then for any $\delta$, $\Pi$ can be solved by a randomized (non-oracle) algorithm that runs in time
		\[
			T + O(T' \cdot k \log(k/\delta)) + \poly(n, k, \log(1/\delta)),
		\]
		has randomness complexity
		\[
			n + O(k + (\log k) \log(1/\delta)),
		\]
		and has failure probability $\delta$.
	\end{thm}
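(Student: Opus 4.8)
The plan is to follow the argument of \cref{thm:promise-bpp} almost verbatim, replacing \cref{cor:circuit} with a direct application of \cref{thm:main} with $d = 1$, and inserting one extra step to cope with the fact that the oracle now returns an arbitrary value in $[0,1]$ rather than a bit: a randomness-efficient amplification of the per-query approximation algorithm.

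For the amplification, recall that we are given, for each query $i$, a randomized algorithm $\mathsf{A}_i$ that uses $n$ coins, runs in time $T'$, and outputs a value within $\pm \epsilon/c$ of $\varphi(w_i)$ with probability at least $2/3$. I would let $\mathsf{B}_i$ sample a random walk of length $t = O(\log(k/\delta))$ on a constant-degree expander on the vertex set $\{0,1\}^n$, evaluate $\mathsf{A}_i$ at each vertex of the walk, and output the \emph{median} of the $t$ results; such a walk is specified by $m \stackrel{\mathrm{def}}{=} n + O(\log(k/\delta))$ random bits. Calling a vertex \emph{bad} if $\mathsf{A}_i$ errs there, the bad vertices have density at most $1/3$, so the expander-walk Chernoff bound gives that with probability at least $1 - \delta/(2k)$ strictly fewer than half of the walk's vertices are bad; and whenever a strict majority of the $t$ samples lie in $[\varphi(w_i) - \epsilon/c, \varphi(w_i) + \epsilon/c]$, so does their median. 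Hence $f_i : \{0,1\}^m \to \R$ defined by $f_i(X) = \mathsf{B}_i(X)$ is $(\epsilon/c, \delta/(2k))$-concentrated at $\varphi(w_i)$, and a single evaluation of $f_i$ costs $O(T'\log(k/\delta)) + \poly(n, \log(k/\delta))$ time.

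Next I would invoke \cref{thm:main}: let $\mathsf{S}$ be its one-query steward for $k$ adaptively chosen $(\epsilon/c, \delta/(2k))$-concentrated functions $f_1, \dots, f_k : \{0,1\}^m \to \R$, with $d = 1$ and $\gamma = \delta/2$; since $c$ is the constant hidden in the $O(\cdot)$ of the steward's error, its error is at most $\epsilon$, and its failure probability is at most $k \cdot \delta/(2k) + \delta/2 = \delta$. Simulate the oracle algorithm: when it issues query $(w_i, \epsilon)$, hand $\mathsf{S}$ the black box $f_i = \mathsf{B}_i$ (which $\mathsf{S}$ queries only once, since it is one-query) and feed the returned $Y_i$ back as the oracle's response. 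Since $f_i$ is determined by $w_i$, which is determined by $Y_1, \dots, Y_{i-1}$, the steward guarantee applies: with probability $1 - \delta$, every $Y_i$ lies within $\pm\epsilon$ of $\varphi(w_i)$, hence is a legal response of an $\epsilon$-approximate $\varphi$-oracle, so the deterministic oracle algorithm solves $\Pi$. The random bits used equal the seed length of $\mathsf{S}$, namely $m + O(k + (\log k)\log(1/\gamma)) = n + O(k + (\log k)\log(1/\delta))$; and the running time is $T$ (oracle algorithm) plus $\poly(n, k, \log(1/\delta))$ (for $\mathsf{S}$ and the $k$ expander-walk setups) plus $O(T' k \log(k/\delta))$ (the $kt$ evaluations of $\mathsf{A}_i$), i.e. $T + O(T' k \log(k/\delta)) + \poly(n, k, \log(1/\delta))$.

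I expect the only part requiring genuine care beyond \cref{thm:promise-bpp} to be the amplification: one must check that a median of expander-walk samples preserves \emph{two-sided} interval concentration (not merely a one-sided tail bound), and that the expander-walk Chernoff bound achieves failure probability $\delta/(2k)$ with a walk of only $O(\log(k/\delta))$ steps, so that the leading term in the randomness count stays $n$ rather than $n \cdot \Theta(\log(k/\delta))$. Everything after that is a routine transcription of the \cref{thm:promise-bpp} argument with real-valued oracle responses replacing Boolean ones.
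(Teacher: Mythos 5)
Your proposal is correct and follows essentially the same route as the paper: the paper packages your expander-walk-plus-median amplification step as Lemma~\ref{lem:app-amplification}, Lemma~\ref{lem:averaging-sampler}, and Corollary~\ref{cor:app-amplification} (deterministic amplification for $\mathbf{APP}$), then plugs the resulting $(\epsilon/c, \delta/(2k))$-concentrated functions into the steward of \cref{thm:main} with $\gamma = \delta/2$ exactly as you do. Your inlined argument --- including the observation that a strict majority of samples in an interval forces the median into that interval, giving two-sided concentration --- matches the paper's Lemma~\ref{lem:app-amplification} precisely, and your parameter accounting agrees with the paper's.
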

	
	The proof of \cref{thm:app} is similar to the proofs of \cref{cor:circuit} and \cref{thm:promise-bpp}. The difference is that a sampler as defined previously is no longer quite the right tool for deterministic amplification; to amplify an $\mathbf{APP}$ algorithm, we are not trying to estimate the \emph{acceptance probability} of a Boolean function, but rather the point where a $[0, 1]$-valued function is \emph{concentrated}. For this, we use an \emph{averaging sampler}.
	
	An \emph{averaging $(\epsilon, \delta)$-sampler} for Boolean functions on $n$ bits is an algorithm $\mathsf{Samp}: \{0, 1\}^m \to (\{0, 1\}^n)^t$ such that for any Boolean function $C: \{0, 1\}^n \to \{0, 1\}$, if we let $\mu(C) \stackrel{\text{def}}{=} 2^{-n} \sum_x C(x)$ be the acceptance probability of $C$, then
	\[
		\Pr_{X \in \{0, 1\}^m}\left[\left|\mu(C) - \frac{1}{t} \sum_{i = 1}^t C(\mathsf{Samp}(X)_i)\right| > \epsilon\right] \leq \delta.
	\]
	(Note that an averaging sampler induces a sampler of a very specific form: query the oracle at several points and output the empirical mean.) We now show that an averaging sampler can be used to decrease the failure probability of a concentrated function by taking a \emph{median}. This observation (in a different form) is due to Bellare, Goldreich, and Goldwasser \cite{bgg93}.
	
	\begin{lem} \label{lem:app-amplification}
		Suppose $f: \{0, 1\}^n \to \R$ is $(\epsilon, \delta_0)$-concentrated at $\mu \in \R$ and $\mathsf{Samp}: \{0, 1\}^m \to (\{0, 1\}^n)^t$ is an averaging $(\epsilon', \delta)$-sampler for Boolean functions on $n$ bits, where $\epsilon' + \delta_0 < 1/2$. Define $g: \{0, 1\}^m \to \R$ by
		\[
			g(x) = \median_{i \in [t]} f(\mathsf{Samp}(x)_i).
		\]
		Then $g$ is $(\epsilon, \delta)$-concentrated at $\mu$.
	\end{lem}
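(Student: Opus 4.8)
The plan is to reduce the claim about the real-valued function $f$ to a statement about a single Boolean function, so that the sampler guarantee applies, and then to convert the sampler's conclusion back into a statement about the median. Define $C: \{0,1\}^n \to \{0,1\}$ by $C(y) = 1$ exactly when $|f(y) - \mu| > \epsilon$; this is the indicator of the ``bad'' event for $f$. By the $(\epsilon, \delta_0)$-concentration of $f$ at $\mu$, we have $\mu(C) = \Pr_y[|f(y) - \mu| > \epsilon] \leq \delta_0$.

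Next I would invoke the averaging sampler property for this particular $C$: with probability at least $1 - \delta$ over $X \in \{0,1\}^m$, the empirical mean $\frac{1}{t}\sum_{i=1}^t C(\mathsf{Samp}(X)_i)$ is within $\pm\epsilon'$ of $\mu(C)$, hence is at most $\mu(C) + \epsilon' \leq \delta_0 + \epsilon' < 1/2$ by hypothesis. Fix any such ``good'' $X$. The inequality $\frac{1}{t}\sum_i C(\mathsf{Samp}(X)_i) < 1/2$ says that strictly fewer than half of the indices $i \in [t]$ satisfy $|f(\mathsf{Samp}(X)_i) - \mu| > \epsilon$; equivalently, strictly more than half of the values $f(\mathsf{Samp}(X)_i)$ lie in the closed interval $[\mu - \epsilon, \mu + \epsilon]$.

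The last step is the elementary observation that the median of a list of $t$ reals, more than half of which lie in an interval $[a, b]$, itself lies in $[a, b]$: since more than half the values are $\le b$, the median is $\le b$, and since more than half the values are $\ge a$, the median is $\ge a$. Applying this with $a = \mu - \epsilon$, $b = \mu + \epsilon$ gives $|g(X) - \mu| \le \epsilon$ for every good $X$. Since the good $X$ have probability at least $1 - \delta$, we conclude $\Pr_X[|g(X) - \mu| > \epsilon] \le \delta$, i.e.\ $g$ is $(\epsilon, \delta)$-concentrated at $\mu$, as desired.

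I do not expect any real obstacle here; the only points requiring mild care are (i) choosing the threshold in the definition of $C$ to match the concentration error exactly so that $\mu(C) \le \delta_0$, and (ii) tracking the strict inequality $\delta_0 + \epsilon' < 1/2$ so that ``strictly more than half'' holds, which is precisely what makes the median argument go through.
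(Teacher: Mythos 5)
Your proof is correct and is essentially identical to the paper's: you use the complementary indicator (the ``bad'' set rather than the ``good'' set), but the averaging-sampler reduction and the median argument are the same. No gaps.
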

	
	\begin{proof}
		Let $C: \{0, 1\}^n \to \{0, 1\}$ be the indicator function for $\{x : |f(x) - \mu| \leq \epsilon\}$. Then by the concentration of $f$, $2^{-n} \sum_x C(x) \geq 1 - \delta_0$. Therefore, by the averaging sampler condition, with probability $1 - \delta$ over $x$, $\frac{1}{t} \sum_i C(\mathsf{Samp}(X)_i) \geq 1 - \delta_0 - \epsilon' > 1/2$. If this is the case, then more than half of the values $f(\mathsf{Samp}(x)_1), \dots, f(\mathsf{Samp}(x)_t)$ are within $\pm \epsilon$ of $\mu$, which implies that their median is within $\pm \epsilon$ of $\mu$.
	\end{proof}
	
	The following lemma gives the parameters achieved by the famous ``random walk on expanders'' averaging sampler; see, e.g., \cite[Corollary 4.41]{vad12}.
	
	\begin{lem} \label{lem:averaging-sampler}
		For every $n \in \N$ and every $\epsilon, \delta > 0$, there is an averaging $(\epsilon, \delta)$-sampler for Boolean functions on $n$ bits with $m \leq n + O(\log(1/\delta)/\epsilon^2)$ and $t \leq O(\log(1/\delta)/\epsilon^2)$, computable in time $\poly(n, 1/\epsilon, \log(1/\delta))$.
	\end{lem}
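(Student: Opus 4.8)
The plan is to use the textbook ``random walk on an expander'' sampler together with the expander Chernoff bound. First I would fix, for every $n$, an explicit $D$-regular expander $H_n$ on vertex set $\{0,1\}^n$ with $D = O(1)$ and normalized second eigenvalue $\lambda(H_n) \le 1/2$, supporting $\poly(n)$-time neighbor queries; such families are standard (see \cite{vad12}), and if one's preferred explicit construction naturally lives on a slightly larger vertex set, a routine padding step produces one on $\{0,1\}^n$ losing only a constant factor in the spectral gap. Then I would set $t = \Theta(\log(1/\delta)/\epsilon^2)$ and define $\mathsf{Samp}: \{0,1\}^m \to (\{0,1\}^n)^t$ to spend $n$ random bits choosing a uniform start vertex $v_1$ and $O(1)$ random bits per step to take $t - 1$ steps of the random walk on $H_n$, outputting the visited sequence $(v_1, \dots, v_t)$. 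This immediately yields the claimed seed length $m \le n + O(t) = n + O(\log(1/\delta)/\epsilon^2)$, the claimed sample count $t = O(\log(1/\delta)/\epsilon^2)$, and---since the output is computed by $t$ neighbor queries into $H_n$---the claimed running time $\poly(n, 1/\epsilon, \log(1/\delta))$.

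For correctness I would fix any Boolean function $C$ on $n$ bits with mean $\mu(C) = 2^{-n}\sum_x C(x)$; because $v_1$ is uniform and $H_n$ is regular, each $v_i$ is marginally uniform, so each $C(v_i)$ has expectation $\mu(C)$. The crux is then the expander Chernoff bound (originally due to Gillman, with later refinements; see, e.g., \cite[Corollary 4.41]{vad12}), applied to the walk $v_1, \dots, v_t$ on $H_n$:
\[
	\Pr_X\left[\left|\frac{1}{t}\sum_{i = 1}^{t} C(\mathsf{Samp}(X)_i) - \mu(C)\right| > \epsilon\right] \le 2\exp(-\Omega((1 - \lambda(H_n))\epsilon^2 t)) \le 2\exp(-\Omega(\epsilon^2 t)).
\]
Taking the hidden constant in $t = \Theta(\log(1/\delta)/\epsilon^2)$ large enough drives the right-hand side below $\delta$, which is exactly the averaging sampler guarantee.

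I do not anticipate a serious obstacle: the lemma is essentially a repackaging of a well-known construction, and the one ingredient I would treat as a black box---the expander Chernoff bound---is precisely the result cited alongside the lemma. The only mildly delicate point is guaranteeing an explicit $O(1)$-degree expander on exactly $\{0,1\}^n$ with constant spectral gap for \emph{every} $n$ (rather than only for $n$ of some special form); this is standard, handled either by iterated or zig-zag constructions or by the padding remark above, and it does not disturb any of the stated parameters.
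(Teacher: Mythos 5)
Your proposal is correct and matches the paper's intent exactly: the paper proves this lemma simply by citing the ``random walk on expanders'' averaging sampler (via \cite[Corollary 4.41]{vad12}), which is precisely the construction and expander Chernoff analysis you spell out. No discrepancy to report.
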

	
	\begin{cor}[Deterministic amplification for $\mathbf{APP}$] \label{cor:app-amplification}
		Suppose $\varphi \in \mathbf{APP}$ via an algorithm that on input $(w, \epsilon)$ uses $n$ coins and $t$ time steps to compute $\varphi(w) \pm \epsilon$ with failure probability $1/3$. Then for any $\delta$, is possible to compute $\varphi(w) \pm \epsilon$ with failure probability $\delta$ using $O(t \log(1/\delta)) + \poly(n, \log(1/\delta)$ time steps and $n + O(\log(1/\delta))$ coins.
	\end{cor}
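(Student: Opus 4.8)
The plan is to combine the averaging sampler of \cref{lem:averaging-sampler} with the median-amplification lemma \cref{lem:app-amplification}, in the same spirit as the proof of \cref{cor:circuit}. Fix the input $(w, \epsilon)$ and define $f: \{0, 1\}^n \to \R$ by letting $f(X)$ be the output of the given randomized $\mathbf{APP}$ algorithm on input $(w, \epsilon)$ when run with randomness $X$. By hypothesis, $f$ is $(\epsilon, 1/3)$-concentrated at $\varphi(w)$. Then I would invoke \cref{lem:averaging-sampler} with error parameter $1/10$ and failure probability $\delta$ to obtain an averaging $(1/10, \delta)$-sampler $\mathsf{Samp}: \{0, 1\}^m \to (\{0, 1\}^n)^r$ with $m \leq n + O(\log(1/\delta))$ and $r \leq O(\log(1/\delta))$ (the constant $1/10$ is absorbed into the $O(\cdot)$). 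Since $1/10 + 1/3 < 1/2$, \cref{lem:app-amplification} applies with $\delta_0 = 1/3$ and $\epsilon' = 1/10$, yielding that $g(x) = \median_{i \in [r]} f(\mathsf{Samp}(x)_i)$ is $(\epsilon, \delta)$-concentrated at $\varphi(w)$.

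The final algorithm simply draws $X \in \{0, 1\}^m$ uniformly at random and outputs $g(X)$. By the concentration of $g$, with probability at least $1 - \delta$ the output is $\varphi(w) \pm \epsilon$, so the failure probability is at most $\delta$, and the randomness complexity is $m = n + O(\log(1/\delta))$. For the running time: computing $\mathsf{Samp}(X)$ takes $\poly(n, \log(1/\delta))$ time, evaluating $f$ at each of the $r = O(\log(1/\delta))$ sample points costs the base algorithm's running time apiece, and taking the median of $r$ real numbers is negligible; hence the total is $O(t \log(1/\delta)) + \poly(n, \log(1/\delta))$, where $t$ denotes the base algorithm's time bound, matching the claimed bound.

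There is no serious obstacle here; the only point to watch is that the base algorithm's failure probability $1/3$ is too large to feed directly into \cref{lem:app-amplification}, which requires $\epsilon' + \delta_0 < 1/2$. Choosing the sampler error to be a small enough constant (any value below $1/6$, e.g. $1/10$) resolves this while keeping the sampler's seed length at $n + O(\log(1/\delta))$ and its sample count at $O(\log(1/\delta))$. (Alternatively, one could first amplify $f$ by a constant number of independent repetitions to push $\delta_0$ well below $1/2$ before sampling, but that is unnecessary.)
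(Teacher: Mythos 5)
Your proof is correct and takes essentially the same route as the paper: apply the averaging $(1/10,\delta)$-sampler from \cref{lem:averaging-sampler} and the median-amplification \cref{lem:app-amplification} to the function $f(X) = $ output of the base algorithm on randomness $X$. The only cosmetic difference is that you rename the sample count to $r$, which is actually an improvement over the paper's proof, where $t$ collides notationally with the time bound from the corollary statement.
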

	
	\begin{proof}
		On input $(w, \epsilon)$:
		\begin{enumerate}
			\item Let $\mathsf{Samp}: \{0, 1\}^m \to (\{0, 1\}^n)^t$ be the averaging $(1/10, \delta)$-sampler for Boolean functions on $n$ bits of \cref{lem:averaging-sampler}.
			\item Define $f: \{0, 1\}^n \to [0, 1]$ by letting $f(X)$ be the output of the $1/3$-error-probability algorithm for computing $\varphi(w) \pm \epsilon$ on randomness $X$.
			\item Pick $X \in \{0, 1\}^m$ uniformly at random and return $\median_{i \in [t]} f(\mathsf{Samp}(X)_i)$.
		\end{enumerate}
		Correctness follows immediately from \cref{lem:app-amplification}, since $f$ is $(\epsilon, 1/3)$-concentrated at $\varphi(w)$. Efficiency follows immediately from \cref{lem:averaging-sampler}.
	\end{proof}
	
	\begin{proof}[Proof of \cref{thm:app}]
		By \cref{cor:app-amplification}, there is an algorithm $\Phi$ for computing $\varphi(w_i) \pm \epsilon/c$ with failure probability $\delta/(2k)$ that runs in time $O(T' \cdot \log(k/\delta)) + \poly(n, \log k, \log(1/\delta))$ and uses $m \leq n + O(\log(k/\delta))$ coins. Let $\mathsf{S}$ be the $(\epsilon, \delta)$-steward of \cref{thm:main} for $k$ adaptively chosen $(\epsilon/c, \delta/(2k))$-concentrated functions $f_1, \dots, f_k: \{0, 1\}^m \to \R$. (So $\gamma = \delta/2$.) When the oracle algorithm makes query $i$ about string $w_i$, let $f_i(X) = \Phi(w_i, \epsilon/c, X)$ and give $f_i$ to $\mathsf{S}$. When $\mathsf{S}$ outputs a value $Y_i$, give it to the oracle algorithm.
		
		Proof of correctness: Each $f_i$ is $(\epsilon/c, \delta/(2k))$-concentrated at $\varphi(w_i)$. Furthermore, each $f_i$ depends only on the previous oracle responses, i.e. $Y_1, \dots, Y_{i - 1}$. Therefore, the steward guarantee implies that with probability $1 - \delta$, every $Y_i$ is within $\pm \epsilon$ of $\varphi(w_i)$. If this occurs, then the oracle algorithm is guaranteed to give a correct output.
		
		Randomness complexity analysis: The number of bits used by the steward is
		\[
			m + O(k + (\log k) \log(1/\delta)) = n + O(k + (\log k) \log(1/\delta)).
		\]
		Runtime analysis: The runtime of the steward is $\poly(m, k, \log(1/\gamma)) = \poly(n, k, \log(1/\delta))$. Therefore, the total runtime is bounded by
		\[
			T + k \cdot (O(T' \cdot \log(k/\delta)) + \poly(n, \log k, \log(1/\delta))) + \poly(n, k, \log(1/\delta)),
		\]
		which is bounded by the expression in the theorem statement.
	\end{proof}
	
	\subsection{The Goldreich-Levin algorithm} \label{sec:goldreich-levin}
	
	\begin{thm}[Randomness-efficient Goldreich-Levin algorithm] \label{thm:gl}
		There is a randomized algorithm that, given oracle access to $F: \{0, 1\}^n \to \{-1, 1\}$ and given input parameters $\delta, \theta > 0$, outputs a list $L$ of subsets of $[n]$ such that with probability $1 - \delta$,
		\begin{enumerate}
			\item every $U$ satisfying $|\hat{F}(U)| \geq \theta$ is in $L$, and
			\item every $U \in L$ satisfies $|\hat{F}(U)| \geq \theta/2$.
		\end{enumerate}
		The number of queries made by the algorithm is
		\[
			O\left(\frac{n}{\theta^{11} \log(1/\theta)} \log\left(\frac{n}{\delta \theta}\right)\right),
		\]
		the number of random bits used by the algorithm is
		\[
			O(n + (\log n) \log(1/\delta)),
		\]
		and the runtime of the algorithm is $\poly(n, 1/\theta, \log(1/\delta))$.
	\end{thm}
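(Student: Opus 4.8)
The plan is to run the textbook Kushilevitz--Mansour form of the Goldreich--Levin recursion, but to replace its two randomness-hungry ingredients: the individual ``bucket weight'' estimates, which we make randomness-efficient with the Goldreich--Wigderson sampler, and the adaptivity across recursion levels, which we handle with the steward of \cref{thm:main}. Recall the recursion. For a prefix $\alpha \in \{0,1\}^j$, let $B_\alpha$ be the set of $U \subseteq [n]$ whose characteristic vector begins with $\alpha$, and let $\mathrm{wt}(\alpha) = \sum_{U \in B_\alpha} \hat{F}(U)^2$. There is a standard identity writing $\mathrm{wt}(\alpha)$ as $\E_{z,y,y'}\left[F(y,z)\,F(y',z)\,(-1)^{\alpha \cdot (y \oplus y')}\right]$ with $z \in \{0,1\}^{n-j}$ and $y, y' \in \{0,1\}^{j}$; this is the mean of a $[-1,1]$-valued function of $O(n)$ bits evaluable with two queries to $F$, so it can be estimated to within $\pm \epsilon$ with failure probability $\delta_0$ by an averaging sampler. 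Crucially we use a sampler whose randomness complexity does not depend on $\epsilon$ --- the real-valued analogue of \cref{lem:gw-boolean-sampler}, which follows from the same construction --- so that one estimate costs only $O(n + \log(1/\delta_0))$ random bits and $O(\log(1/\delta_0)/\epsilon^2)$ queries. The recursion keeps a set of ``live'' prefixes, estimates $\mathrm{wt}(\alpha)$ for each child $\alpha$ of a live prefix, declares $\alpha$ live iff its estimate exceeds a threshold $\approx \theta^2/2$, and after $n$ levels outputs the surviving length-$n$ prefixes. Since $\sum_U \hat{F}(U)^2 = 1$, at most $O(1/\theta^2)$ prefixes are ever live, and provided every estimate has error below $\theta^2/4$, every $U$ with $|\hat{F}(U)| \ge \theta$ survives and every surviving $U$ has $|\hat{F}(U)| \ge \theta/2$.

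The structural idea is to collapse $L = \Theta(\log(1/\theta))$ consecutive recursion levels into one steward round. In round $i$ we hold the (at most $O(1/\theta^2)$) prefixes live at depth $(i-1)L$; we refine each of them by $L$ levels \emph{without intermediate pruning}, producing a list of $d \le O(2^{L}/\theta^2) = \poly(1/\theta)$ candidate prefixes of depth $iL$; and we estimate $\mathrm{wt}(\alpha)$ for all $d$ of them. These $d$ estimates are nonadaptive relative to one another, so it is legitimate --- and, in the one-query model, unavoidable --- to compute all of them from a single random string $X_i \in \{0,1\}^m$ with $m = O(n + \log(1/\delta_0))$; a union bound then shows the resulting function $f_i \colon \{0,1\}^m \to \R^d$ is $(\epsilon, d\delta_0)$-concentrated at the vector of true weights, and the candidate list in round $i+1$ depends only on the output vector $Y_i$. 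Hence the whole process is an instance of the steward model with $k = \lceil n/L \rceil = \Theta(n/\log(1/\theta))$ rounds. We invoke \cref{thm:main} with these $k$ and $d$, with sampler error $\epsilon$ a small constant times $\theta^2/d$ (so the steward error $O(\epsilon d)$ stays below $\theta^2/4$), with $\gamma = \delta/2$, and with $\delta_0 = \delta/(4kd)$ (so the steward failure probability $k \cdot d\delta_0 + \gamma$ is at most $\delta$). The steward's randomness is $m + O(k\log(d+1) + (\log k)\log(1/\gamma))$; here $\log(d+1) = O(L) = O(\log(1/\theta))$ cancels the factor $k$, and we may assume $\theta \ge 2^{-n}$ (otherwise the hypothesis $|\hat{F}(U)| \ge \theta$ is vacuous or trivially implied), so this total is $O(n + (\log n)\log(1/\delta))$.

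Finally, assemble the bounds. With probability $1-\delta$ the steward guarantee holds in every round, so each $Y_i$ is coordinatewise within $\theta^2/4$ (up to a negligible $O(\epsilon)$) of the true weights; the threshold test then behaves exactly as in the error-free analysis, and $L$ has the two claimed properties. The random bit count is the one displayed above. The query count is $k \cdot d \cdot O(\log(1/\delta_0)/\epsilon^2)$; substituting $k = \Theta(n/\log(1/\theta))$, $d = \poly(1/\theta)$, $1/\epsilon = \poly(1/\theta)$, and $\log(1/\delta_0) = O(\log(n/(\delta\theta)))$ gives a bound of the form $O\!\left(\frac{n}{\theta^{O(1)} \log(1/\theta)} \log\!\left(\frac{n}{\delta\theta}\right)\right)$, with the precise exponent coming from tuning the constant in $L = \Theta(\log(1/\theta))$ so as to balance $d$ against $1/\epsilon^2$. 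The runtime is dominated by $O(kd)$ sampler evaluations plus the steward's own $\poly(m,k,d,\log(1/\epsilon),\log(1/\gamma))$ running time, i.e.\ $\poly(n,1/\theta,\log(1/\delta))$. The delicate part is precisely the joint parameter bookkeeping: one must simultaneously keep $d = \poly(1/\theta)$ (so that $k\log d = O(n)$), keep the per-round seed length $m$ independent of $\epsilon$ (which is exactly why a sampler with $\epsilon$-independent randomness is essential rather than an expander-walk averaging sampler), and take $\epsilon$ small enough that the steward's $O(\epsilon d)$ error still falls below $\theta^2/4$ --- all at once, without the resulting $1/\epsilon^2$ query overhead getting out of hand.
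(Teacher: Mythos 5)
Your proposal is correct and takes essentially the same approach as the paper: estimate each Fourier weight $W_x[F]$ with the Goldreich--Wigderson sampler (the paper does this via \cref{lem:estimate-weight}, folding the $[-1,1]$-valued expectation into a Boolean one, but your real-valued-sampler version works just as well), batch $u = \Theta(\log(1/\theta))$ recursion levels into a single steward round so that the per-round output dimension $d = \poly(1/\theta)$ keeps $k \log(d+1) = O(n)$, and invoke \cref{thm:main} with $\gamma = \delta/2$ and a per-estimate failure budget making each $f_i$ appropriately concentrated. The only loose end is that you leave the $\theta$-exponent of the query count as $\theta^{O(1)}$ rather than fixing $u = \lfloor\log(1/\theta)\rfloor$, $d = \Theta(1/\theta^3)$, $\epsilon = \Theta(\theta^2/d)$ and substituting, which is what the paper does in its final bookkeeping.
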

	
	For comparison, using standard techniques (the GW sampler, reusing randomness within each round of adaptivity), the Goldreich-Levin algorithm can be implemented in a straightforward way to use $O(\frac{n}{\theta^6} \log(\frac{n}{\delta \theta}))$ queries and $O(n^2 + n \log(\frac{n}{\delta \theta}))$ random bits. So our algorithm significantly improves the randomness complexity at the expense of substantially increasing the exponent of $1/\theta$ in the query complexity.
	
	Toward proving \cref{thm:gl}, for a string $x \in \{0, 1\}^{\leq n}$, define
	\[
		\mathcal{U}(x) = \{U \subseteq [n]: \forall j \leq |x|, j \in U \iff x_j = 1\}.
	\]
	(That is, we think of $x \in \{0, 1\}^{\ell}$ as specifying $U \cap [\ell]$ in the natural way.) Define $W_x[F] = \sum_{U \in \mathcal{U}(x)} \hat{F}(U)^2$. One of the key facts used in the standard Goldreich-Levin algorithm is that $W_x[F]$ can be estimated using few queries to $F$; here, we use the GW sampler to improve the randomness efficiency of that estimation.
	\begin{lem} \label{lem:estimate-weight}
		There is a randomized algorithm that, given oracle access to $F$ and inputs $x \in \{0, 1\}^{\leq n}$, $\epsilon, \delta > 0$, estimates $W_x[F]$ to within $\pm \epsilon$ with failure probability $\delta$. The number of queries is $O(\log(1/\delta)/\epsilon^2)$, the number of random bits is $O(n + \log(1/\delta))$, and the runtime is $\poly(n, 1/\epsilon, \log(1/\delta))$.
	\end{lem}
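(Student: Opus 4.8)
The plan is to reduce estimating $W_x[F]$ to estimating the acceptance probability of a single Boolean function on roughly $2n$ bits, and then to apply the Goldreich--Wigderson sampler (\cref{lem:gw-boolean-sampler}) as a black box.

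First I would recall the standard Fourier identity behind the Goldreich--Levin algorithm. Write $\ell = |x|$, let $S = \{j \leq \ell : x_j = 1\}$, and write $\chi_S(y) = (-1)^{\sum_{j \in S} y_j}$ for $y \in \{0, 1\}^{\ell}$. For $z \in \{0, 1\}^{n - \ell}$, let $yz \in \{0, 1\}^n$ denote the concatenation and set $g(z) = \E_{y}[F(yz) \chi_S(y)]$. A direct computation shows that, viewing $g$ as a function on $\{0, 1\}^{n - \ell}$, its Fourier coefficient indexed by $T \subseteq \{\ell + 1, \dots, n\}$ equals $\hat{F}(S \cup T)$; hence by Parseval,
\[
  W_x[F] = \sum_{U \in \mathcal{U}(x)} \hat{F}(U)^2 = \E_{z}[g(z)^2] = \E_{z, y, y'}\bigl[F(yz) \cdot F(y'z) \cdot \chi_S(y) \cdot \chi_S(y')\bigr],
\]
where $z \in \{0, 1\}^{n - \ell}$ and $y, y' \in \{0, 1\}^{\ell}$ are independent and uniform. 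Since $F$ is $\{-1, 1\}$-valued, the quantity inside the last expectation is always $\pm 1$.

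Next I would define a Boolean function $C_x \colon \{0, 1\}^{n + \ell} \to \{0, 1\}$ which, on input $(z, y, y')$, queries $F$ at $yz$ and $y'z$, forms the $\pm 1$ product above, and outputs $1$ if it is $+1$ and $0$ otherwise. Then $\mu(C_x) = \tfrac{1}{2}(W_x[F] + 1)$, each evaluation of $C_x$ makes exactly two queries to $F$, and $C_x$ is computable in $\poly(n)$ time given the input $x$. To finish, run the $(\epsilon/2, \delta)$-sampler for Boolean functions on $n + \ell$ bits of \cref{lem:gw-boolean-sampler} on the oracle $C_x$ to get an estimate $\tilde{\mu}$ of $\mu(C_x)$, and output $2\tilde{\mu} - 1$; with probability $1 - \delta$ this is within $\pm \epsilon$ of $W_x[F]$. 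Since $\ell \leq n$, the sampler uses $(n + \ell) + O(\log(1/\delta)) = O(n + \log(1/\delta))$ random bits and makes $O(\log(1/\delta)/\epsilon^2)$ queries to $C_x$, i.e.\ $O(\log(1/\delta)/\epsilon^2)$ queries to $F$; the overall running time is $\poly(n, 1/\epsilon, \log(1/\delta))$.

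The only step requiring any care is verifying the Fourier identity, and that is entirely standard (see \cite{km93}); everything else is bookkeeping, so I do not anticipate a genuine obstacle here --- the content of this lemma is simply that, unlike the naive implementation, the GW sampler lets us perform this estimate using only $O(n)$ rather than $O(n^2)$ random bits.
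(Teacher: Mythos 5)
Your proposal is correct and follows exactly the same route as the paper: express $W_x[F]$ via the standard Fourier identity as an expectation of a $\pm 1$ quantity, encode that quantity as a Boolean function $C$ on $n + \ell$ bits, and estimate $\E[C]$ with the Goldreich--Wigderson sampler, converting back at the end. The only cosmetic difference is that you sketch the derivation of the identity via $g(z)$ and Parseval while the paper cites it directly from O'Donnell.
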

	
	\begin{proof}
		Let $\ell = |x|$. As shown in the proof of \cite[Proposition 3.40]{o'd14},
		\[
		W_x[F] = \E_{\substack{y, y' \in \{0, 1\}^{\ell} \\ z \in \{0, 1\}^{n - \ell}}}[F(y, z) \cdot F(y', z) \cdot \chi_x(y) \cdot \chi_x(y')],
		\]
		where $\chi_x(y) \stackrel{\text{def}}{=} \prod_{j : x_j = 1} (-1)^{y_j}$. Let $C: \{0, 1\}^{n + \ell} \to \{0, 1\}$ be the function
		\[
		C(y, y', z) = \frac{1}{2} + \frac{1}{2} \cdot F(y, z) \cdot F(y', z) \cdot \chi_x(y) \cdot \chi_x(y'),
		\]
		so that $W_x[F] = 2 \E_{y, y', z}[C(y, y', z)] - 1$. We can estimate the expectation of $C$ to within $\pm \epsilon/2$ with failure probability $\delta$ using the GW sampler of \cref{lem:gw-boolean-sampler}, which implies an estimate of $W_x[F]$ to within $\pm \epsilon$. The number of queries made by the GW sampler is $O(\log(1/\delta)/\epsilon^2)$, and each query to $C$ can be evaluated by making $2$ queries to $F$. The randomness complexity of the GW sampler is $n + \ell + O(\log(1/\delta))$, which is $O(n + \log(1/\delta))$.
	\end{proof}
	
	The standard Goldreich-Levin algorithm proceeds by finding, for $\ell = 1$ to $n$, the set of all $x$ with $|x| = \ell$ such that $W_x[F] \gtrsim \theta^2$. In each round, the algorithm estimates $W_x[F]$ for all strings $x$ formed by appending a single bit to a string $x'$ that was previously found to satisfy $W_{x'}[F] \gtrsim \theta^2$. This adaptive structure is exactly suited for saving random bits using a steward. To further drive down the randomness complexity, we reduce the number of rounds of adaptivity by appending $\log(1/\theta)$ bits at a time instead of $1$ bit.
	
	\begin{proof}[Proof of \cref{thm:gl}]
		Algorithm:
		\begin{enumerate}
			\item Let $u = \lfloor \log(1/\theta) \rfloor$, let $k = \lceil n / u \rceil$, and let $d = \lfloor 2^u \cdot 4/\theta^2 \rfloor$.
			\item Let $\mathsf{S}$ be a $(\theta^2/4, \delta)$-steward for $k$ adaptively chosen $(\epsilon, \delta/(2n))$-concentrated functions $f_1, \dots, f_k: \{0, 1\}^m \to \R^d$, where $\epsilon \geq \Omega(\theta^2/d)$ and $m$ will become clear later.
			\item Set $L_0 := \{\text{empty string}\}$.
			\item For $i = 1$ to $k$:
			\begin{enumerate}
				\item If $|L_{i - 1}| > d/2^u$, abort and output ``fail''.
				\item Observe that every string in $L_{i - 1}$ has length $\ell = u(i - 1) < n$. Let $x_1, \dots, x_t$ be the set of all strings obtained from strings in $L_{i - 1}$ by appending $\min\{u, n - \ell\}$ bits, so $t \leq 2^u|L_{i - 1}| \leq d$.
				\item Define $f_i: \{0, 1\}^m \to \R^t$ by letting $f_i(X)_j$ be the estimate of $W_{x_j}[F]$ to within $\pm \epsilon$ provided by the algorithm of \cref{lem:estimate-weight} with failure probability $\delta/(2dn)$ using randomness $X$. Observe that by the union bound, $f_i$ is $(\epsilon, \delta/(2n))$-concentrated at $(W_{x_1}[F], \dots, W_{x_t}[F])$.
				\item By giving $f_i$ to $\mathsf{S}$, obtain estimates $\mu_1, \dots, \mu_t$ for $W_{x_1}[F], \dots, W_{x_t}[F]$.
				\item Set $L_i := \{x_j : \mu_j \geq \theta^2/2\}$.
			\end{enumerate}
			\item Output $L \stackrel{\text{def}}{=} \bigcup_{x \in L_k} \mathcal{U}(x)$.
		\end{enumerate}
		As hopefully became clear, $m$ is the number of random bits used by the algorithm of \cref{lem:estimate-weight}. With probability $1 - \delta$, all of the responses of $\mathsf{S}$ are accurate, i.e. every $\mu_j$ value is within $\pm \theta^2/4$ of the corresponding $W_{x_j}[F]$ value. Assume from now on that this has happened.
		
		By the definition of $L_i$, every $x$ in every $L_i$ satisfies $W_x[F] \geq \theta^2/4$. By Parseval's theorem (see, e.g., \cite[Section 1.4]{o'd14}), this implies that $|L_i| \leq 4/\theta^2 \leq d/2^u$ for every $i$. Therefore, the algorithm does not abort. Let $\ell_i$ be the length of all the strings in $L_i$, so $\ell_i = u i$ for $i < k$ and $\ell_k = n$. Suppose $\hat{F}(U)^2 \geq \theta^2$. By induction on $i$, the unique string $x \in \{0, 1\}^{\ell_i}$ with $U \in \mathcal{U}(x)$ is placed in $L_i$, because the estimate of $W_x[F]$ is at least $3\theta^2/4 > \theta^2/2$. This shows that $U \in L$. Conversely, if $U$ ends up in $L$, then the estimate of $\hat{F}(U)^2$ in iteration $i = n$ was at least $\theta^2/2$, so $\hat{F}(U)^2 \geq \theta^2/4$. This completes the proof of correctness of the algorithm.
		
		Now, observe that the total number of queries to $F$ is at most $kd$ times the $O(\log(nd/\delta)/\epsilon^2)$ queries that the algorithm of \cref{lem:estimate-weight} makes, i.e. the total number of queries to $F$ is
		\[
			O\left(\frac{kd^3 \log(nd/\delta)}{\theta^2}\right) = O\left(\frac{n}{\theta^{11}\log(1/\theta)} \log\left(\frac{n}{\delta \theta}\right)\right).
		\]
		The randomness complexity of the algorithm is just the randomness complexity of $\mathsf{S}$. We will use the steward of \cref{thm:main} with $\gamma = \delta/2$, so the randomness complexity is $m + O(k \log(d + 1) + (\log k)\log(1/\delta))$. Since $m \leq O(n + \log(n/(\delta \theta)))$, the total randomness complexity is
		\[
			O\left(n + \frac{n}{\log(1/\theta)} \log(1/\theta) + (\log n) \log(1/\delta) + \log(1/\theta)\right) = O(n + (\log n) \log(1/\delta) + \log(1/\theta)).
		\]
		To get rid of the $\log(1/\theta)$ term as claimed in the theorem statement, just notice that we can assume without loss of generality that $\theta \geq 2^{-n + 1}$, because any nonzero Fourier coefficient of a $\{-1, 1\}$-valued function has absolute value at least $2^{-n + 1}$. The total runtime of the algorithm is clearly $\poly(n, 1/\theta, \log(1/\delta))$.
	\end{proof}
	
	\section{Randomness complexity lower bounds} \label{sec:lower-bound}
	
	\subsection{Lower bound for pseudorandom generation stewards} \label{sec:prg-steward}
	
	\begin{defn}
		A \emph{pseudorandom generation steward} is a one-query steward with the additional property that in each round $i$, the value $Y_i$ that the steward gives to $\mathsf{O}$ is simply the query response $f_i(X_i)$ that it receives.
	\end{defn}
	
	Notice that this definition is more general than the usual notion of a pseudorandom generator in two respects. First, the query point $X_i$ may depend on the previous responses $f_1(X_1), \dots, f_{i - 1}(X_{i - 1})$. Second, rather than making a statistical indistinguishability requirement, we merely impose the standard steward correctness requirement, i.e. with high probability, every $Y_i$ has low $\ell_{\infty}$ error.
	
	We begin with an elementary lemma. Let $H(\cdot)$ denote Shannon entropy. If a random variable $X$ takes values in a set of size $t$, then $H(X) \leq \log_2 t$. The following lemma is a partial converse: if $H(X)$ is much smaller than $\log_2 t$, then there is a set of size $t$ that $X$ is likely to land in.
	\begin{lem} \label{lem:shannon}
		Suppose $X$ is a discrete random variable. Let $W$ be the set consisting of the $t$ most likely values of $X$. Then
		\[
		\Pr[X \in W] \geq 1 - \frac{H(X)}{\log_2 t}.
		\]
	\end{lem}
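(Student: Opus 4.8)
The plan is a short entropy computation, once we establish the right pointwise bound on the tail probabilities. First I would record the key observation: every value $x$ that is \emph{not} among the $t$ most likely satisfies $\Pr[X = x] \leq 1/t$. To see this, let $q$ denote the probability of the $t$-th most likely value of $X$; then there are at least $t$ values of $X$ each of probability at least $q$, so summing over just those values gives $tq \leq 1$, i.e.\ $q \leq 1/t$. Any value outside $W$ has probability at most $q$ (breaking ties arbitrarily when forming $W$), hence at most $1/t$.

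Next I would expand $H(X) = \sum_x \Pr[X = x] \log_2(1/\Pr[X = x])$, split the sum according to whether $x \in W$, and discard the terms with $x \in W$, each of which is nonnegative since $\Pr[X = x] \leq 1$. For the surviving terms, the bound $\Pr[X = x] \leq 1/t$ gives $\log_2(1/\Pr[X = x]) \geq \log_2 t$, so
\[
	H(X) \;\geq\; \sum_{x \notin W} \Pr[X = x] \log_2\!\big(1/\Pr[X = x]\big) \;\geq\; (\log_2 t) \cdot \Pr[X \notin W].
\]
Rearranging yields $\Pr[X \notin W] \leq H(X)/\log_2 t$, which is exactly the claim.

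Finally I would dispose of the degenerate cases: if $X$ has support of size at most $t$ then $W$ is the entire support and $\Pr[X \in W] = 1 \geq 1 - H(X)/\log_2 t$; the case $t = 1$ is vacuous (indeed $\log_2 t = 0$), so we may assume $t \geq 2$. There is no real obstacle here — the argument is only a few lines — but the one place that deserves care is the ``$\leq 1/t$'' step, and relatedly the handling of ties in the phrase ``the $t$ most likely values'', which is why I would make the counting argument explicit rather than appeal to a strict ordering.
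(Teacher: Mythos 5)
Your proof is correct and is essentially the same as the paper's: both rest on the observation that every value outside $W$ has probability at most $1/t$ (equivalently, surprisal at least $\log_2 t$), and then conclude by an averaging argument. The paper packages the averaging step as Markov's inequality applied to the nonnegative random variable $-\log_2 p(X)$, whereas you unroll that inequality into the explicit sum $H(X) \geq \sum_{x \notin W} p(x)\log_2(1/p(x)) \geq (\log_2 t)\Pr[X \notin W]$; these are the same computation.
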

	
	\begin{proof}
		Let $p$ be the probability mass function of $X$, so that $H(X) = \E[-\log_2(p(X))]$. By Markov's inequality,
		\[
		\Pr[-\log_2(p(X)) > \log_2 t] \leq \frac{H(X)}{\log_2 t}.
		\]
		Therefore, if we let $W' = \{x: p(x) \geq 1/t\}$,
		\[
		\Pr[X \in W'] \geq 1 - \frac{H(X)}{\log_2 t}.
		\]
		Finally, $W' \subseteq W$, because there can be at most $t$ values $x$ such that $p(x) \geq 1/t$.
	\end{proof}
	
	\begin{thm}
		Suppose $\mathsf{S}$ is an $m$-coin pseudorandom generation $(\epsilon', \delta')$-steward for $k$ adaptively chosen $(\epsilon, \delta)$-concentrated functions $f_1, \dots, f_k: \{0, 1\}^n \to \R^d$. Then
		\[
		m \geq (1 - \delta')\cdot (n - \log_2(2/\delta)) \cdot k.
		\]
	\end{thm}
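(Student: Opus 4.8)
The plan is to fix a single deterministic owner $\mathsf{O}$, tailored to $\mathsf{S}$ (for a lower bound we may restrict to deterministic owners, since a randomized owner is a mixture of these), and to force the transcript $(Y_1,\dots,Y_k)$ to carry a lot of entropy. Since $\mathsf{S}$ uses $m$ coins and $\mathsf{O}$ is deterministic, $(Y_1,\dots,Y_k)$ is a deterministic function of those $m$ coins, so $H(Y_1,\dots,Y_k)\le m$; it therefore suffices to design $\mathsf{O}$ so that $H(Y_1,\dots,Y_k)\ge (1-\delta')(n-\log_2(2/\delta))\,k$. We may assume $\delta<1$ and $n-\log_2(2/\delta)>0$, as otherwise the claim is vacuous or trivial; then setting $T=\lfloor\delta 2^n\rfloor$ we have $2\le T<2^n$ and $\log_2 T\ge\log_2(\delta 2^n/2)=n-\log_2(2/\delta)$. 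It will also be convenient to take $d=1$, padding the remaining $d-1$ coordinates of every $f_i$ with the constant $0$.

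The owner I would use works as follows. In round $i$, after seeing the history $\mathbf{y}=(Y_1,\dots,Y_{i-1})$, it uses its knowledge of the code of $\mathsf{S}$ to compute the distribution $D_i^{\mathbf{y}}$ of the steward's query point $X_i$ conditioned on $Y_{<i}=\mathbf{y}$. Let $B_i^{\mathbf{y}}\subseteq\{0,1\}^n$ be a set of $T$ most likely values under $D_i^{\mathbf{y}}$ (ties broken arbitrarily), so $|B_i^{\mathbf{y}}|=T\le\delta 2^n$. The owner then picks an \emph{injective} $f_i\colon\{0,1\}^n\to\R$ that sends $B_i^{\mathbf{y}}$ to $T$ distinct reals of absolute value greater than $\epsilon+\epsilon'$ and sends $\{0,1\}^n\setminus B_i^{\mathbf{y}}$ to distinct reals in the interval $\bigl(0,\min(\epsilon,\epsilon')\bigr)$, and declares $\mu_i=0$. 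This $f_i$ is $(\epsilon,\delta)$-concentrated at $0$ (the only inputs with $|f_i(x)|>\epsilon$ are the $T\le\delta 2^n$ inputs in $B_i^{\mathbf{y}}$), and $|f_i(x)-\mu_i|>\epsilon'$ holds exactly on $B_i^{\mathbf{y}}$. Consequently $\mathsf{S}$ fails in round $i$ precisely when $X_i\in B_i^{\mathbf{y}}$, and since $f_i$ is injective, $Y_i=f_i(X_i)$ determines $X_i$, so $H(Y_i\mid Y_{<i}=\mathbf{y})=H(D_i^{\mathbf{y}})$.

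To conclude, I would apply \cref{lem:shannon} to $D_i^{\mathbf{y}}$ with $t=T$: its $T$ most likely values, namely $B_i^{\mathbf{y}}$, carry probability at least $1-H(D_i^{\mathbf{y}})/\log_2 T$. Since $D_i^{\mathbf{y}}(B_i^{\mathbf{y}})=\Pr[\text{$\mathsf{S}$ fails in round $i$}\mid Y_{<i}=\mathbf{y}]$, rearranging gives
\[
	H(Y_i\mid Y_{<i}=\mathbf{y})=H(D_i^{\mathbf{y}})\ge\bigl(1-\Pr[\text{$\mathsf{S}$ fails in round $i$}\mid Y_{<i}=\mathbf{y}]\bigr)\log_2 T .
\]
Averaging over $\mathbf{y}$ and using $\Pr[\text{$\mathsf{S}$ fails in round $i$}]\le\Pr[\text{$\mathsf{S}$ fails}]\le\delta'$ yields $H(Y_i\mid Y_{<i})\ge(1-\delta')\log_2 T$. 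The chain rule then finishes the proof:
\[
	m\ge H(Y_1,\dots,Y_k)=\sum_{i=1}^k H(Y_i\mid Y_{<i})\ge k(1-\delta')\log_2 T\ge(1-\delta')(n-\log_2(2/\delta))\,k .
\]

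The part I expect to be the real obstacle is arriving at the right owner, not the calculation (which is short). Naïve designs do not work: if $f_i$ is constant off a small ``revealing'' set, the steward can query one fixed point in every round, never fail, and leak nothing --- exactly what a randomness-thrifty PRG steward wants --- so no lower bound emerges. The two ideas that make the argument go through are (a) taking $f_i$ \emph{injective}, so that even a query that does not trigger a failure still leaks its full entropy into $Y_i$, and (b) placing the failure region exactly on the $T$ most probable queries, so that a low-entropy (hence predictable) query distribution is automatically punished by a proportional failure probability. A small but important bookkeeping point is to bound the failure probability of each \emph{individual} round by $\delta'$ --- immediate, since a round-$i$ failure is a failure --- rather than only the $k$-fold union; this is what produces the clean $(1-\delta')$ factor per round.
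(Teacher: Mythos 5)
Your proof is correct, and it takes a genuinely different route from the paper's. The paper fixes a ``dummy'' owner $\mathsf{O}_0$ that plays the \emph{same} injective concentrated function in every round, observes that $H(X_1,\dots,X_k)\le m$, applies the chain rule in its averaging form to find a single round $j$ with $H(X_j\mid X_{<j})\le m/k$, and only then constructs the real adversary $\mathsf{O}$, which mimics $\mathsf{O}_0$ for $i<j$ (its injectivity lets $\mathsf{O}$ recover $X_{<j}$ from $Y_{<j}$) and deviates \emph{only in round $j$} with a function that places its $\delta$-mass of outliers on the $\lfloor\delta 2^n\rfloor$ most likely values of the posterior on $X_j$. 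Lemma~\ref{lem:shannon} then forces $\Pr[X_j\in W]\ge 1-m/(k(n-\log_2(2/\delta)))$, and the steward's correctness bounds this by $\delta'$. Your proof instead builds a single owner that plays adversarially (posterior-aware, with an injective ``probe'' $f_i$) in \emph{every} round, upper-bounds $H(Y_1,\dots,Y_k)$ by $m$, and sums the per-round contributions $H(Y_i\mid Y_{<i})\ge(1-\delta')\log_2 T$ by the chain rule. Both routes invoke Lemma~\ref{lem:shannon} and both exploit injectivity to convert the revealed $Y_i$ back into full knowledge of $X_i$; they yield the identical final bound. The tradeoff: the paper's argument only requires the owner to deviate in one round (and cleanly decouples the entropy calculation, done with $\mathsf{O}_0$, from the attack, done with $\mathsf{O}$), while yours treats all rounds symmetrically and makes the ``entropy leaked vs.\ failure probability'' tradeoff visible round by round, at the cost of defining the owner's posterior computation inductively alongside its own strategy. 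Both are valid; the paper's is arguably a touch slicker because it sidesteps the need to reason about posteriors conditioned on an owner that is itself being defined, whereas yours arguably better exposes why the $(1-\delta')$ factor appears per round.
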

	
	\begin{proof}
		Without loss of generality, assume $d = 1$. Fix an injective $(\epsilon, 0)$-concentrated function $f: \{0, 1\}^n \to \R$. (E.g., we could set the binary expansion of $f(x)$ to be $\log(1/\epsilon)$ zeroes concatenated with $x$.) Let $\mathsf{O}_0$ be the owner who always chooses $f_i = f$ for every $i$. Let $X_i \in \{0, 1\}^n$ be the query point that $\mathsf{S}$ chooses in round $i$ of $\mathsf{O}_0 \leftrightarrow \mathsf{S}$, so that $X_1, \dots, X_k$ are functions of the internal randomness of $\mathsf{S}$.
		
		Since $\mathsf{S}$ uses only $m$ coins, $H(X_1, \dots, X_k) \leq m$. By the chain rule, this implies that there is some $j \in [k]$ such that
		\[
		H(X_j \mid X_1, \dots, X_{j - 1}) \leq \frac{m}{k}.
		\]
		Let $\mathsf{O}$ be the following owner:
		\begin{enumerate}
			\item In each round $i < j$, pick $f_i = f$, and obtain the value $Y_i$. Compute $x_i = f_i^{-1}(Y_i)$.
			\item Let $p(x)$ be the distribution $p(x) = \Pr[X_j = x \mid X_1 = x_1, \dots, X_{j - 1} = x_{j - 1}]$.
			\item In round $j$, let $W$ be the set of the $\lfloor \delta 2^n \rfloor$ points in $\{0, 1\}^n$ to which $p$ assigns the most mass. Let $f_j: \{0, 1\}^n \to \R$ be the function
			\[
			f_j(x) = \begin{cases}
			0 & \text{if } x \not \in W \\
			2\epsilon' & \text{if } x \in W.
			\end{cases}
			\]
			Observe that $f_j$ is $(0, \delta)$-concentrated at zero.
			Give $f_j$ to $\mathsf{S}$.
			\item In rounds $i > j$, choose $f_i$ to be (say) the constant zero function.
		\end{enumerate}
		
		The probability that $\mathsf{S}$ fails in $\mathsf{O} \leftrightarrow \mathsf{S}$ is precisely $\Pr[X_j \in W]$, which we now lower bound. We can write
		\[
		\Pr[X_j \in W] = \E_{(x_1, \dots, x_{j - 1}) \sim (X_1, \dots, X_{j - 1})}[\Pr[X_j \in W \mid X_1 = x_1, \dots, X_{j - 1} = x_{j - 1}]].
		\]
		By \cref{lem:shannon} and the definition of $W$, for any particular values $x_1, \dots, x_{j - 1}$,
		\[
		\Pr[X_j \in W \mid X_1 = x_1, \dots, X_{j - 1} = x_{j - 1}] \geq 1 - \frac{H(X_j \mid X_1 = x_1, \dots, X_{j - 1} = x_{j - 1})}{\log_2 \lfloor \delta 2^n \rfloor}.
		\]
		Therefore, by linearity of expectation,
		\begin{align*}
		\Pr[X_j \in W] &\geq 1 - \frac{H(X_j \mid X_1, \dots, X_{j - 1})}{\log_2 \lfloor \delta 2^n \rfloor} \\
		&\geq 1 - \frac{m}{k \log_2 \lfloor \delta 2^n \rfloor}.
		\end{align*}
		If $\delta < 2^{-n}$, the theorem statement is trivial, so assume that $\delta \geq 2^{-n}$. In this case, $\log_2 \lfloor \delta 2^n \rfloor \geq n - \log_2(2/\delta)$, so
		\[
		\Pr[X_j \in W] \geq 1 - \frac{m}{(n - \log_2(2/\delta)) \cdot k}.
		\]
		By the correctness of the steward, this value must be at most $\delta'$. Rearranging completes the proof.
	\end{proof}
	
	\subsection{Lower bound for one-query stewards}
	To understand the following lemma, imagine the perspective of $\mathsf{O}$ after $i - 1$ rounds of $\mathsf{O} \leftrightarrow \mathsf{S}(Z)$, where $Z$ was chosen uniformly at random from $\{0, 1\}^m$. Let $R$ be the set of $z$ such that the hypothesis that $Z = z$ is compatible with everything that $\mathsf{O}$ has seen so far. Then at this point, $\mathsf{O}$'s posterior distribution for $Z$ is uniform over $R$. The following lemma says that with respect to this posterior distribution, $\mathsf{O}$ can choose $f_i$ such that either $\mathsf{O}$ will learn $\Omega(1)$ bits of information about $Z$ based on $Y_i$, or else $\mathsf{S}$ will have a failure probability of $\Omega(1)$ in round $i$.
	
	\begin{lem} \label{lem:lower-bound}
		Suppose $\mathsf{S}$ is a one-query $m$-coin $(\epsilon', \delta')$-steward for $k$ adaptively chosen $(\epsilon, \delta)$-concentrated functions $f_1, \dots, f_k: \{0, 1\}^n \to \R$ and $\mathsf{O}$ is a deterministic owner. Fix $i \in [k]$. For a function $g: \{0, 1\}^n \to \R$, let $\mathsf{O}[g]$ be the owner that simulates $\mathsf{O}$ for rounds $1, 2, \dots, i - 1$, but chooses $g$ in round $i$ regardless of what $\mathsf{O}$ would have chosen. Let $R \subseteq \{0, 1\}^m$ be a nonempty set such that the transcript of the first $i - 1$ rounds of $\mathsf{O} \leftrightarrow \mathsf{S}(Z)$ is the same for every $Z \in R$. Assume $\delta \geq 2^{-n}$. Then there exists $g$ that is $(\epsilon, \delta)$-concentrated at $\mu$ such that either
		\begin{enumerate}
			\item $\max_{y \in \R} \Pr_{Z \in R}[Y_i = y \text{ in } \mathsf{O}[g] \leftrightarrow \mathsf{S}(Z)] \leq 0.8$, or \label{cond:learn}
			\item $\Pr_{Z \in R}[|Y_i - \mu_i| > \epsilon' \text{ in } \mathsf{O}[g] \leftrightarrow \mathsf{S}(Z)] \geq 0.2$. \label{cond:fail}
		\end{enumerate}
	\end{lem}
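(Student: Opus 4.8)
The plan is to begin by distilling the round-$i$ interaction into two fixed functions of the steward's randomness. Since $\mathsf{S}$ is a one-query steward, in round $i$ it must commit to its query point $X_i$ before it has any information about $f_i$; together with the assumption that the transcript of rounds $1,\dots,i-1$ is the same for all $Z\in R$, this shows that there is a function $\xi$ with $X_i=\xi(Z)$ for $Z\in R$, and a function $\phi$ with $Y_i=\phi\!\left(Z,f_i(\xi(Z))\right)$. So the owner influences the round only through the single real number $f_i(\xi(Z))$, and its levers are two: a constant function $g\equiv c$ (which makes this number equal to $c$ for every $Z$), and a ``spike'' function $g^W_c$ equal to $c$ on a set $W$ with $|W|\le\delta 2^n$ and $0$ off it (which is $(\epsilon,\delta)$-concentrated at $0$ and makes the number equal $c$ exactly on $\xi^{-1}(W)$). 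Here the hypothesis $\delta\ge 2^{-n}$ is used: it guarantees such $W$ can be nonempty.

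The core is a case analysis. First, applying $g\equiv c$ shows that for each $c$, either $\phi(\cdot,c)$ puts probability $\le 0.8$ on every value over $Z\in R$ (so the first conclusion holds), or it is $>0.8$-concentrated at some $y_c$, and if $|y_c-c|>\epsilon'$ then the second conclusion holds. So I may assume that for every $c$ the function $\phi(\cdot,c)$ is $>0.8$-concentrated at a $y_c$ with $|y_c-c|\le\epsilon'$; in particular well-separated constants produce well-separated targets $y_c$. Next I would split on how concentrated $\xi|_R$ is. If some $W$ with $|W|\le\delta 2^n$ carries $\xi(Z)$ with probability at least a fixed constant, I take $g=g^W_v$ with $v$ a large fixed multiple of $\epsilon'$: on the constant-probability event $\xi(Z)\in W$ the output is $\phi(Z,v)$, which equals $y_v$ (far from $0$, the concentration point of $g^W_v$) for all but a $<0.2$-fraction of $Z\in R$, forcing the second conclusion. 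Otherwise $\xi|_R$ is spread, and I would use a small perturbation $g$ of the zero function taking a few values $c\in[-\epsilon,\epsilon]$, chosen by greedily grouping the values of $\xi|_R$ into balanced classes so that each $c$-valued group carries a probability bounded away from $0$ and $1$; then $Y_i$ equals the distinct targets $y_c$ with nonnegligible probability each, so no value of $Y_i$ reaches probability $0.8$, giving the first conclusion. One needs the $y_c$ to be genuinely distinct, which is where one uses $\epsilon'<\epsilon$; the complementary case $\epsilon'\ge\epsilon$ should be essentially trivial, since a one-query steward cannot be more accurate than the function value it sees.

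I expect the main obstacle to be the numerical bookkeeping, because the thresholds $0.8$ and $0.2$ in the two conclusions are tight. The three regimes above --- a value of $\xi|_R$ of large mass, a small set capturing a constant fraction, and genuine spread --- must be cut at exactly the right weights, and in the intermediate situation, where $\xi|_R$ is spread overall but a medium-weight value or group resists both the spike construction and a clean balanced grouping, it takes care to choose the perturbation (how many values, which magnitudes) so that the two resulting modes of $Y_i$ each clear probability $0.2$ while the spike construction simultaneously fails to reach failure probability $0.2$. This is best organized by phrasing the regime boundaries in terms of the maximum mass that a size-$\lfloor\delta 2^n\rfloor$ set can capture from the distribution of $\xi(Z)$, $Z\in R$, and working with the sorted list of $\xi$-masses. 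Everything else --- the reduction to $\xi$ and $\phi$, the constant-function dichotomy, and the arithmetic of $g^W_v$ --- is routine once this setup is in place.
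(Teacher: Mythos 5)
Your high-level architecture — reduce round $i$ to the pair $(\xi, \phi)$, apply a dichotomy over constant functions, build a spike for a concentrated query point, and split mass in the spread case — matches the paper's proof in spirit, and the reduction and the constant-function dichotomy are done correctly. But there are two concrete gaps that the paper's version carefully avoids.

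First, the dependence on $\epsilon' < \epsilon$. You restrict the values of your perturbation to $c \in [-\epsilon, \epsilon]$ and then need the associated targets $y_c$ to be distinct, which forces you to assume $\epsilon' < \epsilon$; you then wave off $\epsilon' \ge \epsilon$ as ``essentially trivial.'' It is not: the lemma must hold for all $\epsilon'$, and as $\epsilon'$ grows, \cref{cond:fail} becomes \emph{harder} to establish, so the ``steward cannot be more accurate than its query'' heuristic does not give you anything. The paper circumvents this entirely by considering constants $g_j \equiv \epsilon j$ over \emph{all} $j \in \Z$, not just near $0$. If the $>0.8$-concentrated targets $y_j$ were independent of $j$, some $g_j$ with $|\epsilon j - y_j| > \epsilon'$ already satisfies \cref{cond:fail}; otherwise there is a $j$ with $y_j \neq y_{j+1}$, and a function taking values in $\{\epsilon j, \epsilon(j+1)\}$ is still $(\epsilon, 0)$-concentrated (indeed $(\epsilon/2, 0)$-concentrated). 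This makes the argument robust for any relation between $\epsilon$ and $\epsilon'$.

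Second, ``greedily grouping the values of $\xi|_R$ into balanced classes'' is underspecified in a way that matters. The natural reading — partition the query points into two sets each carrying mass in, say, $[0.4, 0.6]$ — fails already when $\xi|_R$ is supported on three points of mass $1/3$ each (any two-block partition has masses $1/3$ and $2/3$), even though every singleton has mass $<0.4$. The paper's construction is not a mass-balanced split of the query points; it is an \emph{alternating game} between two ``players'' $A_j$ and $A_{j+1}$ (the sets where the steward's answer to the constant $\epsilon j$, resp.\ $\epsilon(j+1)$, lands on its mode). Each player greedily claims the remaining query point that covers the most of its \emph{own} set of randomness strings, and the crucial invariant is that the $z$'s a player loses on the opponent's turn are bounded by the $z$'s it gained on its own previous turn. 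The threshold $0.4$ on the max singleton mass is exactly what bounds the first-turn loss and makes the arithmetic close at $\frac{1}{2}(0.8 - 0.4)|R| = 0.2|R|$ for each player. Your ``probability bounded away from $0$ and $1$'' does not pin down this invariant, and the intermediate regime you worry about (a medium-weight value resisting both constructions) is precisely what this alternation is designed to absorb; it is not a bookkeeping nuisance but the main step. Relatedly, the paper needs only a single-point spike (possible since $\delta \geq 2^{-n}$); the extra generality of $|W| \le \delta 2^n$ buys nothing and slightly muddies the split of cases.
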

	
	\begin{proof}
		For each $j \in \Z$, let $g_j: \{0, 1\}^n \to \R$ be constant at $\epsilon j$. If some $g_j$ satisfies \cref{cond:learn}, we're done. So assume that for each $g_j$, there is some $y_j \in \R$ such that $\Pr_{Z \in R}[Y_j = y_j \text{ in } \mathsf{O}[g] \leftrightarrow \mathsf{S}(Z)] > 0.8$. If $y_j$ does not depend on $j$, then since $0.2 < 0.8$, there is some $g_j$ that satisfies \cref{cond:fail}, so we are again done. Therefore, assume there is some $j$ such that $y_j \neq y_{j + 1}$.
		
		Define $q: R \to \{0, 1\}^n$ by letting $q(Z) = $ the value $X_i$ chosen by $\mathsf{S}$ in $\mathsf{O} \leftrightarrow S(Z)$. First, assume there is some $x^*$ such that $\Pr_{Z \in R}[q(Z) = x^*] \geq 0.4$. For $s \in \{\pm 1\}$, define $g^{s}: \{0, 1\}^n \to \R$ by
		\[
			g^{s}(x) = \begin{cases}
				0 & \text{if } x = x^* \\
				s \cdot 2\epsilon' & \text{otherwise.}
			\end{cases}
		\]
		Then $g^s(x)$ is $(0, 2^{-n})$-concentrated at $s \cdot 2\epsilon'$. Let $\mathsf{O}'$ be the randomized owner that tosses a coin to decide whether to simulate $\mathsf{O}[g^{+1}]$ or $\mathsf{O}[g^{-1}]$. Then when $Z \in R$ is chosen uniformly at random, in $\mathsf{O}' \leftrightarrow \mathsf{S}(Z)$, there is a $0.4$ chance that $f_i(X_i) = 0$, in which case $\mathsf{S}(Z)$ has only a 50\% chance of correctly guessing $s$. This shows that $\Pr_{Z \in R}[|Y_i - \mu_i| > \epsilon' \text{ in } \mathsf{O}' \leftrightarrow \mathsf{S}(Z)] \geq 0.2$, and hence either $g^{+1}$ or $g^{-1}$ satisfies \cref{cond:fail}, so we are again done. Therefore, assume that for every $x^*$, $\Pr_{Z \in R}[q(Z) = x^*] < 0.4$.
		
		For $t \in \{j, j + 1\}$, let
		\[
			A_t = \{Z \in R: Y_i = y_t \text{ in } \mathsf{O}[g_t] \leftrightarrow \mathsf{S}(Z)\},
		\]
		so that $|A_t| > 0.8|R|$. We define $g$ by the following greedy algorithm. Two players, which we identify with $A_j$ and $A_{j + 1}$, alternate taking turns. When it is $A_t$'s turn, she finds the string $x \in \{0, 1\}^n$ such that $g(x)$ is not yet defined that maximizes $q^{-1}(x) \cap A_t$, and defines $g(x) = \epsilon t$. This continues for $2^n$ turns until $g$ is defined everywhere.
		
		Clearly, $g$ thus defined is $(\epsilon, 0)$-concentrated. We will show that $g$ satisfies \cref{cond:learn}. Proof: Say $z \in \{0, 1\}^m$ is \emph{good for $A_t$} if $z \in A_t$ and $g(q(z)) = \epsilon t$. In these terms, on $A_t$'s turn, she defines $g$ on one more point in order to maximize the number of $z$ that become good for $A_t$. Say that $z \in \{0, 1\}^m$ is \emph{bad for $A_t$} if $z \in A_t$ and $g(q(z)) \neq \epsilon t$. When it is not $A_t$'s turn, some $z$ may become bad for $A_t$, but the crucial point is that the number of $z$ that become bad for $A_t$ is \emph{at most} the number of $z$ that became \emph{good} for $A_t$ in the \emph{previous} turn (simply because of the greedy choice that $A_t$ made in the previous turn.) This would show that half of $A_t$ is good for $A_t$, except for one annoyance: the first turn, where some $z$ become bad for the second player with no corresponding previous turn, when $z$ became good. But we already showed that for every $x$, $|q^{-1}(x)| \leq 0.4 |R|$, so the first turn does not matter too much: at the end of the construction, for each $t$, the number of $z$ that are good for $A_t$ is at least
		\[
			\frac{1}{2}(|A_t| - 0.4|R|) \geq \frac{1}{2}(0.8 |R| - 0.4 |R|) = 0.2 |R|.
		\]
		By construction, if $z$ is good for $A_t$, then $Y_i = y_t$ in $\mathsf{O}[g] \leftrightarrow \mathsf{S}(z)$. Therefore, for each $t$, $\Pr_{Z \in R}[Y_i = y_t \text{ in } \mathsf{O}[g] \leftrightarrow \mathsf{S}(Z)] \geq 0.2$, which implies \cref{cond:learn} since $y_j \neq y_{j + 1}$.
	\end{proof}
	
	Having proved \cref{lem:lower-bound}, we are ready to prove our randomness complexity lower bound. The idea is that $\mathsf{O}$ will spend the first $k - 1$ rounds learning as much information as possible about $\mathsf{S}$'s randomness string using \cref{lem:lower-bound} (unless she gets lucky and is able to cause $\mathsf{S}$ to have an $\Omega(1)$ failure probability in one of these rounds, in which case she will take the opportunity.) Then, in round $k$, $\mathsf{O}$ uses everything she's learned about $\mathsf{S}$'s randomness string to choose $f_k$ so as to maximize $\mathsf{S}$'s failure probability in that round.
	
	\begin{thm} \label{thm:lower-bound}
		Suppose $\mathsf{S}$ is a one-query $m$-coin $(\epsilon', \delta')$-steward for $k$ adaptively chosen $(\epsilon, \delta)$-concentrated functions $f_1, \dots, f_k: \{0, 1\}^n \to \R^d$. Assume $\delta' < 0.2$ and $\delta \geq 2^{-n}$. Then $m \geq n + \Omega(k) - \log_2(\delta'/\delta)$.
	\end{thm}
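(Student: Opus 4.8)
The plan is to exhibit a single (deterministic, computationally unbounded) owner $\mathsf{O}^*$ and argue that if $m$ is below the claimed bound then $\mathsf{O}^* \leftrightarrow \mathsf{S}$ fails with probability more than $\delta'$. First I would reduce to $d = 1$: a one-query $(\epsilon',\delta')$-steward for $\R^d$ restricts, by using only the first coordinate of each $f_i$ and of each $Y_i$ (padding with zeros), to a one-query $(\epsilon',\delta')$-steward for $\R$ with the same coin count, since dropping coordinates only decreases $\|Y_i-\mu_i\|_\infty$. We may also assume $\epsilon' > 0$, as otherwise the statement is trivial. Then I would view $\mathsf{S}$'s coin string $Z$ as uniform on $\{0,1\}^m$ and picture the decision tree of $\mathsf{O}^* \leftrightarrow \mathsf{S}$: a node $v$ at depth $i-1$ carries the posterior set $R_v \subseteq \{0,1\}^m$ of coin strings consistent with the transcript leading to $v$, and for each fixed $i$ the sets $\{R_v\}$ partition $\{0,1\}^m$.

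The owner $\mathsf{O}^*$ is built greedily. At a reachable node $v$ at depth $i-1$ with $i \le k-1$, it applies \cref{lem:lower-bound} with the owner $\mathsf{O}^*$ itself and $R = R_v$ (which is nonempty, and $\delta \ge 2^{-n}$ by hypothesis), obtaining an $(\epsilon,\delta)$-concentrated $g_v$. If $g_v$ satisfies \cref{cond:learn} (a \emph{learning node}), $\mathsf{O}^*$ plays $g_v$ and continues; then every child $w$ of $v$ has $|R_w| \le 0.8\,|R_v|$, since children correspond to the possible values of $Y_i$ and \cref{cond:learn} bounds each such conditional probability by $0.8$. If $g_v$ satisfies \cref{cond:fail} (a \emph{failure node}), $\mathsf{O}^*$ plays $g_v$ and then plays constant functions forever; conditioned on $Z \in R_v$, the steward's round-$i$ output is more than $\epsilon'$ from $\mu_i$ with probability at least $0.2$, so the overall conditional failure probability is at least $0.2$. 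Since $\mathsf{O}^*$ stops applying \cref{lem:lower-bound} after a failure node, every root-to-leaf path has at most one failure node, so the $R_v$ over failure nodes are pairwise disjoint; let $P = \Pr_Z[Z \in \bigcup_{v\text{ failure}} R_v]$ and let $A$ be the complementary event that no failure node is reached, so $\Pr[Z\in A] = 1-P$. On $A$ the run reaches a depth-$(k-1)$ node all of whose ancestors are learning nodes, hence a node with posterior of size at most $B := 0.8^{k-1} 2^m$.

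Round $k$ is handled by a direct trap, because \cref{lem:lower-bound} only promises \cref{cond:learn} or \cref{cond:fail} and the former is useless in the final round. At a depth-$(k-1)$ node $v$, let $q(Z)$ be the single point $\mathsf{S}$ would query in round $k$; this does not depend on the round-$k$ function, since a one-query steward must fix its query point before seeing any value of $f_k$, so given $Z \in R_v$ it is a function of $Z$ alone. Choose $W \subseteq \{0,1\}^n$ with $|W| = \lfloor \delta 2^n\rfloor$ maximizing the mass of $\{Z\in R_v: q(Z)\in W\}$; a counting argument gives $\Pr_{Z\in R_v}[q(Z)\in W] \ge \min(1,\ \lfloor\delta 2^n\rfloor/|R_v|)$. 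Then $\mathsf{O}^*$ plays whichever of $g^{+},g^{-}$ forces more failure, where $g^{\pm}$ equals $0$ on $W$ and $\pm 2\epsilon'$ off $W$; this is $(\epsilon,\delta)$-concentrated at $\pm 2\epsilon'$, and the two concentration points are $4\epsilon'$ apart, so the steward cannot be within $\epsilon'$ of both. Whenever $q(Z)\in W$, the steward's one query returns $0$ regardless of the sign, so $Y_k$ is the same under $g^{+}$ and $g^{-}$; hence at least half of those $Z$ fail under one of the two, so that choice forces failure on $\ge \tfrac12\min(1,\lfloor\delta 2^n\rfloor/|R_v|) \ge \tfrac12\min(1,\lfloor\delta 2^n\rfloor/B)$ of $R_v$. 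Writing $\theta := \tfrac12\min(1,\lfloor\delta 2^n\rfloor/B)$, the overall failure probability is at least $0.2\,P + \theta(1-P) \ge \min(0.2,\theta)$. Since $\delta' < 0.2$ we must have $\theta \le \delta'$, and using $\lfloor\delta 2^n\rfloor \ge \delta 2^n/2$ (valid since $\delta 2^n \ge 1$) this rearranges to $m \ge n + (k-1)\log_2(5/4) - \log_2(\delta'/\delta) - 2 = n + \Omega(k) - \log_2(\delta'/\delta)$.

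The main obstacle I anticipate is the last-round argument together with the parameter bookkeeping. \cref{lem:lower-bound} cannot be invoked in round $k$, so one must argue from scratch that a small posterior (guaranteed only when no failure node occurred) lets the owner set a valid $(\epsilon,\delta)$-concentrated trap, and carefully track the $\tfrac12\min(1,\lfloor\delta 2^n\rfloor/|R_v|)$ bound. The second delicate point is accounting: the $0.8$-shrinkage per learning round, the resulting size bound $B = 0.8^{k-1}2^m$ in the absence of a failure node, and the $0.2$ mass harvested whenever a failure node appears must be combined so that the overall failure probability is at least $\min(0.2,\theta)$, with $\theta$ exactly large enough to contradict $\delta' < 0.2$ unless $m$ meets the claimed bound. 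A minor point is that \cref{lem:lower-bound} may hand back a $g$ realizing only \cref{cond:fail}, so the strategy must be prepared to cash out at that round instead of insisting on learning.
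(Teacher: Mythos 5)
Your proof is correct and follows essentially the same route as the paper's: build an owner via \cref{lem:lower-bound} for rounds $1,\dots,k-1$, harvest a $0.2$ conditional failure probability whenever \cref{cond:fail} occurs, shrink the posterior by a factor $0.8$ per learning round so that on the no-failure event the final posterior has size at most $0.8^{k-1}2^m$, and set a concentration trap in round $k$ using a $\pm 2\epsilon'$ dichotomy supported off a set of query points of size $\lfloor \delta 2^n\rfloor$. The only stylistic difference is in the round-$k$ trap: you choose the set $W\subseteq\{0,1\}^n$ and the sign deterministically (via a max-covering argument giving $\Pr_{Z\in R_v}[q(Z)\in W]\ge\min(1,\lfloor\delta 2^n\rfloor/|R_v|)$), whereas the paper picks a random subset $S\subseteq R$ and a random sign $s$; both yield the same bound, and your version is, if anything, slightly cleaner since it avoids any subtlety about the non-injectivity of $q$ when bounding the hitting probability.
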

	
	\begin{proof}
		Without loss of generality, assume $d = 1$. Let $\mathsf{O}$ be the following owner:
		\begin{enumerate}
			\item For $i = 1$ to $k$:
			\begin{enumerate}
				\item Let $y_1, y_2, \dots, y_{i - 1}$ be the responses received so far.
				\item Let $R \subseteq \{0, 1\}^m$ be the set of $z$ such that in $\mathsf{O} \leftrightarrow \mathsf{S}(z)$, $Y_j = y_j$ for every $j < i$. (By induction, we have already defined the behavior of $\mathsf{O}$ in rounds $1, 2, \dots, i - 1$, so $R$ is well-defined. In other words, $R$ is the set of $z$ that are compatible with what $\mathsf{O}$ has seen so far.)
				\item If $i < k$:
				\begin{enumerate}
					\item Choose $f_i = g$, where $g$ is the function guaranteed by \cref{lem:lower-bound}. (Again, $\mathsf{O}$ is already defined and deterministic for rounds $1, 2, \dots, i - 1$, so we can sensibly apply the lemma.)
				\end{enumerate}
				\item Otherwise, if $i = k$:
				\begin{enumerate}
					\item Pick $S \subseteq R$, $|S| = \min\{\lfloor \delta 2^n \rfloor, |R|\}$ uniformly at random, pick $s \in \{\pm 1\}$ independently and uniformly at random, and choose
					\[
						f_k(x) = \begin{cases}
							0 & \text{if } x \in q(S) \\
							s \cdot 2\epsilon' & \text{otherwise.}
						\end{cases}
					\]
					(Note that $f_k$ is $(0, \delta)$-concentrated at $s \cdot 2 \epsilon'$, because $|q(S)| \leq |S| \leq \delta 2^n$.)
				\end{enumerate}
			\end{enumerate}
		\end{enumerate}
		To analyze $\mathsf{O}$, in $\mathsf{O} \leftrightarrow \mathsf{S}$, say that $\mathsf{O}$ \emph{tries to win} in round $i$ if either $i = k$ or else $i < k$ and the function $f_i$ chosen satisfies \cref{cond:fail} in \cref{lem:lower-bound}. For a string $z \in \{0, 1\}^m$, let $w(z) \in [k]$ be the index of the first round in which $\mathsf{O}$ tries to win in $\mathsf{O} \leftrightarrow \mathsf{S}(z)$, and let $\tau(z)$ be the transcript of rounds $1, 2, \dots, w(z) - 1$ in $\mathsf{O} \leftrightarrow \mathsf{S}(z)$. Note that since $\mathsf{O}$ is deterministic in rounds $1, 2, \dots, k - 1$, $w(z)$ and $\tau(z)$ are not random variables. Define an equivalence relation on $\{0, 1\}^m$ by saying that $z \sim z'$ if and only if $\tau(z) = \tau(z')$. Say $\mathsf{O}$ uses $v$ random bits.  We first show that for each equivalence class $\bar{z}$,
		\begin{equation} \label{eqn:fail-prob}
			\Pr_{Z \in \bar{z}, V \in \{0, 1\}^v}[\|Y_{w(\bar{z})} - \mu_{w(\bar{z})}\|_{\infty} > \epsilon' \text{ in } \mathsf{O}(V) \leftrightarrow \mathsf{S}(Z)] \geq \min\{0.2, \delta \cdot (1/0.8)^{k - 1} \cdot 2^{n - m - 2}\}.
		\end{equation}
		Proof: Observe that in round $w(\bar{z})$, $\mathsf{O}$'s set $R$ is precisely $\bar{z}$. If $w(\bar{z}) < k$, then \cref{cond:fail} of \cref{lem:lower-bound} immediately implies that the failure probability in \cref{eqn:fail-prob} is at least $0.2$. Suppose instead that $w(\bar{z}) = k$. Then in every previous round, $\mathsf{O}$ did not try to win, i.e. $\mathsf{O}$ chose a function satisfying \cref{cond:learn} of \cref{lem:lower-bound}. This implies that in every previous round, $\mathsf{O}$'s set $R$ decreased in size by a factor of $0.8$. So at the beginning of round $k$, $|R| \leq 0.8^{k - 1} \cdot 2^m$. The probability (over $Z \in \bar{z}$) that $\mathsf{S}$ chooses $X_k$ such that $f_k(X_k) = 0$ is
		\begin{align*}
			\frac{|S|}{|R|} &= \frac{\min\{\lceil \delta 2^n \rceil, |R|\}}{|R|} \\
			&\geq \min\left\{1, \delta 2^{n - m - 1} (1/0.8)^{k - 1}\right\}.
		\end{align*}
		Conditioned on $f_k(X_k) = 0$, the probability of the event in \cref{eqn:fail-prob} is at least $0.5$, because conditioned on $f_k(X_k) = 0$, $s$ is independent of everything $\mathsf{S}$ has seen. Therefore, the probability of the event in \cref{eqn:fail-prob} is at least $\min\{0.5, \delta 2^{n - m - 2} (1/0.8)^{k - 1}\}$, completing the proof of \cref{eqn:fail-prob}.
		
		Now, to prove the theorem, observe that
		\begin{align*}
			\delta' &\geq \Pr_{Z \in \{0, 1\}^m, V \in \{0, 1\}^v}[\max_i \|\mu_i - Y_i\|_{\infty} > \epsilon' \text{ in } \mathsf{O}(V) \leftrightarrow \mathsf{S}(Z)] \\
			&\geq \Pr_{Z \in \{0, 1\}^m, V \in \{0, 1\}^v}[\|\mu_{w(Z)} - Y_{w(Z)}\|_{\infty} > \epsilon' \text{ in } \mathsf{O}(V) \leftrightarrow \mathsf{S}(Z)] \\
			&= \sum_{\bar{z}} \Pr_{Z \in \{0, 1\}^m} [Z \in \bar{z}] \cdot \Pr_{Z' \in \bar{z}, V \in \{0, 1\}^v} [\|\mu_{w(\bar{z})} - Y_{w(\bar{z})}\|_{\infty} > \epsilon' \text{ in } \mathsf{O}(V) \leftrightarrow \mathsf{S}(Z')] \\
			&\geq \sum_{\bar{z}} \Pr_{Z \in \{0, 1\}^m} [Z \in \bar{z}] \cdot \min\{0.2, \delta \cdot (1/0.8)^{k - 1} 2^{n - m - 2}\} \\
			&= \min\{0.2, \delta \cdot (1/0.8)^{k - 1} 2^{n - m - 2}\}.
		\end{align*}
		We assumed that $\delta' < 0.2$, so we can conclude that $\delta' \geq \delta \cdot (1/0.8)^{k - 1} 2^{n - m - 2}$. Rearranging proves that
		\begin{align*}
			m &\geq (n - 2) + (k - 1) \log_2(1/0.8) - \log_2(\delta'/\delta) \\
			&\geq n + \Omega(k) - \log_2(\delta'/\delta),
		\end{align*}
		completing the proof.
	\end{proof}
	
	\section{Directions for further research}
	
	The problem of randomness stewardship is fundamental, and the main open problem left by this work is to construct optimal stewards. The following are examples of concrete questions along these lines.
	\begin{itemize}
		\item Does every one-query steward with failure probability $\delta' \leq O(k \delta)$ have randomness complexity $n + \Omega(k \log(d + 1))$? (Is the randomness complexity of our main steward near-optimal?)
		\item Does there exist a one-query $(O(\epsilon), k\delta + 0.1)$-steward with randomness complexity $n + O(k \log(d + 1))$? (Can the error of our main steward be improved?)
	\end{itemize}
	We explained in this work how the steward model captures some older derandomization constructions, and we gave new applications of stewards. We hope that future researchers find more connections and applications.
	
	\section{Acknowledgments}
	
	We thank David Zuckerman for observations about block decision trees. We thank an anonymous reviewer for pointing out Impagliazzo and Zuckerman's previous work on this subject \cite{iz89, imp92}.
	
	\bibliographystyle{alpha}
	\bibliography{steward}
	
	\appendix
	
	\section{Generalized shifting and rounding algorithm} \label{apx:generalized-shifting-and-rounding}
	
	In this section, we show how to generalize the steward $\mathsf{S}_0$ to achieve a tradeoff between its error and the branching factor of the certification tree $T_{\mathsf{O}}$. Fix a factorization $d = d_0 d_1$. Partition $[d]$ as $[d] = J_1 \cup J_2 \cup \dots \cup J_{d_1}$, where $|J_t| = d_0$ for each $t$. Instead of partitioning $\R$ into intervals of length $2(d + 1) \epsilon$, partition $\R$ into intervals of length $2(d_0 + 1)\epsilon$. Let $\mathcal{I}$ denote the set of these intervals. The following algorithm for computing $Y_i$ from $W_i$ generalizes that of \cref{sec:base-construction}:
	\begin{enumerate}
		\item For each $t \in [d_1]$:
		\begin{enumerate}
			\item Find $\Delta_{it} \in [d_0 + 1]$ such that for every $j \in J_t$, there is a single interval in $\mathcal{I}$ that entirely contains $[W_{ij} + (2\Delta_{it} - 1)\epsilon, W_{ij} + (2\Delta_{it} + 1)\epsilon]$. (Such a $\Delta_{it}$ exists by \cref{lem:delta-exists}.)
			\item For every $j \in J_t$, set $Y_{ij} = \mathsf{Round}(W_{ij} + 2\Delta_{it} \epsilon)$.
		\end{enumerate}
	\end{enumerate}
	
	The following lemma is the appropriate generalization of \cref{lem:two-mediators}:
	\begin{lem} \label{lem:generalized-two-mediators}
		Assume $\delta < 1/2$. Let $\Sigma = [d_0 + 1]^{d_1} \cup \{\bot\}$. For any deterministic owner $\mathsf{O}$, there exists a $(k, n, \Sigma)$ block decision tree $T_{\mathsf{O}}$ with the following properties.
		\begin{enumerate}
			\item For any internal node $v$, $\Pr_{X \in \{0, 1\}^n}[v(X) = \bot] \leq \delta$. \label{cond:generalized-low-failure-prob}
			\item Fix $X_1, \dots, X_k \in \{0, 1\}^n$, and suppose that the path from the root to $T_{\mathsf{O}}(X_1, \dots, X_k)$ does not include any $\bot$ nodes. Then in $\mathsf{O} \leftrightarrow \mathsf{S}_0(X_1, \dots, X_k)$, $\max_i \|Y_i - \mu_i\|_{\infty} \leq O(d_0 \epsilon)$. \label{cond:generalized-model}
		\end{enumerate}
	\end{lem}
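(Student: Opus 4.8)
The plan is to mimic the proof of \cref{lem:two-mediators} essentially verbatim, replacing the single shift value $\Delta_i \in [d+1]$ by the tuple $\Delta_i = (\Delta_{i1}, \dots, \Delta_{id_1}) \in [d_0+1]^{d_1}$ produced by the generalized algorithm above. As before, for a function $f$ that is $(\epsilon, \delta)$-concentrated let $\mu(f)$ be the lexicographically smallest point at which $f$ is concentrated. For $Y \in \R^d$, call a tuple $\Delta = (\Delta_1, \dots, \Delta_{d_1}) \in [d_0+1]^{d_1}$ \emph{$f$-compatible} with $Y$ if $Y_j = \mathsf{Round}(\mu(f)_j + 2\Delta_t \epsilon)$ for every $t \in [d_1]$ and every $j \in J_t$. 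I would then define the auxiliary (computationally inefficient, many-query) steward $\mathsf{S}_0'$ exactly as in the proof of \cref{lem:two-mediators}: it runs $\mathsf{S}_0$ to obtain $Y_i$, queries $f_i$ everywhere to learn $\mu(f_i)$, sets $\hat{\Delta}_i$ to be the lexicographically smallest $f_i$-compatible tuple if one exists and $\bot$ otherwise, and outputs $\hat{Y}_i$ obtained by rounding $\mu(f_i)$ coordinatewise according to $\hat{\Delta}_i$ (or the zero vector if $\hat{\Delta}_i = \bot$). The tree $T_{\mathsf{O}}$ is the function mapping $(X_1, \dots, X_i)$ to the node reached by following the symbols $\hat{\Delta}_1, \dots, \hat{\Delta}_i$ arising in $\mathsf{O} \leftrightarrow \mathsf{S}_0'(X_1, \dots, X_i)$, with $\Sigma = [d_0+1]^{d_1} \cup \{\bot\}$.

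The first thing to check is that $T_{\mathsf{O}}$ is realizable as a genuine $(k, n, \Sigma)$ block decision tree, and this is the one step that deserves a moment's care. As in the original argument, $\hat{\Delta}_1, \dots, \hat{\Delta}_{i-1}$ determine the state of $\mathsf{O}$ after $i-1$ rounds and hence the function $f_i$; since $\mathsf{S}_0$ is memoryless, $Y_i$ is a function of $f_i$ and $X_i$, and then $\hat{\Delta}_i$ (including the lexicographic tie-break) is a deterministic function of $\mu(f_i)$ and $Y_i$. Composing, $\hat{\Delta}_i$ is a function of $(\hat{\Delta}_1, \dots, \hat{\Delta}_{i-1}, X_i)$, so each internal node $v$ at the appropriate depth can be assigned the transition function $v(X_i) = \varphi(\hat{\Delta}_1, \dots, \hat{\Delta}_{i-1}, X_i)$.

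For \cref{cond:generalized-low-failure-prob}: by concentration, with probability $\geq 1 - \delta$ we have $|W_{ij} - \mu(f_i)_j| \leq \epsilon$ for every $j \in [d]$. When this occurs, for each $t \in [d_1]$ and each $j \in J_t$ the points $W_{ij} + 2\Delta_{it}\epsilon$ and $\mu(f_i)_j + 2\Delta_{it}\epsilon$ both lie in the segment $[W_{ij} + (2\Delta_{it}-1)\epsilon, W_{ij}+(2\Delta_{it}+1)\epsilon]$, which fits inside a single interval of $\mathcal{I}$ by the choice of $\Delta_{it}$; hence they have the same $\mathsf{Round}$ value. Thus the tuple $\Delta_i$ used by $\mathsf{S}_0$ is itself $f_i$-compatible with $Y_i$, so $\hat{\Delta}_i \neq \bot$, i.e. $\Pr_X[v(X) = \bot] \leq \delta$. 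For \cref{cond:generalized-model}: if the root-to-leaf path avoids $\bot$ then $\hat{\Delta}_i \neq \bot$ for every $i$, so every $Y_{ij}$ with $j \in J_t$ equals $\mathsf{Round}(\mu(f_i)_j + 2\hat{\Delta}_{it}\epsilon)$ with $\hat{\Delta}_{it} \in [d_0+1]$; since $2\hat{\Delta}_{it}\epsilon \leq 2(d_0+1)\epsilon$ and $\mathsf{Round}$ moves a point by at most $(d_0+1)\epsilon$, we get $|Y_{ij} - \mu(f_i)_j| \leq 3(d_0+1)\epsilon$. Exactly as in the original proof, $\hat{\Delta}_i \neq \bot$ forces $Y_i = \hat{Y}_i$, so the values $f_1,\dots,f_k,Y_1,\dots,Y_k$ in $\mathsf{O} \leftrightarrow \mathsf{S}_0'$ coincide with those in $\mathsf{O} \leftrightarrow \mathsf{S}_0$, giving $\|Y_i - \mu(f_i)\|_{\infty} \leq 3(d_0+1)\epsilon$ there; and since $\delta < 1/2$, $\|\mu(f_i) - \mu_i\|_{\infty} \leq 2\epsilon$, so $\|Y_i - \mu_i\|_{\infty} \leq (3d_0+5)\epsilon = O(d_0\epsilon)$.

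There is no genuinely new idea relative to \cref{lem:two-mediators}: the only changes are that the interval length shrinks from $2(d+1)\epsilon$ to $2(d_0+1)\epsilon$ and the single shift is replaced by $d_1$ independent shifts applied to the groups $J_1,\dots,J_{d_1}$, which is precisely what trades the error bound $O(d_0\epsilon)$ against the alphabet size $(d_0+1)^{d_1}+1$. Accordingly, the "main obstacle" is purely bookkeeping — keeping the per-group indexing straight and re-verifying the realizability of $T_{\mathsf{O}}$ as noted above — rather than any conceptual difficulty.
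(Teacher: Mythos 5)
Your proposal is correct and follows essentially the same path as the paper's own proof: you define tuple-valued $f$-compatibility, introduce the auxiliary many-query steward $\mathsf{S}_0'$ that decompresses the tuple $\hat{\Delta}_i$ (or outputs $\bot$), verify realizability of $T_{\mathsf{O}}$ as a block decision tree, and derive both conditions with the same bound $(3d_0+5)\epsilon$. No substantive differences.
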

	
	The proof of \cref{lem:generalized-two-mediators} is essentially the same as the proof of \cref{lem:two-mediators}; we record the details below.
	
	\begin{proof}[Proof of \cref{lem:generalized-two-mediators}]
		For a vector $Y \in \R^d$ and function $f: \{0, 1\}^n \to \R^d$ that is $(\epsilon, \delta)$-concentrated at some point, say that a vector $(\Delta_1, \dots, \Delta_{d_1})$ is \emph{$f$-compatible} with $Y$ if $Y_j = \mathsf{Round}(\mu(f)_j + 2 \Delta_t \epsilon)$ for every $t \in [d_1]$ and every $j \in J_t$. Just for the analysis, let $\mathsf{S}_0'$ be the following (many-query) steward:
		\begin{enumerate}
			\item For $i = 1$ to $k$:
			\begin{enumerate}
				\item Give $f_i$ to $\mathsf{S}_0$, allowing it to make its one query and choose its output vector $Y_i \in \R^d$.
				\item Query $f_i$ at \emph{every} point in its domain, thereby learning the entire function.
				\item Compute
				\[
				\hat{\Delta}_i = \begin{cases}
				\text{the first } (\Delta_1, \dots, \Delta_{d_1}) \in [d_0 + 1]^{d_1} \text{ $f_i$-compatible with } Y_i & \text{if any exist} \\
				\bot & \text{otherwise.}
				\end{cases}
				\]
				\item Output $\hat{Y}_i = (\hat{Y}_{i1}, \dots, \hat{Y}_{id})$, where for each $t \in [d_1]$ and each $j \in J_t$,
				\[
				\hat{Y}_{ij} = \begin{cases}
				\mathsf{Round}(\mu(f_i)_j + 2\hat{\Delta}_{it} \epsilon) & \text{if } \hat{\Delta}_i \neq \bot \\
				0 & \text{otherwise.}
				\end{cases}
				\]
			\end{enumerate}
		\end{enumerate}
		
		The definition of $T_{\mathsf{O}}$ is exactly the same as in the proof of \cref{lem:two-mediators}, except that $\mathsf{S}_0'$ now refers to the above steward. To prove \cref{cond:generalized-low-failure-prob} in the lemma statement, we must show that in each round of $\mathsf{O} \leftrightarrow \mathsf{S}_0'$, $\Pr[\hat{\Delta}_i = \bot] \leq \delta$. Indeed, by concentration, with probability $1 - \delta$, for every $j$, $|W_{ij} - \mu(f_i)_{j}| \leq \epsilon$. In this case, by the construction of $\mathsf{S}_0$, $W_{ij} + 2\Delta_{it} \epsilon$ and $\mu(f_i)_j + 2\Delta_{it} \epsilon$ are in the same interval in $\mathcal{I}$ for every $t \in [d_1]$ and every $j \in J_t$. Therefore, in this case, there is at least one vector $(\Delta_1, \dots, \Delta_{d_1})$ that is $f_i$-compatible with $Y_i$, namely the vector of $\Delta_{it}$ values used by $\mathsf{S}_0$. To prove \cref{cond:generalized-model} in the lemma statement, suppose the path from the root node to $T_{\mathsf{O}}(X_1, \dots, X_k)$ does not include any $\bot$ nodes. Then in $\mathsf{O} \leftrightarrow \mathsf{S}_0'(X_1, \dots, X_k)$, for every $i$, $\hat{\Delta}_i \neq \bot$. This implies that every $Y_{ij}$ is of the form $\mathsf{Round}(\mu(f_i)_j + 2\hat{\Delta}_{it} \epsilon)$ for some $\hat{\Delta}_{it} \in [d_0 + 1]$. Therefore, $|Y_{ij} - \mu(f_i)_j| \leq 3(d_0 + 1) \epsilon$, since $2\hat{\Delta}_{it} \epsilon \leq 2(d_0 + 1) \epsilon$ and rounding introduces at most $(d_0 + 1) \epsilon$ additional error. Just as in the proof of \cref{lem:two-mediators}, the same bound holds in $\mathsf{O} \leftrightarrow \mathsf{S}_0$. Finally, since $\delta < 1/2$, $\|\mu(f_i) - \mu_i\|_{\infty} \leq 2\epsilon$, so by the triangle inequality, for every $i$, $\|Y_i - \mu_i\|_{\infty} \leq 3(d_0 + 1)\epsilon + 2\epsilon = (3d_0 + 5)\epsilon$.
	\end{proof}
	
	\section{The Saks-Zhou steward} \label{apx:sz}
	
	In this section, for completeness, we give the description and analysis of the Saks-Zhou steward. This algorithm and analysis are the same in spirit  as what appears in \cite{sz99}, but the presentation has been changed to match our framework. None of our results use this steward, but it is interesting to see how the stewards compare.
	
	\begin{prop} \label{prop:sz}
		For any $n, k, d \in \N$ and any $\epsilon, \delta, \gamma > 0$, there exists a one-query $(O(kd\epsilon/\gamma), k\delta + \gamma)$-steward for $k$ adaptively chosen $(\epsilon, \delta)$-concentrated functions $f_1, \dots, f_k: \{0, 1\}^n \to \R^d$ with randomness complexity
		\[
			n + O(k \log k + k \log d + k \log(1/\gamma)).
		\]
		The total running time of the steward is $\poly(n, k, d, \log(1/\epsilon), \log(1/\gamma))$.
	\end{prop}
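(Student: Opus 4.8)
The plan is to instantiate the Saks--Zhou paradigm: choose a \emph{single} query point $X \in \{0,1\}^n$ uniformly at random and reuse it in every round, but in round $i$ randomly \emph{perturb and round} the query response $f_i(X)$, so that the output $Y_i$ becomes, with high probability, a pseudodeterministic function of the history $Y_1, \dots, Y_{i-1}$; as discussed in \cref{sec:pseudodeterminism}, this is exactly what makes reusing $X$ safe. Concretely, fix $\ell = 4\epsilon d k/\gamma$ and $M = \lceil 2dk/\gamma\rceil$, set $s = \ell/M$, partition $\R$ into half-open intervals of length $\ell$, and let $\mathsf{Round}(w)$ be the midpoint of the interval containing $w$. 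The steward samples $X \in \{0,1\}^n$ once and, in each round $i$, samples a fresh $\rho_i$ uniformly from $\{0, 1, \dots, M-1\}$; upon receiving $f_i$ it makes its single query to obtain $W_i = f_i(X)$ and returns $Y_{ij} = \mathsf{Round}(W_{ij} + \rho_i s)$ for every $j \in [d]$. The randomness complexity is $n$ (for $X$) plus $k\lceil \log M\rceil = O(k\log k + k\log d + k\log(1/\gamma))$ (for the perturbations), and all arithmetic is polynomial-time. As usual we may assume $\mathsf{O}$ is deterministic and $\delta < 1/2$; if $\delta \geq 1/2$, then either $k = 1$, in which case returning $W_1$ directly has error $\epsilon \leq O(\epsilon d/\gamma)$ and failure probability $\delta$, or $k\delta \geq 1$ and the claim is vacuous.

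For the analysis, let $\mu(f)$ be the lexicographically least point at which $f$ is $(\epsilon,\delta)$-concentrated (as in the proof of \cref{lem:two-mediators}), and define the \emph{ideal transcript} $(\hat f_1, \hat Y_1, \dots, \hat f_k, \hat Y_k)$ by the recursion: $\hat f_i$ is the function $\mathsf{O}$ chooses on input $\hat Y_1, \dots, \hat Y_{i-1}$, and $\hat Y_{ij} = \mathsf{Round}(\mu(\hat f_i)_j + \rho_i s)$. The key point is that the ideal transcript depends only on $\rho_1, \dots, \rho_k$, not on $X$. Writing $E_i$ for the event that $f_j = \hat f_j$ and $Y_j = \hat Y_j$ for all $j \leq i$, I will show $\Pr[\neg E_k] \leq k\delta + \gamma$ via the union bound $\Pr[\neg E_k] \leq \sum_{i=1}^k \Pr[E_{i-1} \wedge \neg E_i]$ together with the per-round bound $\Pr[E_{i-1} \wedge \neg E_i] \leq \delta + \gamma/k$. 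To establish the latter, condition on $\rho_1, \dots, \rho_{i-1}$: this fixes the ideal transcript through $\hat f_i =: f$, turns $E_{i-1}$ into an event depending only on $X$, and keeps $\rho_i$ uniform and independent of $X$. On $E_{i-1}$ the owner necessarily picks $f_i = \hat f_i = f$, so $\neg E_i$ means $Y_i \neq \hat Y_i$. Splitting according to whether $\|W_i - \mu(f)\|_\infty \leq \epsilon$: the first case contributes at most $\Pr_X[\|f(X) - \mu(f)\|_\infty > \epsilon] \leq \delta$ by concentration; in the second case, $\mathsf{Round}(W_{ij} + \rho_i s) \neq \mathsf{Round}(\mu(f)_j + \rho_i s)$ for two points differing by at most $\epsilon$, which forces an interval boundary to lie within $\epsilon$ of $\mu(f)_j + \rho_i s$ for some $j$; since the $M$ values $\mu(f)_j + \rho s$ for $\rho \in \{0, \dots, M-1\}$ are equally spaced (spacing $s$) modulo $\ell$, this happens with probability at most $2\epsilon/\ell + 1/M$ for a fixed $j$, hence at most $d(2\epsilon/\ell + 1/M) \leq \gamma/k$ over all $j$, by the choice of $\ell$ and $M$.

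It remains to translate $E_k$ into accuracy. On $E_k$, for every $i$ we have $f_i = \hat f_i$ and $Y_{ij} = \mathsf{Round}(\mu(f_i)_j + \rho_i s)$, so $|Y_{ij} - \mu(f_i)_j| \leq \rho_i s + \ell/2 < 3\ell/2$; and since $\delta < 1/2$, every point $\mu_i$ where $\mathsf{O}$ claims $f_i$ is $(\epsilon,\delta)$-concentrated satisfies $\|\mu(f_i) - \mu_i\|_\infty \leq 2\epsilon$ (again as at the end of the proof of \cref{lem:two-mediators}). Hence whenever $E_k$ occurs we get $\max_i \|Y_i - \mu_i\|_\infty \leq 3\ell/2 + 2\epsilon = O(\epsilon d k/\gamma)$, and $E_k$ occurs with probability at least $1 - k\delta - \gamma$, which gives the claimed parameters. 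I expect the crux to be the second paragraph, specifically the bookkeeping of stochastic dependence: the reason reusing $X$ is legitimate is precisely that freezing $\rho_1, \dots, \rho_{i-1}$ before round $i$ decouples the history (and hence $X$, about which the history leaks information) from the fresh perturbation $\rho_i$, so that the concentration bound (over $X$) and the ``$\rho_i$ jumps over the $\epsilon$-neighborhood of a rounding boundary'' bound (over $\rho_i$) can be combined cleanly. The remaining steps are routine.
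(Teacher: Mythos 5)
Your proposal is correct and takes essentially the same approach as the paper's proof: reuse a single query point $X$ across all rounds, randomly perturb and round each response, and compare the resulting transcript against an ``ideal'' transcript that depends only on the perturbations $\rho_1, \dots, \rho_k$ (the paper packages this as a many-query steward $\mathsf{S}'$ computing $\hat\mu_i^{[\vec\Delta]}$, you as a directly-defined deterministic recursion). The only substantive differences are presentational: the paper bounds two \emph{global} bad events --- ``some $f_i^{[\vec\Delta]}(X)$ is far from $\hat\mu_i^{[\vec\Delta]}$'' ($\le k\delta$) and ``some shifted $\hat\mu_{ij}^{[\vec\Delta]}$ is $\epsilon$-near a boundary'' ($\le \gamma$) --- and then argues by induction that outside these events the real and ideal transcripts coincide, whereas you bound $\Pr[E_{i-1} \wedge \neg E_i] \le \delta + \gamma/k$ per round by conditioning on $\rho_1,\dots,\rho_{i-1}$ to decouple $X$ from $\rho_i$, then union bound over $i$; these are the same calculation sliced differently. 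Two cosmetic points: the paper uses shift granularity $\epsilon$ with $u$ a power of two (so $\Delta_i$ can be drawn from exactly $\log_2 u$ unbiased coins), while your $M=\lceil 2dk/\gamma\rceil$ need not be a power of two --- you should round $M$ up to a power of two so that ``sample $\rho_i$ uniformly from $\{0,\dots,M-1\}$'' is realizable with $\log_2 M$ coins; and your per-coordinate bound $2\epsilon/\ell+1/M$ relies on $Ms=\ell$ exactly so that the $M$ shifted points tile one full period modulo $\ell$, which your definition $s=\ell/M$ does guarantee (worth stating explicitly). Neither affects correctness or the asymptotics.
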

	
	\begin{proof}
		Let $u$ be the smallest power of two such that $u \geq 2kd/\gamma$. (The only reason we choose a power of two is so that we can cleanly draw a uniform random element of $[u]$ using $\log u$ random bits.) Partition $\R$ into half-open intervals of length $\ell = u \epsilon$. For $w \in \R$, let $\mathsf{Round}(w)$ be the midpoint of the interval containing $w$. Algorithm $\mathsf{S}$:
		\begin{enumerate}
			\item Pick $X \in \{0, 1\}^n$ uniformly at random \emph{once}.
			\item For $i = 1$ to $k$:
			\begin{enumerate}
				\item Obtain $W_i = f_i(X) \in \R^d$.
				\item Pick $\Delta_i \in [u]$ uniformly at random.
				\item Return $Y_i = (Y_{i1}, \dots, Y_{id})$, where $Y_{ij} = \mathsf{Round}(W_{ij} + \Delta_i \epsilon)$.
			\end{enumerate}
		\end{enumerate}
		Proof of correctness: Just for the analysis, define a (many-query) steward $\mathsf{S}'$ by the following algorithm:
		\begin{enumerate}
			\item For $i = 1$ to $k$:
			\begin{enumerate}
				\item Query $f_i$ at \emph{every} point in its domain, thereby learning the entire function.
				\item Compute a point $\hat{\mu}_i \in \R^d$ where $f_i$ is $(\epsilon, \delta)$-concentrated.
				\item Pick $\Delta_i \in [u]$ uniformly at random.
				\item Return $Y_i = (Y_{i1}, \dots, Y_{id})$, where $Y_{ij} = \mathsf{Round}(\hat{\mu}_{ij} + \Delta_i \epsilon)$.
			\end{enumerate}
		\end{enumerate}
		Now fix any deterministic owner $\mathsf{O}$. For a vector $\vec{\Delta} = (\Delta_1, \dots, \Delta_k) \in [u]^k$, let $f_1^{[\vec{\Delta}]}, \dots, f_k^{[\vec{\Delta}]}$ be the functions that $\mathsf{O}$ chooses in $\mathsf{O} \leftrightarrow \mathsf{S}'(\vec{\Delta})$, and let $\hat{\mu}_i^{[\vec{\Delta}]}$ be the point at which $f_i^{[\vec{\Delta}]}$ is concentrated that $\mathsf{S}'$ chooses in $\mathsf{O} \leftrightarrow \mathsf{S}'(\vec{\Delta})$. Observe that
		\begin{equation} \label{eqn:sz-bad-event-1}
			\Pr_{\substack{\vec{\Delta} \in [u]^k \\ X \in \{0, 1\}^n}}[\text{for some $i$}, \|f_i^{[\vec{\Delta}]}(X) - \hat{\mu}_i^{[\vec{\Delta}]}\|_{\infty} > \epsilon] \leq k\delta.
		\end{equation}
		(Imagine picking $\vec{\Delta}$ first, and then apply the union bound over the $k$ different values of $i$.) Next, observe that
		\begin{equation} \label{eqn:sz-bad-event-2}
			\Pr_{\substack{\vec{\Delta} \in [u]^k \\ X \in \{0, 1\}^n}}[\text{for some $i, j$}, [\hat{\mu}_{ij}^{[\vec{\Delta}]} + (\Delta_i - 1) \epsilon, \hat{\mu}_{ij}^{[\vec{\Delta}]} + (\Delta_i + 1) \epsilon] \text{ is not entirely contained in one interval}] \leq \gamma.
		\end{equation}
		(Indeed, for each $i, j$, the probability is just $2/u$, so by the union bound, the probability is at most $2kd/u \leq \gamma$.) Now, by the union bound, assume from now on that $\vec{\Delta}, X$ are such that neither the event of \cref{eqn:sz-bad-event-1} nor the event of \cref{eqn:sz-bad-event-2} takes place. Assume without loss of generality that $\delta < 1/2$. We will show that in $\mathsf{O} \leftrightarrow \mathsf{S}(X, \vec{\Delta})$, for every $i$, $\|Y_i - \mu_i\|_{\infty} \leq 1.5 \ell + 3\epsilon$.
		
		We first show by induction on $i$ that in $\mathsf{O} \leftrightarrow \mathsf{S}(X, \vec{\Delta})$, every $f_i$ is precisely $f_i^{[\vec{\Delta}]}$. In the base case $i = 1$, this is trivial. For the inductive step, since the bad event of \cref{eqn:sz-bad-event-1} did not occur, we know that $f_i(X)$ is $\hat{\mu}_i^{[\vec{\Delta}]} \pm \epsilon$. Therefore, since the bad event of \cref{eqn:sz-bad-event-2} did not occur, for every $j$, $\mathsf{Round}(f_{ij}(X) + \Delta_i \epsilon) = \mathsf{Round}(\hat{\mu}_{ij}^{[\vec{\Delta}]} + \Delta_i \epsilon)$. Therefore, the value $Y_i$ in $\mathsf{O} \leftrightarrow \mathsf{S}(X, \vec{\Delta})$ is the same as the value $Y_i$ in $\mathsf{O} \leftrightarrow \mathsf{S}'(\vec{\Delta})$, and hence $\mathsf{O}$ chooses the same $f_{i + 1}$ in both cases. This completes the induction.
		
		Again using the fact that the bad event of \cref{eqn:sz-bad-event-1} did not occur, this immediately implies that in $\mathsf{O} \leftrightarrow \mathsf{S}(X, \vec{\Delta})$, every $f_i(X)$ is within $\ell_{\infty}$ distance $\epsilon$ of a point where $f_i$ is $(\epsilon, \delta)$-concentrated. Since $\delta < 1/2$, this implies that every $f_i(X)$ is within $\ell_{\infty}$ distance $3\epsilon$ of $\mu_i$. Shifting by $\Delta_i \epsilon$ and rounding introduce at most $1.5 \ell$ additional error, showing that $\|Y_i - \mu_i\| \leq 1.5 \ell + 3\epsilon$ as claimed. To complete the proof of correctness, note that $1.5 \ell + 3\epsilon \leq O(kd\epsilon/\gamma)$.
		
		The randomness complexity of this steward is $n$ bits (for $X$) plus the randomness needed for $\vec{\Delta}$, for a total randomness complexity of
		\[
			n + k \log u \leq n + O(k \log k + k \log d + k \log(1/\gamma)).
		\]
		The steward clearly runs in $\poly(n, k, d, \log(1/\epsilon), \log(1/\gamma))$ time.
	\end{proof}

	\section{The Impagliazzo-Zuckerman steward} \label{apx:iz}
	
	\subsection{Description and analysis of the steward}
	As discussed in \cref{sec:iz}, this steward is formed by combining the Impagliazzo-Zuckerman generator \cite{iz89, imp92} with straightforward rounding. None of our results use this steward.
	
	\begin{prop} \label{prop:iz}
		For any $n, k, d \in \N$ and $\epsilon, \delta > 0$, there exists a one-query $(O(\epsilon), k \delta + k \cdot 2^{-n^{\Omega(1)}})$-steward for $k$ adaptively chosen $(\epsilon, \delta)$-concentrated functions $f_1, \dots, f_k: \{0, 1\}^n \to \R^d$ with randomness complexity $O(n^6 + kd)$. The total running time of the steward is $\poly(n, k, d, \log(1/\epsilon))$.
	\end{prop}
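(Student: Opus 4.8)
The plan is to combine coordinate-wise rounding with the pseudorandom generator of Impagliazzo and Zuckerman \cite{iz89, imp92}, used as a black box. First I would discard a trivial case: if $\delta \geq 1/k$ then $k\delta \geq 1$, so the steward that ignores every $f_i$ and outputs $\vec{0} \in \R^d$ already meets the (now vacuous) failure-probability requirement; hence assume $\delta < 1/k$. Let $\mathsf{Round}(w)$ denote the multiple of $2\epsilon$ nearest to $w \in \R$, extended coordinate-wise to vectors in $\R^d$. The steward draws a seed, runs the IZ generator to obtain query points $X_1, \dots, X_k \in \{0,1\}^n$ for its $k$ adaptive rounds, and in round $i$ queries $f_i$ at $X_i$ and returns $Y_i = \mathsf{Round}(f_i(X_i))$. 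This is a one-query steward, with running time clearly $\poly(n, k, d, \log(1/\epsilon))$. For the error: rounding moves each coordinate by at most $\epsilon$, so whenever $\|f_i(X_i) - \mu_i\|_\infty \leq \epsilon$ -- an event that fails with probability at most $\delta$ over a uniform $X_i$, by concentration -- we get $\|Y_i - \mu_i\|_\infty \leq 2\epsilon = O(\epsilon)$.

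The quantity that controls the seed length is the Shannon entropy of $\mathsf{Round}(f(X))$ for a uniform $X \in \{0,1\}^n$, where $f$ is an arbitrary $(\epsilon, \delta)$-concentrated function. Conditioned on $\|f(X) - \mu\|_\infty \leq \epsilon$, each coordinate of $f(X)$ lies in an interval of length $2\epsilon$, which straddles at most one rounding boundary, so that coordinate of $\mathsf{Round}(f(X))$ takes at most two values; hence conditioned on this good event, $\mathsf{Round}(f(X))$ lies in a fixed set of at most $2^d$ points. Since $\mathsf{Round}(f(X))$ is a function of $X$, its unconditional Shannon entropy is at most $1 + d + \delta n$ (one bit for the indicator of the good event, $d$ for the good case, and at most $n$ in the bad case, which has probability at most $\delta$). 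Using $\delta < 1/k$, this is at most $h := 1 + d + n/k = O(d + n/k)$.

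I would then feed this bound into the Impagliazzo--Zuckerman generator for $k$ adaptively chosen executions of an $n$-coin subroutine whose output has Shannon entropy at most $h$ \cite{iz89, imp92}. Its seed length is $O(n^6) + O(kh) = O(n^6 + kd + n) = O(n^6 + kd)$, and the distribution of the transcript $(Y_1, \dots, Y_k)$ obtained from the seed is within statistical distance $k \cdot 2^{-n^{\Omega(1)}}$ of the transcript obtained with truly uniform query points. With truly uniform query points, a union bound over the $k$ bad events above bounds the overall failure probability by $k\delta$; transferring to the pseudorandom setting costs an additional $k \cdot 2^{-n^{\Omega(1)}}$, giving the claimed failure probability $k\delta + k \cdot 2^{-n^{\Omega(1)}}$.

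The main obstacle -- and the reason the rounding step is essential rather than cosmetic -- is that an extractor-based recycling argument like the one inside the IZ generator really needs control of the min-entropy loss in each round, not merely the average (Shannon) loss. Rounding is what guarantees that, apart from a probability-$\delta$ event paid for explicitly in the union bound, the round-$i$ output lives in a set of only $2^{O(d)}$ points; this is precisely the structural property the IZ machinery uses to convert the entropy bound into a min-entropy-loss bound (via a Markov-type split of each output into a ``typical'' part and a low-probability ``high-surprise'' part). So the only genuine care needed is to invoke the IZ theorem in a form robust to this small bad event; the rest is bookkeeping.
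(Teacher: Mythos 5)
Your proof is correct and takes essentially the same approach as the paper: round each coordinate to a grid of spacing $2\epsilon$, bound the Shannon entropy of the rounded output by roughly $d + \delta n + 1$ via a good-set/bad-set split, feed that bound into the Impagliazzo--Zuckerman generator of \cite{imp92} as a black box, and union-bound. Your closing remarks about min-entropy loss versus Shannon entropy are commentary on the internals of the IZ theorem rather than an extra step the proof requires, but they do not affect correctness.
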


	\begin{proof}
		Let $A$ be the ``bit provider'' of \cite[Theorem 2.6.8]{imp92}. Partition $\R$ into intervals of length $2\epsilon$. The steward:
		\begin{enumerate}
			\item For $i = 1$ to $k$:
			\begin{enumerate}
				\item Give $A$ the entropy bound $\delta n + d + 1$. It outputs some $X_i \in \{0, 1\}^n$.
				\item By querying, obtain the vector $W_i \stackrel{\text{def}}{=} f_i(X_i) \in \R^d$.
				\item Output $Y_i \in \R^d$, where $Y_{ij}$ is the midpoint of the interval containing $W_{ij}$.
			\end{enumerate}
		\end{enumerate}
		The randomness complexity of this steward is just the randomness complexity of $A$. From the proof of \cite[Theorem 2.6.8]{imp92}, this randomness complexity can be seen to be $O(n^6 + k/n + k(\delta n + d + 1))$. We may assume without loss of generality that $k \delta < 1$, because otherwise the proposition is trivial. Thus, the randomness complexity is $O(n^6 + kd)$ as claimed.
		
		Now, for correctness, suppose $f: \{0, 1\}^n \to \R^d$ is $(\epsilon, \delta)$-concentrated at $\mu \in \R^d$. Let $g(X)_j$ be the midpoint of the interval containing $f(X)_j$. Let $H(\cdot)$ denote Shannon entropy. To bound $H(g(U_n))$, let $S = \{x : \|f(x) - \mu\|_{\infty} \leq \epsilon\}$. Let $X \sim U_n$, and let $E$ indicate whether $X \in S$. Then by the chain rule,
		\begin{align*}
			H(g(X)) &= H(g(X) \mid E) + H(E) \\
			&\leq H(g(X) \mid X \in S) \cdot \Pr[X \in S] + H(g(X) \mid X \not \in S) \cdot \Pr[X \not \in S] + 1.
		\end{align*}
		Observe that $|g(S)| \leq 2^d$, because for each $j$, $[\mu_{ij} - \epsilon, \mu_{ij} + \epsilon]$ only intersects at most $2$ intervals. Therefore, $H(g(X) \mid X \in S) \leq d$. Furthermore, $H(X \mid X \not \in S) \leq n$ and applying a (deterministic) function can only reduce entropy, so $H(g(X) \mid X \not \in S) \leq n$. Therefore,
		\begin{align*}
			H(g(X)) &\leq d \cdot \Pr[X \in S] + n \cdot \delta + 1 \\
			&\leq \delta n + d + 1.
		\end{align*}
		By the guarantee of $A$, it follows that for any owner $\mathsf{O}$, the sequence of responses $(Y_1, Y_2, \dots, Y_k)$ is $(k \cdot 2^{-n^{\Omega(1)}})$-close (in total variation distance) from how it would be distributed if $A$ chose each $X_i$ independently and uniformly at random. In this ``fresh randomness'' case, by the union bound, with probability $1 - k \delta$, for every $i$, $\|W_i - \mu_i\|_{\infty} \leq \epsilon$. Rounding introduces at most $\epsilon$ additional error, so by the triangle inequality, in this case, for every $i$, $\|Y_i - \mu_i\|_{\infty} \leq 2\epsilon$.
	\end{proof}

	\subsection{Comparison to our stewards} \label{apx:iz-comparison}
	As noted in \cref{sec:iz}, our main steward (\cref{thm:main}) has much better randomness complexity than the Impagliazzo-Zuckerman steward. Furthermore, the Impagliazzo-Zuckerman steward has failure probability $\delta' = k \delta + k \cdot 2^{-n^{\Omega(1)}}$, which becomes trivial when $k$ is $O(2^{n^c})$ for a small constant $c > 0$. In contrast, our main steward has failure probability $\delta' = k\delta + \gamma$ for an arbitrary $\gamma > 0$; our steward is nontrivial for $\gamma$ as small as $2^{-\Omega(nk / \log k)}$ and $k$ as large as $\Omega(1/\delta)$, which could be as large as $2^{\Omega(n)}$.
	
	We also noted in \cref{sec:iz} that our main actually has worse accuracy than the Impagliazzo-Zuckerman steward. But one of our variant stewards, namely that of \cref{thm:prg-steward}, is strictly superior to the Impagliazzo-Zuckerman steward, as we now demonstrate:
	\begin{proof}[Proof of \cref{prop:iz} from \cref{thm:prg-steward}]
		Let $\mathsf{S}$ be the steward of \cref{thm:prg-steward} with $d_0 = 1$ and $\gamma = k \cdot 2^{-n^{1/2}}$. Then $\mathsf{S}$ is a one-query $(O(\epsilon), k\delta + k \cdot 2^{-n^{1/2}})$-steward. We may assume without loss of generality that $k \leq 2^{n^{1/2}}$, because otherwise \cref{prop:iz} is trivial (the failure probability bound is more than $1$). Therefore, the randomness complexity of $\mathsf{S}$ is bounded by
		\[
			n + O(kd + \log k \log(1/\gamma)) \leq O(n + kd). \qedhere
		\]
	\end{proof}
	
	\section{Nonconstructive PRG for block decision trees} \label{apx:prg}
	
	For completeness, we record the details of the standard nonconstructive argument that there exists a PRG for block decision trees with a small seed length.
	
	\begin{lem} \label{lem:nonconstructive-prg}
		Suppose $\mathcal{C}$ is a class of Boolean functions $f: \{0, 1\}^n \to \{0, 1\}$ such that a function in $\mathcal{C}$ can be specified using $t$ bits, i.e. $|\mathcal{C}| \leq 2^t$. Then for any $\gamma$, there exists a $\gamma$-PRG $\mathsf{Gen}: \{0, 1\}^s \to \{0, 1\}^n$ for $\mathcal{C}$ with seed length
		\[
			s \leq \log t + 2 \log(1/\gamma) + O(1).
		\]
	\end{lem}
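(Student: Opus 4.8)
The plan is to prove this by the probabilistic method, exactly as one proves the existence of small hitting/discrepancy sets for bounded function families. First I would fix the seed length to be an integer $s = \lceil \log t + 2\log(1/\gamma) + c\rceil$ for a suitable absolute constant $c$, and define a candidate generator $\mathsf{Gen}\colon \{0,1\}^s \to \{0,1\}^n$ by choosing its $2^s$ output values $\mathsf{Gen}(z)$, $z \in \{0,1\}^s$, independently and uniformly at random from $\{0,1\}^n$. The point is then to show that a uniformly random such $\mathsf{Gen}$ is a $\gamma$-PRG for $\mathcal{C}$ with positive probability, which yields the existence of a fixed good one.

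Next I would analyze a single fixed $f \in \mathcal{C}$. For such an $f$, the quantity $\Pr_{z \sim U_s}[f(\mathsf{Gen}(z)) = 1]$ is precisely the empirical mean of the $2^s$ independent Bernoulli random variables $f(\mathsf{Gen}(z))$, each of which has expectation $\mu_f \stackrel{\text{def}}{=} \Pr_{x \sim U_n}[f(x) = 1]$ over the choice of $\mathsf{Gen}$. By a standard Chernoff/Hoeffding tail bound, this empirical mean differs from $\mu_f$ by more than $\gamma$ with probability at most $2\exp(-2\gamma^2 2^s)$ over the random choice of $\mathsf{Gen}$; equivalently, $f$ is ``fooled'' to within $\gamma$ except with that small probability. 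Then I would apply a union bound over the at most $2^t$ functions in $\mathcal{C}$: the probability that the random $\mathsf{Gen}$ fails to be a $\gamma$-PRG for at least one $f \in \mathcal{C}$ is at most $2^t \cdot 2\exp(-2\gamma^2 2^s)$. By our choice of $s$, we have $2^s \geq \Theta((t + \log(1/\gamma))/\gamma^2)$, which makes this failure probability strictly less than $1$ (a short calculation fixes the constant $c$: it suffices that $2\gamma^2 2^s > (t+2)\ln 2$). Hence some fixed $\mathsf{Gen}$ simultaneously $\gamma$-fools every $f \in \mathcal{C}$, which is exactly the claimed bound $s \leq \log t + 2\log(1/\gamma) + O(1)$.

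There is essentially no hard step here; the only points requiring a little care are (i) using a tail bound with the right constant in the exponent so that the $2\log(1/\gamma)$ (rather than $\log(1/\gamma)$) dependence comes out cleanly — this is why the Chernoff bound, whose deviation probability scales like $\exp(-\gamma^2 2^s)$, forces $2^s \approx \gamma^{-2}$ and hence the factor of $2$ in front of $\log(1/\gamma)$ — and (ii) rounding $s$ up to an integer, which only costs an additive constant and is absorbed into the $O(1)$ term. I would remark in passing that this argument is not tailored to block decision trees in any way; it applies verbatim to any family of Boolean tests of bounded description length, and for the application in \cref{thm:computationally-inefficient} one instantiates it with $\mathcal{C}$ the family of indicator tests ``$T$ reaches leaf $v$'' ranging over $(k,n,\Sigma)$ block decision trees $T$ and leaves $v$, rescaling $\gamma$ by the number of leaves to convert the per-test guarantee into a total-variation guarantee.
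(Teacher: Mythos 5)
Your proof is correct and is essentially the same argument as the paper's: choose the $2^s$ output values of $\mathsf{Gen}$ independently and uniformly, apply a Chernoff/Hoeffding bound per function, and union-bound over the at most $2^t$ functions in $\mathcal{C}$, so that $s = \log t + 2\log(1/\gamma) + O(1)$ suffices for positive success probability. One small remark on your closing aside: the paper does not rescale $\gamma$ by the number of leaves when applying this lemma to block decision trees; instead it takes $\mathcal{C}$ to consist of all compositions $g \circ T$ with $g$ an arbitrary Boolean function on the leaves, which by the operational characterization of total variation distance gives a distance-$\gamma$ guarantee directly without the rescaling, and avoids the multiplicative $|\Sigma|^k$ loss (which would otherwise inflate the $k\log|\Sigma|$ term in the seed length by a constant factor).
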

	
	\begin{proof}
		Consider picking $\mathsf{Gen}$ uniformly at random from the set of all functions $\{0, 1\}^s \to \{0, 1\}^n$. Fix $C \in \mathcal{C}$, and let $\mu(C) = \Pr_x[C(x) = 1]$. Then for each fixed seed $x \in \{0, 1\}^s$, the probability (over $\mathsf{Gen}$) that $C(\mathsf{Gen}(x)) = 1$ is precisely $\mu(C)$. Therefore, the expected fraction of $x$ such that $C(\mathsf{Gen}(x)) = 1$ is precisely $\mu(C)$, and by Hoeffding's inequality,
		\[
			\Pr_{\mathsf{Gen}}\left[\left|\frac{\#\{x: C(\mathsf{Gen}(x)) = 1\}}{2^s} - \mu(C)\right| > \gamma\right] \leq 2^{-\Omega(\gamma^2 2^s)}.
		\]
		Therefore, by the union bound, the probability that the above bad event holds for \emph{any} $C$ is at most $2^{t - \Omega(\gamma^2 2^s)}$. If we choose $s$ large enough, this probability will be less than $1$, showing that there exists a $\mathsf{Gen}$ that works for all $C$. How large do we need to choose $s$? There is some constant $c$ such that it is sufficient to have $c \gamma^2 2^s > t$. Taking logarithms completes the proof.
	\end{proof}
	
	\begin{prop}
		For any $k, n \in \N$, any finite alphabet $\Sigma$, and any $\gamma > 0$, there exists a $\gamma$-PRG $\mathsf{Gen}: \{0, 1\}^s \to \{0, 1\}^{nk}$ for $(k, n, \Sigma)$f block decision trees with seed length
		\[
			s \leq n + k \log |\Sigma| + 2 \log(1/\gamma) + O(1).
		\]
	\end{prop}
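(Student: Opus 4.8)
The plan is to deduce the proposition from the generic nonconstructive bound of \cref{lem:nonconstructive-prg}, applied to a suitable class $\mathcal{C}$ of Boolean functions on $\{0,1\}^{nk}$. The guiding observation is that a $(k,n,\Sigma)$ block decision tree $T$, evaluated on a full-length input, always lands on a leaf, so the total variation distance between $T(\mathsf{Gen}(U_s))$ and $T(U_{nk})$ equals $\max_S \bigl| \Pr[T(\mathsf{Gen}(U_s)) \in S] - \Pr[T(U_{nk}) \in S] \bigr|$, where $S$ ranges over all sets of leaves of $T$. Consequently, I would let $\mathcal{C}$ consist of every Boolean function of the form $X \mapsto \mathbf{1}[T(X) \in S]$, over all $(k,n,\Sigma)$ block decision trees $T$ and all sets $S$ of leaves of $T$; a $\gamma$-PRG for $\mathcal{C}$ is then automatically a $\gamma$-PRG for $(k,n,\Sigma)$ block decision trees, since for each $T$ and each $S$ the $\gamma$-PRG guarantee for the test $\mathbf{1}[T(\cdot)\in S]$ says exactly $|\Pr[T(\mathsf{Gen}(U_s))\in S] - \Pr[T(U_{nk})\in S]| \le \gamma$.

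The next step is to count how many bits $t$ suffice to specify a member of $\mathcal{C}$. (If $|\Sigma| = 1$ the proposition is trivial, as every block decision tree is then constant; so assume $|\Sigma| \ge 2$.) Such a function is determined by $T$ together with $S$. The tree $T$ has at most $2|\Sigma|^{k-1}$ internal nodes, since $\frac{|\Sigma|^k - 1}{|\Sigma| - 1} \le \frac{|\Sigma|}{|\Sigma|-1}\,|\Sigma|^{k-1} \le 2|\Sigma|^{k-1}$, and each carries a function $\{0,1\}^n \to \Sigma$ describable in at most $2^n\lceil\log|\Sigma|\rceil \le 2^n|\Sigma|$ bits; the set $S$ is one of the $2^{|\Sigma|^k}$ subsets of leaves, describable in $|\Sigma|^k$ bits. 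Hence I may take
\[
    t \;=\; 2|\Sigma|^{k-1}\cdot 2^n|\Sigma| \;+\; |\Sigma|^k \;=\; 2^{n+1}|\Sigma|^k + |\Sigma|^k \;\le\; 2^{n+2}|\Sigma|^k,
\]
so that $\log_2 t \le n + k\log_2|\Sigma| + 2$.

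Finally, I would invoke \cref{lem:nonconstructive-prg} with its input length taken to be $nk$ (the lemma's seed-length guarantee does not depend on the input length): it yields a $\gamma$-PRG $\mathsf{Gen}\colon \{0,1\}^s \to \{0,1\}^{nk}$ for $\mathcal{C}$ — and hence for $(k,n,\Sigma)$ block decision trees — with
\[
    s \;\le\; \log_2 t + 2\log_2(1/\gamma) + O(1) \;\le\; n + k\log_2|\Sigma| + 2\log_2(1/\gamma) + O(1),
\]
which is the claimed bound. I do not expect a genuine obstacle here; the one point to get right is the reduction in the first paragraph — one must fool all \emph{subset} tests $\mathbf{1}[T(\cdot)\in S]$, not merely the pointwise tests $\mathbf{1}[T(\cdot)=v]$, in order to control total variation distance rather than $\ell_\infty$ distance — and then check that paying for $S$ only adds $|\Sigma|^k$ to the description length $t$, which is harmless since $\log_2(|\Sigma|^k) = k\log_2|\Sigma|$ is already dominated by $\log_2 t$.
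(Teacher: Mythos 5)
Your proposal is correct and is essentially the paper's own proof: you take the same class $\mathcal{C}$ (specifying a Boolean function $g$ on the leaves is the same as specifying the subset $S = g^{-1}(1)$), do the same description-length count to get $t \le 2^{n+2}|\Sigma|^k$, and invoke the same nonconstructive lemma. The only cosmetic difference is that you spell out the total-variation-distance reduction (fool all subset tests $\mathbf{1}[T(\cdot)\in S]$) where the paper just gestures at ``the operational characterization of total variation distance,'' and you handle $|\Sigma|=1$ explicitly; both are fine.
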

	
	\begin{proof}
		Let $\mathcal{C}$ be the class of all Boolean functions $f: \{0, 1\}^{nk} \to \{0, 1\}$ of the form $f(x) = g(T(x))$, where $T$ is a $(k, n, \Sigma)$ block decision tree. To specify a function $f \in \mathcal{C}$, we need to specify (1) a bit for each leaf of $T$ and (2) a function $v: \{0, 1\}^n \to \Sigma$ for each internal node of $T$. In total, this number of bits $t$ is
		\begin{align*}
			t &= |\Sigma|^k + 2^n \lceil \log |\Sigma| \rceil \cdot \sum_{i = 0}^{k - 1} |\Sigma|^i \\
			&\leq |\Sigma|^k + 2^{n + 1} \log |\Sigma| \cdot \frac{|\Sigma|^k - 1}{|\Sigma| - 1} \\
			&\leq |\Sigma|^k + 2^{n + 1} |\Sigma|^k \\
			&\leq 2^{n + 2} |\Sigma|^k.
		\end{align*}
		By \cref{lem:nonconstructive-prg}, this implies that there is a $\gamma$-PRG $\mathsf{Gen}: \{0, 1\}^s \to \{0, 1\}^{nk}$ for $\mathcal{C}$ with seed length $n + k \log |\Sigma| + 2 \log(1/\gamma) + O(1)$. The ``operational'' characterization of total variation distance implies that $\mathsf{Gen}$ is also a $\gamma$-PRG for $(k, n, \Sigma)$ block decision trees as defined in \cref{sec:prg}.
	\end{proof}
\end{document}